\theoremstyle{definition} 
\newcommand\independent{\protect\mathpalette{\protect\independenT}{\perp}}
\def\independenT#1#2{\mathrel{\rlap{$#1#2$}\mkern2mu{#1#2}}}
\newtheorem{lemma}{Lemma}
\newtheorem{theorem}{Theorem}
\newtheorem{proposition}{Proposition}
\newtheorem{definition}{Definition}
\newtheorem{assumption}{Assumption}
\newtheorem*{example*}{Example}
\newtheorem{dgm}{Data generating mechanism}
\DeclareMathOperator*{\expit}{expit}
\title[]{Quantification of vaccine waning as a challenge effect}
\author{Matias Janvin$^1$, Mats J. Stensrud$^2$}
\address{$^1$Oslo Centre for Biostatistics and Epidemiology, University of Oslo, Norway}
\address{$^2$Institute of Mathematics, École Polytechnique Fédérale de Lausanne, Switzerland}
\date{\today}
\DeclareMathOperator{\var}{var}
\DeclareMathOperator{\cov}{cov}
\begin{document}

\begin{abstract}
    Knowing whether vaccine protection wanes over time is important for health policy and drug development. However, quantifying waning effects is difficult. A simple contrast of vaccine efficacy at two different times compares different populations of individuals: those who were uninfected at the first time versus those who remain uninfected until the second time. Thus, the contrast of vaccine efficacy at early and late times can not be interpreted as a causal effect.   
    We propose to quantify vaccine waning using the challenge effect, which is a contrast of outcomes under controlled exposures to the infectious agent following vaccination. 
    We identify sharp bounds on the challenge effect under non-parametric assumptions that are broadly applicable in vaccine trials using routinely collected data. We demonstrate that the challenge effect can differ substantially from the conventional vaccine efficacy due to depletion of susceptible individuals from the risk set over time. 
    Finally, we apply the methods to derive bounds on the waning of the BNT162b2 COVID-19 vaccine using data from a placebo-controlled randomized trial. 
    Our estimates of the challenge effect suggest waning protection after 2 months beyond administration of the second vaccine dose. 
\end{abstract}
\maketitle
\smallskip
\noindent \textbf{Keywords.} Causal inference, Challenge trials, Randomized trials, Sharp bounds

\clearpage

\section{Introduction}

There are two prevailing approaches for quantifying vaccine waning. One approach is to use immunological assays to measure antibody levels after vaccination, and then use these measurements as a surrogate variable to infer the degree of vaccine protection \citep{levin_waning_2021}. However,  measurement of antibodies can fail to detect immunity, e.g.\ due to resident memory cells, and therefore can be insufficient to fully characterize immunity from prior vaccination or infection \citep{bergwerk_covid-19_2021,khoury_neutralizing_2021,rubin_audio_2022}.

A second approach is to contrast interval-specific cumulative incidences (CI) of infectious outcomes across vaccine and placebo recipients in randomized controlled trials (RCTs) by computing the vaccine efficacy (VE), often defined as one minus the ratio of cumulative incidences during a given time interval. Waning is quantified from direct observations of infection related events, rather than immunological surrogate markers. In this context, it is conventional to define vaccine waning as the decline over time of the VE \citep{halloran_design_1999, halloran_study_1997,halloran_design_2012,follmann_assessing_2020,follmann_deferred-vaccination_2021,follmann_estimation_2022,fintzi_assessing_2021,lin_evaluating_2021,tsiatis_estimating_2022}.

Estimands that quantify how the accrual of infectious outcomes changes over time are important when deciding booster vaccination regimes, see e.g.\ \citet{goldberg_waning_2021}. Similarly, empirical evidence of vaccine waning is also important to decide when to schedule seasonal vaccines; for example, when influenza vaccine protection wanes, these vaccines should be administered close to the time of the influenza wave \citep{ray_intraseason_2019}. Making such decisions based on conventional VE estimands is problematic because the VE at two different times compares different populations of individuals: those who were uninfected at the first time versus those who remain uninfected until the second time. Thus, the VE could decline over time only due to a depletion of susceptible individuals \citep{lipsitch_depletion--susceptibles_2019,ray_depletion--susceptibles_2020,halloran_design_2012,kanaan_estimation_2002,hudgens_endpoints_2004}.  

As stated by \citet{halloran_study_1997}, ``[an] open challenge is that of distinguishing among the possible causes of time-varying [VE] estimates.'' \citet{smith_assessment_1984} described two models of stochastic individual risk illustrating distinct mechanisms by which VE can decline over time, later known as the ``leaky'' versus ``all-or-nothing'' models \citep{halloran_interpretation_1992}, which were generalized to the  ``selection model'' and ``deterioration model'', respectively \citep{kanaan_estimation_2002}. These models parameterize individual risk of infection by introducing an unmeasured variable encoding the state of an individual's vaccine response, but rely on strong parametric assumptions that the investigators may be unwilling to adopt.

In this work, we propose to formally define waning in a causal (potential outcomes) framework as a ``challenge effect.'' This effect is defined with respect to interventions on both vaccination and exposure to the infectious agent, which in principle can be realized in a future experiment.  An interventionist definition of vaccine waning is desirable because it is closely aligned with health policy decisions \citep{robins_interventionist_2021,thomas_s_richardson_single_2013} and establishes a language for articulating testable claims about vaccine waning.  Furthermore, the challenge effect can guide development of new vaccines, say, to achieve a longer durability of protection. 

The challenge effect can, in principle, be identified by executing a challenge trial where the exposure to the infectious agent is controlled by the trialists. However, conducting such challenge trials is often unethical and infeasible \citep{hausman_challenge_2021}, in particular in vulnerable subgroups for which we may be most interested in quantifying vaccine protection. Thus, one of our main contributions is to describe assumptions that partially identify the challenge effect under commonly arising data structures, such as conventional randomized placebo controlled vaccine trials, where individuals are exposed to the infectious agent through their community interactions. The identification results do not require us to measure community exposure status, which is often difficult to ascertain and therefore often not recorded in trial data.

\subsection{Motivating example: COVID-19 vaccines\label{sec:motivating example}}
The safety and efficacy of the vaccine BNT162b2  against COVID-19 was tested in an RCT that assigned 22,085 individuals to receive the vaccine and 22,080 individuals to receive placebo. The trial recorded infection times and adverse reactions after vaccination. An overall vaccine efficacy of $0.913 \ (95\% \text{ CI: } 0.890 - 0.932)$ was reported, computed as one minus the incidence rate ratio of laboratory confirmed COVID-19 infection at 6 months of follow-up in individuals with no previous history of COVID-19. However, the interval-specific vaccine efficacy was as high as $0.917 \ (0.796 , 0.974)$ in the time period starting 11 days after receipt of first dose up to receipt of second dose, and later fell to $0.837 \ (0.747 , 0.899)$ after 4 months past the receipt of the second dose. One possible explanation for the difference in estimates between these two time periods is that the vaccine protection decreased (waned) over time. However, the difference might also be explained by a depletion of individuals who were susceptible to infection during time interval 1; more susceptible individuals were depleted in the placebo group compared to the vaccine group, which could have reduced the hazard of infection in the placebo group during interval 2 and thereby led to a smaller VE at later times. This observation prompts a question that we address in this work: does the protection of BNT162b2 wane over time, and if so, by how much?

\section{Observed data structure\label{sec:observed data}}
Consider a study where individuals are randomly assigned to treatment arm $A\in\{0,1\}$, such that $A=1$ denotes vaccine and $A=0$ denotes placebo. Suppose that individuals are followed up over two time intervals $k\in\{1,2\}$, where the endpoint of interval 1 coincides with the beginning of interval 2. 
In Appendix~E, we consider extensions to $K \geq 2$ time intervals and losses to follow-up. 
Let $Y_k\in\{0,1\}$ indicate whether the outcome of interest has occurred by the end of interval $k$, e.g.\ COVID-19 infection confirmed by nucleic acid amplification test in the COVID-19 example. Then, $\Delta Y_k=Y_{k}-Y_{k-1}$ is an indicator that the outcome occurred during interval $k$, and we define $Y_0=0$. Finally, let $L$ denote a vector of baseline covariates. We assume that the data are generated under the Finest Fully Randomized Causally Interpretable Structured Tree Graph (FFRCISTG) model \citep{thomas_s_richardson_single_2013,robins_alternative_2011,robins_new_1986}, which generalizes the perhaps more famous Non-Parametric Structural Equation Model with Independent Errors  (NPSEM-IE).\footnote{Based on the FFRCISTG  model, we let causal DAGs encode single world independencies between the counterfactual variables. In particular, the FFRCISTG model includes the NPSEM-IE as a strict submodel \citep{thomas_s_richardson_single_2013,robins_new_1986,pearl_causality_2009}. Because all estimands and identification assumptions in this manuscript are single world, it would also be sufficient, but not necessary, to assume that data are generated from an NPSEM-IE model \citep{pearl_causality_2009}.}  Similar to most vaccine trials \citep{tsiatis_estimating_2022,halloran_estimability_1996}, we will assume that there is no interference between individuals, because they are drawn from a larger study population and therefore infectious contacts between the trial participants are negligible. 
In Appendix~A, we clarify that this challenge effect is also practically relevant for a (plausible) target  population \textit{with interference}. 
A causal directed acyclic graph (DAG) illustrating the observed data structure is presented in Figure~\ref{fig: observed data}, and a dictionary of notation is given in Table~\ref{tab:notation}.
\begin{table}[htbp]
  \centering
  \caption{Summary of notation. Interventions for counterfactual quantities are denoted by superscripts.}
    \resizebox{\linewidth}{!}{
    \begin{tabular}{lp{11cm}}
         \hline Symbol & Definition \\\hline
    $A$ & Vaccine ($A=0$) versus control ($A=1$)  \\
    $Y_k$ & Indicator that the infectious outcome has occurred  by the end of time interval $k$, $Y_{k}\in\{0,1\}$. We define $Y_0=0$  \\
    $\Delta Y_k=Y_k-Y_{k-1}$ & Indicator that the infectious outcome has occurred during time interval $k$, $\Delta Y_{k}\in\{0,1\}$\\
    $E_k$ & Indicator of exposure to the infectious agent during time interval $k$, $E_k\in\{0,1\}$ \\
    $L$ & Vector of baseline covariates \\
    $U_{EY}$ & Unmeasured common cause of $E_k$ and $\Delta Y_{k^\prime}$ for some $k,k^\prime\in\{1,2\}$ \\
    $U_{E}$ & Unmeasured common cause of $E_1$ and $E_2$  \\
    $U_{Y}$ & Unmeasured common cause of $\Delta Y_1$ and $\Delta Y_2$  \\
    $E_k^a,\Delta Y_k^a$ & Exposure status and outcome indicator during time interval $k$ under assignment to vaccination level $A=a$  \\
    $E_1^{a,e_1=1},\Delta Y_1^{a,e_1=1}$ & Exposure status and outcome indicator under joint assignment to vaccination level $A=a$ and challenge with infectious inoculum during interval 1 ($e_1=1$). $E_k^{a,e_1=1},\Delta Y_k^{a,e_1=1}\in\{0,1\}$ \\
    $E_2^{a,e_1=0,e_2=1},\Delta Y_2^{a,e_1=0,e_2=1}$ & Exposure status and outcome indicator under joint assignment to vaccination level $A=a$, isolation from the infectious agent during interval 1 ($e_1=0$) and challenge with infectious inoculum during interval 2 ($e_2=1$). $E_2^{a,e_1=0,e_2=1},\Delta Y_2^{a,e_1=0,e_2=1}\in\{0,1\}$ \\
    $\text{VE}_1^\mathrm{obs}(l)=1-\frac{E[\Delta Y_1\mid A=1,L=l]}{E[\Delta Y_1\mid A=0,L=l]}$ & Observed (conventional) vaccine efficacy during interval 1 for baseline covariate level $L=l$. $\text{VE}_1^\mathrm{obs}(l)\in (-\infty,1]$ \\
    $\text{VE}_2^\mathrm{obs}(l)=1-\frac{E[\Delta Y_2\mid \Delta Y_1=0, A=1,L=l]}{E[\Delta Y_2\mid \Delta Y_1=0, A=0,L=l]}$ & Observed (conventional) vaccine efficacy during interval 2 for baseline covariate level $L=l$. $\text{VE}_2^\mathrm{obs}(l)\in (-\infty,1]$ \\
    $\text{VE}_1^{\mathrm{challenge}}(l)=1-\frac{E[\Delta Y_1^{a=1,e_1=1}\mid L=l]}{E[\Delta Y_1^{a=0,e_1=1}\mid L=l]}$ & Challenge effect during interval 1 for baseline covariate level $L=l$. $\text{VE}_1^\mathrm{challenge}(l)\in (-\infty,1]$ \\
    $\text{VE}_2^{\mathrm{challenge}}(l)=1-\frac{E[\Delta Y_2^{a=1,e_1=0,e_2=1}\mid L=l]}{E[\Delta Y_2^{a=0,e_1=0,e_2=1}\mid L=l]}$ & Challenge effect during interval 2 for baseline covariate level $L=l$. $\text{VE}_2^\mathrm{challenge}(l)\in (-\infty,1]$ \\
    $\psi(l)=\frac{E[\Delta Y_1^{a=1,e_1=1}\mid L=l]}{E[\Delta Y_2^{a=1,e_1=0,e_2=1}\mid L=l]}$ & Relative challenge effect for interval 1 versus interval 2 in baseline covariate level $L=l$. $\psi(l)\in [0,\infty)$ \\
    $\mathcal{L}_2(l),\mathcal{U}_2(l)$ & Sharp lower and upper bound of $\text{VE}_2^{\mathrm{challenge}}(l)$. $\mathcal{L}_2(l),\mathcal{U}_2(l)\in(-\infty,1]$ \\
    $\mathcal{L}_\psi(l),\mathcal{U}_\psi(l)$ & Sharp lower and upper bound of $\psi(l)$. $\mathcal{L}_\psi(l),\mathcal{U}_\psi(l)\in [0,\infty)$ 
    \end{tabular}%
     }
  \label{tab:notation}%
\end{table}%

\begin{figure}
    \centering
 \resizebox{0.4\columnwidth}{!}{
    \begin{tikzpicture}
\tikzset{line width=1.5pt, outer sep=0pt,
ell/.style={draw,fill=white, inner sep=2pt,
line width=1.5pt},
swig vsplit={gap=5pt,
inner line width right=0.5pt}};
\node[name=A,ell,  shape=ellipse] at (3,0) {$A$};
\node[name=E1,ell,  shape=ellipse] at (6,0) {$E_1$};
\node[name=E2,ell,  shape=ellipse] at (9,0) {$E_2$};
\node[name=Y1,ell,  shape=ellipse] at (6,2) {$\Delta Y_1$};
\node[name=Y2,ell,  shape=ellipse] at (9,2) {$\Delta Y_2$};
\node[name=L,ell,  shape=ellipse] at (3,2) {$L$};
\node[name=UY,ell,  shape=ellipse] at (7.5,4) {$U_{Y}$};
\node[name=UE,ell,  shape=ellipse] at (7.5,-2) {$U_{E}$};
\begin{scope}[transparency group, opacity=0.3] 
    \path [->,>={Stealth[black]}] (A) edge (Y1);
    \path [->,>={Stealth[black]}] (A) edge (Y2);
    \path[->,>={Stealth[black]}]  (E1) edge  (Y1);
    \path[->,>={Stealth[black]}]  (E2) edge  (Y2);
    \path [->,>={Stealth[black]}] (Y1) edge (Y2);
    \path [->,>={Stealth[black]}] (Y1) edge (E2);
    \path [->,>={Stealth[black]}] (L) edge (E1);
    \path [->,>={Stealth[black]}] (L) edge (E2);
    \path [->,>={Stealth[black]}] (L) edge (Y1);
    \path [->,>={Stealth[black]}] (L) edge[bend left] (Y2);
    \path[->,>={Stealth[black]}]  (UY) edge[black] (Y1);
    \path[->,>={Stealth[black]}]  (UY) edge[black] (Y2);
    \path[->,>={Stealth[black]}]  (UE) edge[black] (E1);
    \path[->,>={Stealth[black]}]  (UE) edge[black] (E2);
\end{scope}
\begin{scope}[>={Stealth[black]},
              every edge/.style={draw=black,very thick}]
    

\end{scope}
\end{tikzpicture}
 }
\caption{Causal DAG illustrating a data generating mechanism for the observed variables}
    \label{fig: observed data}
\end{figure}
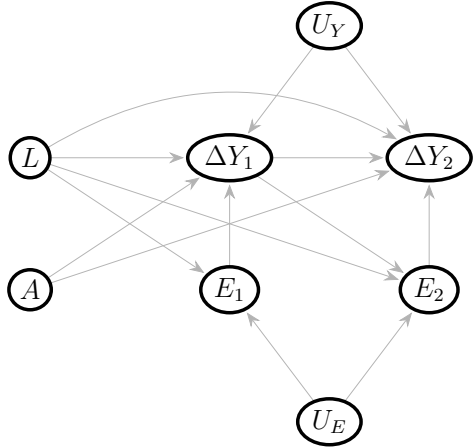

\section{Questions and estimands of interest\label{sec: questions and estimand}}
Let $\Delta Y_1^{a,e_1=1}$ be a counterfactual indicator of the outcome $\Delta Y_1$, had individuals been given treatment $A=a$ at baseline and subsequently, in time interval 1,  been exposed to an infectious inoculum through a controlled procedure ($e_1 =1$). Furthermore, let $\Delta Y_2^{a,e_1=0,e_2=1}$ be the counterfactual outcome under an intervention that assigns treatment $A=a$, then isolates the individual from the infectious agent during time interval 1 ($e_1 =0$) and finally exposes the individual to an infectious inoculum in the same controlled manner at the beginning of time interval 2 ($e_2 =1$). 

We define the conditional challenge effect during time intervals 1 and 2, respectively, by
\begin{align}
    &\text{VE}_1^{\mathrm{challenge}}(l)=1-\frac{E[\Delta Y_1^{a=1,e_1=1}\mid L=l]}{E[\Delta Y_1^{a=0,e_1=1}\mid L=l]} ~,\notag\\ &\text{VE}_2^{\mathrm{challenge}}(l)=1-\frac{E[\Delta Y_2^{a=1,e_1=0,e_2=1}\mid L=l]}{E[\Delta Y_2^{a=0,e_1=0,e_2=1}\mid L=l]} ~. \label{eq: challenge VE}
\end{align}
The challenge effect quantifies the mechanism by which the vaccine exerts protective effects, outside of pathways that involve changes in exposure pattern, by targeting hypothetical challenge trials where an infectious challenge is administered after an isolation period (versus no isolation) in vaccinated individuals. The practical relevance of the challenge effect is, e.g.,\ illustrated by the concrete proposal of \citet{ray_depletion--susceptibles_2020}, who suggested to study waning of influenza vaccines by enrolling participants to receive a vaccine during a random week from August to November, and then contrasting the incidence of influenza infection between early and late vaccinees. \citet{monge_imprinting_2023} and \citet{hernan_selection_2023} proposed a related hypothetical challenge trial to describe selection bias in quantification of immune imprinting of COVID-19 vaccines. However, while these challenge trials are rarely conducted, to our knowledge, previous work has not considered identification and estimation of such estimands from conventional vaccine trials. In Sections~\ref{sec:identification}-\ref{sec:maintext estimation}, we clarify how to identify and estimate the challenge effect using routinely collected data from conventional vaccine trials.

We denote the conventional (observed) vaccine efficacy estimands by 
\begin{align}
    &\textrm{VE}_1^{\mathrm{obs}}(l)=1-\frac{E[\Delta Y_1\mid A=1,L=l]}{E[\Delta Y_1\mid A=0,L=l]} ~, \notag\\
&\textrm{VE}_2^{\mathrm{obs}}(l)=1-\frac{E[\Delta Y_2\mid \Delta Y_1=0, A=1,L=l]}{E[\Delta Y_2\mid \Delta Y_1=0, A=0,L=l]} ~.\label{eq: observed VE}
\end{align}
To reduce clutter, we will write $\textrm{VE}_k^{\mathrm{challenge}}$ and $\text{VE}_k^\mathrm{obs}$ for the challenge effect and observed vaccine efficacy at time $k$, omitting the argument $l$. However, in general, both quantities could vary with $l$. 
For a given controlled exposure to the infectious agent, the challenge effect $\text{VE}_k^\mathrm{challenge}$ does not change with infection prevalence. 
In contrast, $\text{VE}_k^{\mathrm{obs}}$ can depend on the prevalence of infection in the communities of the trial participants \citep{struchiner_randomization_2007}.

We take the position that ``waning'' refers to a contrast of counterfactual outcomes  under different interventions, as formalized in the following definition.

\begin{definition}[Challenge waning\label{def:waning}]
    \begin{align*}
        \text{VE}_1^{\mathrm{challenge}} > \text{VE}_2^{\mathrm{challenge}}~.
    \end{align*}
    We say that the vaccine effect wanes from interval 1 to interval 2 if the challenge effect decreases from interval 1 to interval 2. 
\end{definition}
Indeed, $\text{VE}_1^{\mathrm{challenge}} \neq \text{VE}_2^{\mathrm{challenge}}$ does not imply, nor is it implied by a change in a conventional vaccine efficacy measure, $\text{VE}_1^{\mathrm{obs}} \neq \text{VE}_2^{\mathrm{obs}}$.  Thus, in the COVID-19 example it is not sufficient to know that $\text{VE}^{\mathrm{obs}}_k$ decreased over time in order to ascertain that the vaccine protection has waned. We illustrate this point by simulating data generating mechanisms with different values of $\text{VE}_k^\mathrm{obs}$ and $\text{VE}_k^\mathrm{challenge}$ in the Supplementary Material (Appendix~I).

So far we have introduced a hypothetical exposure intervention without characterizing in detail the properties of such an intervention. In the next section, we describe a list of properties that the exposure intervention should satisfy, give examples of real life challenge trials that plausibly meet these conditions, and present examples where the conditions fail.

\subsection{Exposure interventions\label{sec:interventions on exposure}}
Returning to the COVID-19 example, we will now outline a hypothetical trial for the joint intervention $(a,e_1,e_2)$. Suppose that assignment to vaccine versus placebo is blinded, and that the intervention $e_k=0$ for $k \in \{1,2\}$ denotes perfect isolation from the infectious agent, for example by confining individuals under $e_k=0$ such that they are not in contact with the wider community. Suppose further that $e_k=1$ denotes intranasal challenge by pipette at the beginning of time interval $k$ with a dose of virus particles that is representative of a typical infectious exposure in the observed data; the controlled procedure could, e.g., be similar to the COVID-19 challenge experiment described by \citet{killingley_safety_2022}. Furthermore, let $E_k=1$ be an (unmeasured) indicator that an individual in the observed data is exposed to a quantity of virus particles that exceeds a threshold believed to be necessary to develop COVID-19 infection.

\begin{assumption}[Consistency\label{ass:consistency 2 intervals}]
We assume that interventions on treatment $A$ and exposures $E_1,E_2$ are well-defined such that the following consistency conditions hold for all $a,e_1,e_2\in\{0,1\}$:
    \begin{enumerate}[(i)]
        \item $\text{ if }    A=a \text{ then } E_1=E_1^a, \Delta Y_1=\Delta Y_1^a, E_2= E_2^a,  E_2^{e_1=0}= E_2^{a,e_1=0}, \Delta Y_2 = \Delta Y_2^a, $ $\Delta Y_2^{e_1=0} = \Delta Y_2^{a,e_1=0} ~,$
        \item $\text{ if }   A=a,E_1=e_1  \text{ then } \Delta Y_1=\Delta Y_1^{a,e_1}, E_2=E_2^{a,e_1}, \Delta Y_2=\Delta Y_2^{a,e_1} ~,$
        \item $\text{ if } A=a,E_2^{a,e_1=0}=e_2  \text{ then } \Delta Y_2^{a,e_1=0}=\Delta Y_2^{a,e_1=0,e_2} ~.$
    \end{enumerate}
\end{assumption}
Assumption~\ref{ass:consistency 2 intervals} implicitly subsumes that the counterfactual outcomes of one individual do not depend on the treatment of another individual \citep{pearl_consistency_2010}, i.e.\ no interference. 
We discuss the assumption of no interference further in Appendix~A.
Consistency assumptions are routinely invoked when doing causal inference \citep{hernan_causal_nodate} and require that the target trial exposure produces the same outcomes as the exposures that occurred in the observed data. In other words, the intervention $e_1=1$ must be representative of the observed exposures for individuals with $E_1=1$, and similarly for time interval 2. However, Assumption~\ref{ass:consistency 2 intervals} does not specify exactly what this representative exposure is, in particular, what the dose of the viral inoculum is in the target challenge trial.

While routinely invoked, consistency assumptions, like Assumption~\ref{ass:consistency 2 intervals}, can be violated if multiple versions of exposure, which have different effects on future outcomes, are present in the data \citep{hernan_does_2016}. For the motivating exposure intervention, this could happen if there exist subgroups that have substantially different quantities of viral particles per exposure compared to the rest of the population, and if the risk of acquiring infection is highly sensitive to such differences in viral particles. The same ambiguity would occur if the number of exposures vary substantially per individual within each time interval.

Appendix~D discusses how Assumption~\ref{ass:consistency 2 intervals} can be weakened under multiple treatment versions, building on \citet{vanderweele_constructed_2022} and \citet{vanderweele_causal_2013}.
In particular, we show that an analogous identification argument holds when the number of viral particles in the pipette used to challenge individuals is a random variable sampled from a suitable distribution, or when this viral inoculum has a constant representative size that exists in a non-trivial class of settings. 
It is possible to test the strict null hypothesis that the vaccine does not wane under \textit{any} (observed) size of viral inoculum that satisfies a set of assumptions formalized in Appendix~D, assuming that the distribution of viral inocula amongst exposed individuals remains the same between intervals $k=1$ and $k=2$. 
This is closely related to the ``Similar Study Environment'' assumption adopted by  \citet{fintzi_assessing_2021}, who give several examples of changes in study environment that could lead to changing $\text{VE}^\mathrm{obs}$ over time; for example, changes in viral strands over time, or changes in mask wearing behavior that could lead to different quantities of viral particles per exposure at different times.

Violations of Assumption~\ref{ass:consistency 2 intervals} can be mitigated by adopting a blinded crossover trial design \citep{follmann_deferred-vaccination_2021}, where individuals are randomized to vaccine or placebo at baseline and subsequently receive the opposite treatment after a fixed interval of time. In such trials, one can minimize differences in background infection prevalence or in viral particles per exposure between recent versus early recipients of the active vaccine by contrasting the cumulative incidence of outcomes during the \textit{same interval of calendar time} \citep{lipsitch_depletion--susceptibles_2019,ray_depletion--susceptibles_2020}.

To establish a relation between exposures and outcomes, we introduce the following assumption.
\begin{assumption}[Exposure necessity\label{ass: exposure necessity 2 intervals}]
For all $a\in\{0,1\}$ and $k\in\{1,2\}$,
 \begin{align*}
     E^{a}_k = 0 \implies \Delta Y_k^{a} = 0  \text{ and } E_2^{a,e_1=0} = 0 \implies \Delta Y_2^{a,e_1=0} = 0 ~.
 \end{align*}
\end{assumption}
The exposure necessity assumption \citep{stensrud_identification_2023} states that any individual who develops the infection, must have been exposed. Standard infectious disease models typically express the infection rate as a product of a contact rate and a per exposure transmission probability, see, e.g., (2.14) in \citet{halloran_design_2012} or (2) in \citet{tsiatis_estimating_2022}. Such models not only imply that exposure is necessary for infection, but also impose strong parametric assumptions on the infection transmission mechanism, and it is not clear how these parametric assumptions can be empirically falsified. In contrast, exposure necessity can be falsified by observing whether any individuals develop the outcome without being exposed. 
In the COVID-19 example, exposure necessity is plausible because COVID-19 is primarily believed to spread through respiratory transmission \citep{meyerowitz_transmission_2021}, where viral particles come into contact with the respiratory mucosa.

Similarly to other works on vaccine effects, we require the exposure to be unaffected by the treatment assignment \citep{halloran_design_1999}.
\begin{assumption}[No treatment effect on exposure in the unexposed\label{ass:no effect on exposure 2 intervals}]
\begin{align*}
    E_1^{a=0}=E_1^{a=1} \text{ and } E_2^{a=0,e_1=0}=E_2^{a=1,e_1=0} ~.
\end{align*}
\end{assumption}
In blinded placebo controlled RCTs, such as the COVID-19 example introduced in Section~\ref{sec:motivating example}, patients do not know whether they have been assigned to vaccine or placebo shortly after treatment assignment. Therefore, their community interactions are unlikely to be affected by the treatment assignment, and we find it plausible that $E_1^{a=0}=E_1^{a=1}$ \citep{halloran_causal_1995,stensrud_identification_2023}. However, an individual who develops the outcome during time interval 1 may change their subsequent behavior during time interval 2. If more individuals develop the outcome under placebo compared to the active vaccine, then treatment could affect exposures during time interval 2 via infection status in time interval 1, as illustrated by a path $A\rightarrow \Delta Y_1 \rightarrow E_2$ (Figure~\ref{fig: observed data}). This could reflect a retention of highly exposed vaccine recipients in the risk set \citep{hudgens_endpoints_2004}. Under an intervention that eliminates exposure during time interval 1, there are no such selection effects during time interval 2. Thus, Assumption~\ref{ass:no effect on exposure 2 intervals} is plausible in our motivating target trial. 

Assumptions about balanced exposure between treatment arms are standard in vaccine research in order to interpret VE estimates as protective effects of treatment that are not due to changes in behavior \citep{hudgens_endpoints_2004}. For example, \citet{tsiatis_estimating_2022} used a related assumption, stating that the counterfactual contact rate $c_a^b(t)$ under a blinded assignment ($b$) to treatment $a$ is equal for $a=0$ and $a=1$ at all times and for all individuals.

To identify outcomes under an intervention that isolates individuals during interval 1, we introduce the following assumption.
\begin{assumption}[Exposure effect restriction\label{ass: exclusion 2 intervals}]
For all $a\in\{0,1\}$,
\begin{align}
    E[\Delta Y_2\mid A=a,L] \leq E[\Delta Y_2^{e_1=0}\mid A=a,L] \leq E[\Delta Y_1+\Delta Y_2\mid A=a,L] \text{ w.p. 1} ~. \label{eq: sufficient single-world inequality}
\end{align}
\end{assumption}
To give intuition for Assumption~\ref{ass: exclusion 2 intervals}, consider the following examples  where the expected counterfactual outcome under isolation reaches the upper or lower limits (Figure~\ref{fig:weak_exclusion_restriction}). Suppose that 3 out of 40 individuals in stratum $A=a,L=l$ developed the infectious outcome during interval 1. Subsequently, 5 individuals experienced the outcome during interval 2. In the worst case scenario, all 3 individuals who developed the outcome in interval 1 would also have the outcome during interval 2 if they were isolated during interval 1, and in the best case scenario none of the 3 individuals would have the outcome after isolation. If the outcomes in the remaining 37 individuals were identical under isolation versus no isolation, a total of 5/40 (best case) to 8/40 (worst case) individuals would experience the outcome during interval 2 after isolation. In the example, suppose further that all proportions represent expectations.
\begin{figure}
    \centering
    \includegraphics[width=\linewidth]{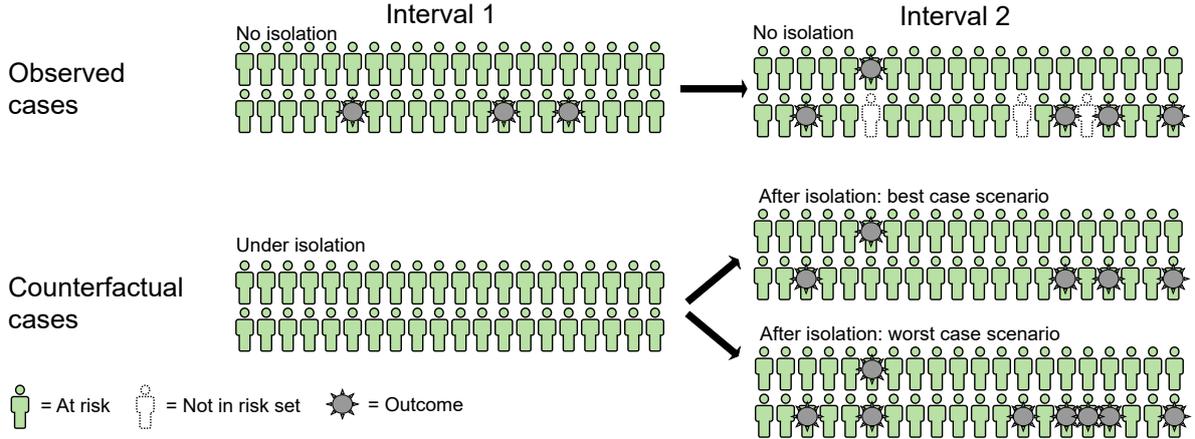}
    \caption{Illustration of  scenarios where the bounds in Assumption~\ref{ass: exclusion 2 intervals} are reached. Each panel is evaluated in a stratum $A=a,L=l$. Under isolation during time interval 1, there would be 3 more individuals at risk during time interval 2 compared to the observed data with no isolation. In the best case scenario, none of the 3 individuals would develop the outcome during interval 2 after isolation. In the worst case scenario, all 3 individuals would develop the outcome during interval 2 after isolation. 
    }
    \label{fig:weak_exclusion_restriction}
\end{figure}

Assumption~\ref{ass: exclusion 2 intervals} can be violated if there exist causal paths from $E_1$ to $\Delta Y_2$ that are not intersected by $\Delta Y_1$, e.g.\ the path $E_1\rightarrow \Delta Y_2$ in Figure~\ref{fig:SWIG_waning_2_intervals}(A). 
In Appendix~C, we show that Assumption~\ref{ass: exclusion 2 intervals} is implied by an exclusion restriction assumption under an intervention that prevents the outcome from occurring during time interval 1.
For example, Assumption~\ref{ass: exclusion 2 intervals} can fail if isolation during time interval 1 precludes exposure to the infectious pathogen that contributes to sustained natural immunity without causing the outcome during interval 1. In other words, Assumption~\ref{ass: exclusion 2 intervals} can fail if a substantial proportion of infections are not detected, e.g.\ when infections are asymptomatic. In such cases, individuals may become more susceptible to infectious exposures during time interval 2 if they are isolated during time interval 1. For example, natural immunity against severe malaria wanes over time in individuals who migrate from malaria endemic countries to non-endemic countries \citep{mischlinger_imported_2020}. Likewise, increases in RSV infections after COVID-19 lockdown may have been caused by prolonged periods without viral exposure, reducing naturally acquired immunity \citep{bardsley_epidemiology_2023}.  
It can be possible to detect whether asymptomatic infections occur in the trial population by observing whether any placebo recipients develop antibodies or other disease-specific immune markers. However, Assumption~\ref{ass: exclusion 2 intervals} is not necessarily violated even if some of the trial participants develop natural immunity from undetected infectious exposures: the lower limit of \eqref{eq: sufficient single-world inequality} is unlikely to be violated, because natural immunity makes individuals more protected against infection, and not less. The upper limit of \eqref{eq: sufficient single-world inequality} reflects a scenario with an extremely heterogeneous risk of infection (Appendix~C), or with prominent effects of natural immunity, and therefore this upper limit might hold even if some degree of natural immunity is present.

Assumption~\ref{ass: exclusion 2 intervals} can also be violated if an exposure to the infectious agent that does not result in COVID-19 infection during time interval 1 leads to a change in exposure behavior during time interval 2, corresponding to the path $E_1\rightarrow E_2 \rightarrow \Delta Y_2$ in Figure~\ref{fig:SWIG_waning_2_intervals}(A).

In the following example, we consider another exposure intervention.

\begin{example*}[Alternative exposure intervention\label{ex:alternative exposure}] 
   Let $E_k=1$  denote individuals who are free to interact in an environment where they can be exposed to the infectious agent. Conversely, define $E_k=0$  to mean that individuals are isolated, i.e.\ confined to an environment where they cannot be exposed to the infectious agent. Likewise, let $e_k=0$ and $e_k=1$ denote controlled procedures whereby individuals are isolated versus introduced into such an environment. For example, \citet{ray_depletion--susceptibles_2020} discuss a trial design where individuals are vaccinated for influenza before the seasonal outbreak, and are therefore initially isolated until the seasonal outbreak begins, disregarding infections out of season. Alternatively, in a trial contrasting early versus late vaccination of individuals before travelling to areas where the infectious agent is widespread, individuals are isolated before travel, and then exposed on arrival. Under this definition of infectious exposure, being part of a population of infective individuals is viewed as an infectious challenge in itself.   Thus, in the observed data described in Section~\ref{sec:observed data}, $E_k=1$ w.p.~1 under this alternative exposure definition. Then, Assumptions~\ref{ass: exposure necessity 2 intervals}-\ref{ass:no effect on exposure 2 intervals} and Assumption~\ref{ass: exposure exchageability 2 intevals} hold by design, although Assumption~\ref{ass: exclusion 2 intervals} can still fail. Furthermore, to interpret the challenge effect as a measure of vaccine protection, we still require the study environment, e.g.\ quantity of viral particles per exposure, to be constant across intervals 1 and 2 (Appendix~D).
\end{example*}

\section{Identification\label{sec:identification}}
\subsection{Identification assumptions}
The following additional assumption, which concerns common causes of exposure and the outcome, is useful for identification of the challenge effect.

\begin{assumption}[Exposure exchangeability\label{ass: exposure exchageability 2 intevals}\footnote{In the original published version of the manuscript \citep{Janvin02012025}, the independence relation \eqref{eq: homogeneous_exch} was included in Assumption~\ref{ass: exposure exchageability 2 intevals}. In this updated version,  \eqref{eq: homogeneous_exch} has been moved to Proposition~\ref{prp: homogeneity}, because it was only intended to be invoked in Proposition~\ref{prp: homogeneity} and is not needed in Theorem~\ref{thm: waning minimal ass 2 intevals}. Furthermore, because we only require unconfoundedness of $\Delta Y_2$ and $E_1,E_2$ under an intervention which sets $e_1=0$ in our proofs, we have weakened the conditional independence assumptions accordingly in Assumptions~\ref{ass: exposure exchageability 2 intevals}, \ref{ass: exposure exchageability multiple versions}, \ref{ass: exposure exchageability K intevals} and in \eqref{eq: homogeneous_exch}.}]\hfill

For all $a,e_1,e_2\in\{0,1\}$,
\begin{align*}
    \Delta Y_1^{a,e_1} \independent E_1^a\mid A=a,L \text{ and } \Delta Y_2^{a,e_1=0,e_2} \independent E_2^{a,e_1=0}\mid A=a,L ~.
\end{align*}
\end{assumption}
Exposure exchangeability states that exposure  and the outcome  are unconfounded conditional on baseline covariates. \citet{tsiatis_estimating_2022} used an assumption closely related to Assumption~\ref{ass: exposure exchageability 2 intevals}; they assumed that $\{\pi_1(t,\tau),\pi_0(t,\tau)\}\independent \{S,c^b\}\mid X$, where $\pi_a(t,\tau)$ is the counterfactual individual-specific transmission probability per contact under treatment $A=a$, $S$ denotes a vaccination site, $X$ is a set of baseline covariates and $c^b$ is a contact rate. 

The causal graphs in Figure~\ref{fig:SWIG_waning_2_intervals}(A) and (B) illustrate two different data generating mechanisms that can violate Assumptions~\ref{ass:no effect on exposure 2 intervals}-\ref{ass: exposure exchageability 2 intevals} by paths $A\rightarrow E_k$, $E_1\rightarrow \Delta Y_2$, $E_1\rightarrow E_2$ and $E_2^{a,e_1}\leftarrow U_{EY} \rightarrow \Delta Y_2^{a,e_1,e_2}$. In contrast, the causal graph in Figure~\ref{fig: observed data} satisfies Assumptions~\ref{ass: exposure necessity 2 intervals}-\ref{ass: exposure exchageability 2 intevals}.

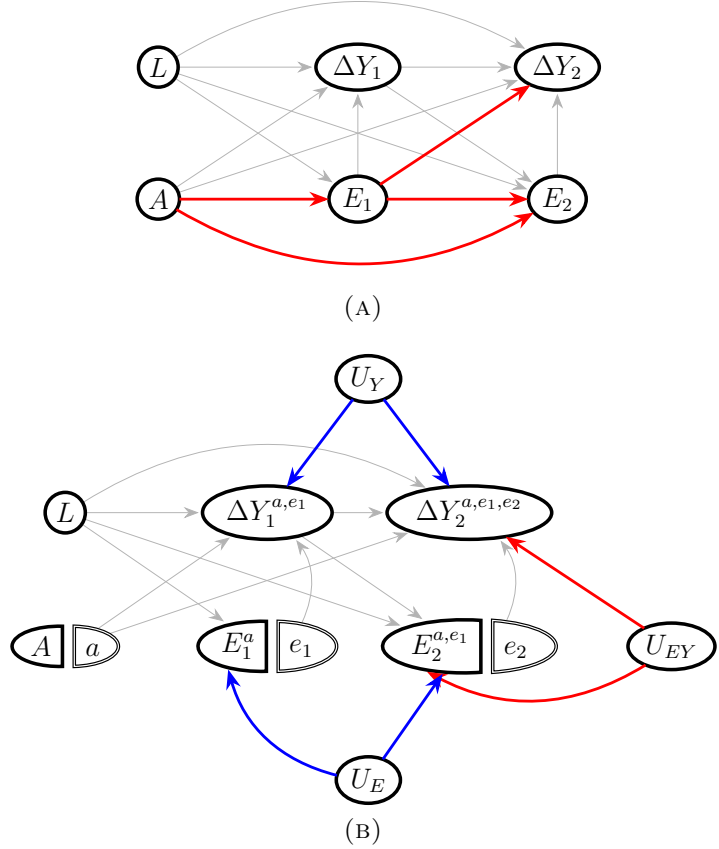
\begin{figure}
    \centering
    \subfloat[]{
 \resizebox{0.4\columnwidth}{!}{
    \begin{tikzpicture}
\tikzset{line width=1.5pt, outer sep=0pt,
ell/.style={draw,fill=white, inner sep=2pt,
line width=1.5pt},
swig vsplit={gap=5pt,
inner line width right=0.5pt}};
\node[name=A,ell,  shape=ellipse] at (3,0) {$A$};
\node[name=E1,ell,  shape=ellipse] at (6,0) {$E_1$};
\node[name=E2,ell,  shape=ellipse] at (9,0) {$E_2$};
\node[name=Y1,ell,  shape=ellipse] at (6,2) {$\Delta Y_1$};
\node[name=Y2,ell,  shape=ellipse] at (9,2) {$\Delta Y_2$};
\node[name=L,ell,  shape=ellipse] at (3,2) {$L$};
\begin{scope}[transparency group, opacity=0.3] 
    \path [->,>={Stealth[black]}] (A) edge (Y1);
    \path [->,>={Stealth[black]}] (A) edge (Y2);
    \path[->,>={Stealth[black]}]  (E1) edge  (Y1);
    \path[->,>={Stealth[black]}]  (E2) edge  (Y2);
    \path [->,>={Stealth[black]}] (Y1) edge (Y2);
    \path [->,>={Stealth[black]}] (Y1) edge (E2);
    \path [->,>={Stealth[black]}] (L) edge (E1);
    \path [->,>={Stealth[black]}] (L) edge (E2);
    \path [->,>={Stealth[black]}] (L) edge (Y1);
    \path [->,>={Stealth[black]}] (L) edge[bend left] (Y2);
\end{scope}
\begin{scope}[>={Stealth[black]},
              every edge/.style={draw=black,very thick}]
    
    \path[->,>={Stealth[red]}]  (E1) edge[red] (Y2);
    \path[->,>={Stealth[red]}]  (E1) edge[red] (E2);
    \path[->,>={Stealth[red]}]  (A) edge[red] (E1);
    \path[->,>={Stealth[red]}]  (A) edge[red, bend right] (E2);
\end{scope}
\end{tikzpicture}
 }}

\subfloat[]{
 \resizebox{0.6\columnwidth}{!}{
    \begin{tikzpicture}
\tikzset{line width=1.5pt, outer sep=0pt,
ell/.style={draw,fill=white, inner sep=2pt,
line width=1.5pt},
swig vsplit={gap=5pt,
inner line width right=0.5pt}};
\node[name=A,shape=swig vsplit] at (3,0){
\nodepart{left}{$A$}
\nodepart{right}{$a$} };
\node[name=E1,shape=swig vsplit] at (6,0){
\nodepart{left}{$E_1^a$}
\nodepart{right}{$e_1$} };
\node[name=E2,shape=swig vsplit] at (9,0){
\nodepart{left}{$E_2^{a,e_1}$}
\nodepart{right}{$e_2$} };
\node[name=Y1,ell,  shape=ellipse] at (6,2) {$\Delta Y_1^{a,e_1}$};
\node[name=Y2,ell,  shape=ellipse] at (9,2) {$\Delta Y_2^{a,e_1,e_2}$};
\node[name=L,ell,  shape=ellipse] at (3,2) {$L$};
\node[name=UEY2,ell,  shape=ellipse] at (12,0) {$U_{EY}$};
\node[name=UY,ell,  shape=ellipse] at (7.5,4) {$U_{Y}$};
\node[name=UE,ell,  shape=ellipse] at (7.5,-2) {$U_{E}$};
\begin{scope}[transparency group, opacity=0.3] 
    \path [->,>={Stealth[black]}] (A) edge (Y1);
    \path [->,>={Stealth[black]}] (A) edge (Y2);
    \path[->,>={Stealth[black]}]  (E1.40) edge[bend right]  (Y1);
    \path[->,>={Stealth[black]}]  (E2.40) edge[bend right]  (Y2);
    \path [->,>={Stealth[black]}] (Y1) edge (Y2);
    \path [->,>={Stealth[black]}] (Y1) edge (E2);
    \path [->,>={Stealth[black]}] (L) edge (E1);
    \path [->,>={Stealth[black]}] (L) edge (E2);
    \path [->,>={Stealth[black]}] (L) edge (Y1);
    \path [->,>={Stealth[black]}] (L) edge[bend left] (Y2);
\end{scope}
\begin{scope}[>={Stealth[black]},
              every edge/.style={draw=black,very thick}]
    
    \path[->,>={Stealth[red]}]  (UEY2) edge[red, bend left] (E2.235);
    \path[->,>={Stealth[red]}]  (UEY2) edge[red] (Y2);
    \path[->,>={Stealth[blue]}]  (UY) edge[blue] (Y1);
    \path[->,>={Stealth[blue]}]  (UY) edge[blue] (Y2);
    \path[->,>={Stealth[blue]}]  (UE) edge[blue, bend left] (E1.215);
    \path[->,>={Stealth[blue]}]  (UE) edge[blue] (E2);
\end{scope}
\end{tikzpicture}
 }
}

\caption{(A) The red paths $E_1\rightarrow \Delta Y_2$ and $E_1\rightarrow E_2$ can violate exposure effect restriction (Assumption~\ref{ass: exclusion 2 intervals}) as well as \eqref{eq: no UY}-\eqref{eq: no UE}, and the red paths $A\rightarrow E_k$ can violate no effect of treatment on exposure in the unexposed (Assumption~\ref{ass:no effect on exposure 2 intervals}). (B) Backdoor paths between exposure $E_k$ and outcome $\Delta Y_k$ that are not blocked by baseline covariates $L$, such as the red path $E_2^{a,e_1}\leftarrow U_{EY} \rightarrow \Delta Y_2^{a,e_1,e_2}$, can violate exposure exchangeability (Assumption~\ref{ass: exposure exchageability 2 intevals}). In the presence of the blue paths $\Delta Y_1^{a,e_1}\leftarrow U_Y \rightarrow \Delta Y_2^{a,e_1,e_2}$ and $E_1^a\leftarrow U_E \rightarrow E_2^{a,e_1}$, $\text{VE}_2^{\mathrm{challenge}}$ can differ from $\text{VE}_2^{\mathrm{obs}}$ due to depletion of susceptible individuals during time interval 1. 
}
    \label{fig:SWIG_waning_2_intervals}
\end{figure}

\subsection{Identification results}
In the following theorem, we give bounds on the challenge effect under the assumptions introduced so far.

\begin{theorem}\label{thm: waning minimal ass 2 intevals}
Suppose that Assumptions~\ref{ass:consistency 2 intervals}-\ref{ass: exposure exchageability 2 intevals} hold in a conventional vaccine trial (formalized in Appendix~B). 
Then, the challenge effect during time interval 1 is point identified, 
\begin{align}
    \text{VE}_1^{\mathrm{challenge}}(l)=\text{VE}_1^{\mathrm{obs}}(l)=1-\frac{E[Y_1\mid A=1,L=l]}{E[Y_1\mid A=0,L=l]} ~, \label{eq: VE_challenge_1 and VE_obs_1}
\end{align}
whenever $E[Y_1\mid A=a,L=l]>0$ for all $a\in\{0,1\}$, and the challenge effect during interval 2 is partially identified by sharp bounds $\mathcal{L}_2(l)\leq \text{VE}_2^{\mathrm{challenge}}(l)\leq \mathcal{U}_2(l)$, where
\begin{align}
    \mathcal{L}_2(l) &= 1-\frac{E[Y_2 \mid A=1,L=l]}{E[Y_2-Y_1\mid A=0,L=l]} ~, \label{eq:L_VE_challenge_2}\\
     \mathcal{U}_2(l) &= 1-\frac{E[Y_2-Y_1\mid A=1,L=l]}{E[Y_2\mid A=0,L=l]} ~, \label{eq:U_VE_challenge_2}
\end{align}
whenever $E[Y_2-Y_1\mid A=a,L=l]>0$ for all $a\in\{0,1\}$.
\end{theorem}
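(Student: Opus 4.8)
The plan is to peel off the interventions on exposure one interval at a time, turning each counterfactual mean into an observed-data functional using consistency (Assumption~\ref{ass:consistency 2 intervals}), exposure necessity (Assumption~\ref{ass: exposure necessity 2 intervals}), and exposure exchangeability (Assumption~\ref{ass: exposure exchageability 2 intevals}), and then to use that the unknown exposure probabilities are common to both treatment arms (by Assumption~\ref{ass:no effect on exposure 2 intervals}) so that they cancel in the ratios defining the challenge effect. Throughout I use the randomization of $A$ given $L$ built into the conventional vaccine trial of Appendix~B, which lets me replace $E[\cdot\mid L=l]$ by $E[\cdot\mid A=a,L=l]$ for counterfactuals indexed by $a$.

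For interval~$1$, I would write $E[\Delta Y_1^{a,e_1=1}\mid L=l]=E[\Delta Y_1^{a,e_1=1}\mid A=a,L=l]$ by randomization, condition on $E_1^a=1$ using exposure exchangeability, and replace $\Delta Y_1^{a,e_1=1}$ by the observed $\Delta Y_1$ using consistency, obtaining $E[\Delta Y_1^{a,e_1=1}\mid L=l]=E[\Delta Y_1\mid A=a,L=l,E_1=1]$. Separately, decomposing $E[\Delta Y_1\mid A=a,L=l]$ over $E_1$ and annihilating the $E_1=0$ term by exposure necessity gives $E[\Delta Y_1\mid A=a,L=l]=E[\Delta Y_1\mid A=a,L=l,E_1=1]\,P(E_1=1\mid A=a,L=l)$. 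Since Assumption~\ref{ass:no effect on exposure 2 intervals} forces $P(E_1=1\mid A=a,L=l)$ to agree across $a$, this common factor cancels in the efficacy ratio, so $\text{VE}_1^{\mathrm{challenge}}(l)=\text{VE}_1^{\mathrm{obs}}(l)$; substituting $Y_1=\Delta Y_1$ yields the stated expression.

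For interval~$2$ I would run the same argument one level deeper on $\theta_a(l):=E[\Delta Y_2^{a,e_1=0,e_2=1}\mid L=l]$: randomization, the second exchangeability statement, and consistency condition~(iii) reduce it to $E[\Delta Y_2^{a,e_1=0}\mid A=a,L=l,E_2^{a,e_1=0}=1]$, and exposure necessity in the $e_1=0$ world then gives $\theta_a(l)=E[\Delta Y_2^{a,e_1=0}\mid A=a,L=l]/p_2(l)$ with $p_2(l)=P(E_2^{a,e_1=0}=1\mid A=a,L=l)$. Assumption~\ref{ass:no effect on exposure 2 intervals} again equates $p_2(l)$ across arms, so it cancels and $\text{VE}_2^{\mathrm{challenge}}(l)=1-E[\Delta Y_2^{1,e_1=0}\mid A=1,L=l]/E[\Delta Y_2^{0,e_1=0}\mid A=0,L=l]$. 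The sole remaining unidentified quantity is $E[\Delta Y_2^{a,e_1=0}\mid A=a,L=l]$, which equals $E[\Delta Y_2^{e_1=0}\mid A=a,L=l]$ by consistency~(i); the exposure effect restriction (Assumption~\ref{ass: exclusion 2 intervals}) confines it to $[E[Y_2-Y_1\mid A=a,L=l],\,E[Y_2\mid A=a,L=l]]$. Maximizing the arm-$1$ numerator while minimizing the arm-$0$ denominator produces $\mathcal{L}_2(l)$, and the reverse produces $\mathcal{U}_2(l)$.

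The main obstacle is sharpness rather than validity of the interval-$2$ bounds. Because the numerator and denominator above belong to the two arms separately, and randomization leaves the within-arm counterfactual law unconstrained beyond its observed $(\Delta Y_1,\Delta Y_2)$ margin, I would construct, for each endpoint of Assumption~\ref{ass: exclusion 2 intervals}, an explicit coupling of $\Delta Y_2^{a,e_1=0}$ with the observed outcomes: the upper endpoint by setting $\Delta Y_2^{a,e_1=0}=Y_2$ (every interval-$1$ case reappears under isolation, the ``worst case'' of Figure~\ref{fig:weak_exclusion_restriction}) and the lower endpoint by setting $\Delta Y_2^{a,e_1=0}=\Delta Y_2$ (the ``best case''). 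I would then embed each coupling in an FFRCISTG model that reproduces the observed law and satisfies Assumptions~\ref{ass:consistency 2 intervals}--\ref{ass: exposure exchageability 2 intevals}---for instance by taking $p_2(l)$ large enough that $\theta_a(l)\le 1$ and choosing $E_2^{a,e_1=0}$ and the latent variables so that exposure necessity and exchangeability hold---and observe that the two arms may be driven to opposite endpoints simultaneously, so every value in $[\mathcal{L}_2(l),\mathcal{U}_2(l)]$ is attained. Verifying that these constructions honor all the conditional-independence constraints of the assumed causal model, while matching the fixed observed margins, is the delicate step.
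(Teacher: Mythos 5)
Your proposal is correct and follows essentially the same route as the paper: the same chain of consistency, exposure necessity, exposure exchangeability, and treatment randomization reduces each challenge-effect numerator and denominator to an observed-data quantity times a common exposure probability that cancels by Assumption~\ref{ass:no effect on exposure 2 intervals}, and the interval-2 bounds then follow from Assumption~\ref{ass: exclusion 2 intervals} exactly as you describe. Your sharpness sketch (coupling $\Delta Y_2^{a,e_1=0}$ to $Y_2$ in one arm and to $\Delta Y_2$ in the other) is precisely the paper's Data generating mechanism~\ref{dgm: attaining bounds}, which the paper writes out explicitly with $E_1^a=E_2^{a,e_1}=1$ and a uniform latent $U_Y$ so that the consistency-with-the-observed-law and independence checks you flag as delicate become routine.
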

A proof of Theorem~\ref{thm: waning minimal ass 2 intevals} is given in Appendix~B.  
Additionally, in the Supplementary Material, we give \texttt{R} code illustrating a counterfactual data generating mechanism that attains the bounds $\mathcal{L}_2(l)$ and $ \mathcal{U}_2(l)$. 

The upper bound $ \mathcal{U}_2(l)$ is reached when $i$) all the individuals that were infected during time interval 1 in the placebo arm would have become infected if they were isolated during interval 1 and challenged with the exposure during interval 2,  $ii$) none of the individuals that were infected during time interval 1 in the vaccine arm would have become infected if they were isolated during interval 1 and challenged with the exposure during interval 2 and $iii$) every individual that was uninfected during interval 1 would have an unchanged outcome during interval 2 if they were isolated during interval 1. We can use a similar argument to find a scenario where the lower bound $\mathcal{L}_2(l)$ is reached.

It is straightforward to show that $\mathcal{L}_2(l)\leq \text{VE}_2^{\mathrm{obs}}(l)\leq  \mathcal{U}_2(l)$ by re-expressing the bounds in Theorem~\ref{thm: waning minimal ass 2 intevals} in terms of discrete hazard functions (Appendix~B). If few events occur during time interval 1, it follows from Theorem~\ref{thm: waning minimal ass 2 intevals} that the resulting bounds both approach $\text{VE}_2^{\mathrm{obs}}$. A heuristic observation along these lines was made by \citet{follmann_deferred-vaccination_2021,fintzi_assessing_2021}. Furthermore, Theorem~\ref{thm: waning minimal ass 2 intevals} clarifies plausible and testable assumptions for partial identification of (challenge) vaccine waning and provides sharp bounds under the identifying assumptions.

Because $e_k=1$ denotes challenge by a representative dose of the infectious agent (Assumption~\ref{ass:consistency 2 intervals}), a comparison across studies of challenge effects identified by Theorem~\ref{thm: waning minimal ass 2 intevals} typically compares infectious inocula of different sizes that e.g.\ depend on the prevalence of infection in the respective background populations.

The marginal challenge effect, involving quantities $1-E[\Delta Y_1^{a=1,e_1=1}]/E[\Delta Y_1^{a=0,e_1=1}]$ and  $1-E[\Delta Y_2^{a=1,e_1=0,e_2=1}]/E[\Delta Y_2^{a=0,e_1=0,e_2=1}]$, can be expressed as a weighted average over the conditional challenge effect in \eqref{eq: challenge VE} with weights that are unidentified when exposure status $E$ is unmeasured \citep{stensrud_identification_2023,huitfeldt_collapsibility_2019}. However, under the additional assumption that infectious exposure deterministically causes the outcome in placebo recipients, the marginal challenge effect is also identified in terms of the conditional challenge effect \citep{stensrud_identification_2023}.

\begin{proposition}\label{prp: homogeneity}
Suppose that Assumptions~\ref{ass:consistency 2 intervals}-\ref{ass: exposure exchageability 2 intevals} hold in a conventional vaccine trial (formalized in Appendix~B). 
Assume further that for all $a\in\{0,1\}$,
\begin{align}
    \Delta Y_2^a &\independent E_1^a \mid \Delta Y_1^a,E_2^a,A=a,L ~, \label{eq: no UY}\\
    E_2^a &\independent E_1^a\mid \Delta Y_1^a,A=a,L ~, \label{eq: no UE} \\
    \Delta Y_2^{a,e_1=0} &\independent E_1^a\mid A=a,L~, \label{eq: homogeneous_exch}
\end{align}
and that $P(E_1=0\mid A=a,L)>0$ and $E[\Delta Y_2\mid A=a,L]>0$ w.p. 1.
Then,
\begin{align}
    E[\Delta Y_2^{e_1=0}\mid A=a,L] = E[\Delta Y_2\mid \Delta Y_1=0,A=a,L] \text{ w.p. 1 }~,\label{eq: homogeneity in Y and E}
\end{align}
which implies that $\text{VE}_2^\mathrm{challenge}(l)=\text{VE}_2^\mathrm{obs}(l)$ for all $l$.
\end{proposition}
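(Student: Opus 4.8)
The plan is to establish the distributional identity \eqref{eq: homogeneity in Y and E} first, since the claimed equality $\text{VE}_2^{\mathrm{challenge}}(l)=\text{VE}_2^{\mathrm{obs}}(l)$ then follows from it together with the identification machinery already used to prove Theorem~\ref{thm: waning minimal ass 2 intevals}. To prove \eqref{eq: homogeneity in Y and E}, I would build a chain of equalities that moves $E[\Delta Y_2^{a,e_1=0}\mid A=a,L]$ from the fully intervened world into the observed subpopulation of interval-1 survivors, exploiting that on the event $\{E_1^a=0\}$ an isolation intervention in interval~1 coincides with the natural course.

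Concretely, I would proceed in four steps (all conditional on $A=a,L$, which is legitimate because $P(E_1=0\mid A=a,L)>0$ and, by consistency, $P(E_1^a=0\mid A=a,L)=P(E_1=0\mid A=a,L)$). First, apply \eqref{eq: homogeneous_exch}, $\Delta Y_2^{a,e_1=0}\independent E_1^a$, to restrict to exposure-free individuals, so that $E[\Delta Y_2^{a,e_1=0}\mid A=a,L]=E[\Delta Y_2^{a,e_1=0}\mid E_1^a=0,A=a,L]$. Second, on $\{A=a,E_1^a=0\}$ consistency (Assumption~\ref{ass:consistency 2 intervals}, with $e_1$ set to its realized value $0$) gives $\Delta Y_2^{a,e_1=0}=\Delta Y_2^a$, while exposure necessity (Assumption~\ref{ass: exposure necessity 2 intervals}) gives $\Delta Y_1^a=0$; hence the right-hand side equals $E[\Delta Y_2^a\mid E_1^a=0,A=a,L]$ and the conditioning event lies inside $\{\Delta Y_1^a=0\}$. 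Third — the crux — I would combine \eqref{eq: no UY} and \eqref{eq: no UE} to deduce the coarser independence $\Delta Y_2^a\independent E_1^a\mid \Delta Y_1^a,A=a,L$ by marginalizing $E_2^a$ out:
\begin{align*}
P(\Delta Y_2^a, E_1^a\mid \Delta Y_1^a) &= \sum_{e_2}P(\Delta Y_2^a\mid E_2^a=e_2,E_1^a,\Delta Y_1^a)\,P(E_2^a=e_2,E_1^a\mid \Delta Y_1^a)\\
&= \sum_{e_2}P(\Delta Y_2^a\mid E_2^a=e_2,\Delta Y_1^a)\,P(E_2^a=e_2\mid \Delta Y_1^a)\,P(E_1^a\mid \Delta Y_1^a),
\end{align*}
where the first factor drops $E_1^a$ by \eqref{eq: no UY} and the pair factorizes by \eqref{eq: no UE}; summing over $e_2$ returns $P(\Delta Y_2^a\mid\Delta Y_1^a)\,P(E_1^a\mid\Delta Y_1^a)$, the product of conditional marginals. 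Evaluating at $\Delta Y_1^a=0$ and using $\{E_1^a=0\}\subseteq\{\Delta Y_1^a=0\}$ yields $E[\Delta Y_2^a\mid E_1^a=0,A=a,L]=E[\Delta Y_2^a\mid \Delta Y_1^a=0,A=a,L]$. Fourth, consistency converts the counterfactuals back to observables, giving $E[\Delta Y_2\mid \Delta Y_1=0,A=a,L]$, which is \eqref{eq: homogeneity in Y and E}.

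For the stated consequence, I would invoke the step in the proof of Theorem~\ref{thm: waning minimal ass 2 intevals} that writes $E[\Delta Y_2^{a,e_1=0}\mid A=a,L]=E[\Delta Y_2^{a,e_1=0,e_2=1}\mid A=a,L]\,P(E_2^{a,e_1=0}=1\mid A=a,L)$, where exposure necessity removes the unexposed and consistency together with exposure exchangeability (Assumption~\ref{ass: exposure exchageability 2 intevals}) recovers the $e_2=1$ challenge among the exposed. Randomization lets me replace the marginal counterfactual means in \eqref{eq: challenge VE} by their $A=a$ conditioned versions, and Assumption~\ref{ass:no effect on exposure 2 intervals} together with randomization makes the factor $P(E_2^{a,e_1=0}=1\mid A=a,L)$ equal across $a$, so it cancels in the ratio defining $\text{VE}_2^{\mathrm{challenge}}(l)$. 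Substituting \eqref{eq: homogeneity in Y and E} then identifies numerator and denominator as $E[\Delta Y_2\mid \Delta Y_1=0,A=a,L]$ for $a=1,0$, giving $\text{VE}_2^{\mathrm{challenge}}(l)=\text{VE}_2^{\mathrm{obs}}(l)$.

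I expect the third step to be the main obstacle. The assumptions \eqref{eq: no UY} and \eqref{eq: no UE} differ in whether $E_2^a$ sits inside the conditioning set, and the claim I actually need, $\Delta Y_2^a\independent E_1^a\mid \Delta Y_1^a,A=a,L$, encodes precisely the "no depletion-of-susceptibles bias" content; it is obtained only after carefully summing $E_2^a$ out, not by a one-line graphoid rule. I would also lean on $E[\Delta Y_2\mid A=a,L]>0$ to keep every ratio well-defined and to guarantee the $\{E_1^a=0\}$ stratum carries positive outcome probability.
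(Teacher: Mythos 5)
Your proof is correct, and its overall skeleton matches the paper's: first establish \eqref{eq: homogeneity in Y and E}, then feed it into the ratio of \eqref{eq: identified single world inequality} for $a=1$ versus $a=0$ and cancel the exposure factor via Assumption~\ref{ass:no effect on exposure 2 intervals}. The one genuinely different ingredient is your third step. The paper never forms the coarse independence $\Delta Y_2^a\independent E_1^a\mid \Delta Y_1^a,A=a,L$; it instead runs the chain in the opposite direction, starting from $P(\Delta Y_2=1\mid\Delta Y_1=0,A=a,L)$, using exposure necessity to rewrite the event as $\{\Delta Y_2^a=1,E_2^a=1\}$, factoring it into two conditionals, and inserting $E_1^a=0$ into each conditioning set at the price of two multiplicative weights $W_{Y,a}$ and $W_{E,a}$ --- the first set to one by \eqref{eq: no UY}, the second by \eqref{eq: no UE}. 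You obtain the coarse independence by explicitly marginalizing $E_2^a$ out; this is exactly the contraction-plus-decomposition graphoid derivation ($E_1^a\independent E_2^a\mid\Delta Y_1^a$ together with $E_1^a\independent\Delta Y_2^a\mid\Delta Y_1^a,E_2^a$ gives $E_1^a\independent(E_2^a,\Delta Y_2^a)\mid\Delta Y_1^a$, hence the marginal version), so the obstacle you flag at the end is more routine than you suggest, and your summation verifies it correctly. What the paper's weight formulation buys is a built-in sensitivity reading: when \eqref{eq: no UY}--\eqref{eq: no UE} fail, the identity degrades to $P(\Delta Y_2=1\mid\Delta Y_1=0,A=a,L)=P(\Delta Y_2^{e_1=0}=1\mid A=a,L)W_{Y,a}W_{E,a}$, so the weights quantify the bias. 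What your route buys is a cleaner isolation of the single conditional-independence fact that the two assumptions jointly encode, and an earlier, more transparent use of \eqref{eq: homogeneous_exch}. Your handling of consistency and exposure necessity on the event $\{E_1^a=0\}$ (to get $\Delta Y_2^{a,e_1=0}=\Delta Y_2^a$, $\Delta Y_1^a=0$, and positive probability of the conditioning events) is also correct.
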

Equality~\eqref{eq: homogeneity in Y and E} implies Assumption~\ref{ass: exclusion 2 intervals} (shown in Appendix~B), and thus \eqref{eq: no UY}-\eqref{eq: no UE} imply Assumption~\ref{ass: exclusion 2 intervals}.

Expressions~\eqref{eq: no UY}-\eqref{eq: no UE} imply strong homogeneity assumptions, because they can be violated by the presence of $U_Y$ or $U_E$ in Figure~\ref{fig:SWIG_waning_2_intervals}(B), and are not necessary for the bounds in Theorem~\ref{thm: waning minimal ass 2 intevals} to be informative about vaccine waning. Furthermore,  under causal faithfulness, \eqref{eq: no UY}-\eqref{eq: no UE} are violated by paths $E_1\rightarrow E_2$ or $E_1\rightarrow \Delta Y_2$, but \eqref{eq: sufficient single-world inequality} is not necessarily violated by these paths, as discussed in Section~\ref{sec: questions and estimand}. In this sense, \eqref{eq: sufficient single-world inequality} may be more robust than \eqref{eq: no UY}-\eqref{eq: no UE} to the presence of natural immunity and changes in exposure behavior among trial participants.
Proposition~\ref{prp: homogeneity} formalizes sufficient conditions under which $\text{VE}_2^\mathrm{challenge}(l)$ is identified by $\text{VE}_2^\mathrm{obs}(l)$. A special case arises when Assumptions~\ref{ass:consistency 2 intervals}-\ref{ass: exposure exchageability 2 intevals}, Assumptions~S1-S2 (Appendix~B) and \eqref{eq: no UY}-\eqref{eq: no UE} hold \textit{without} conditioning on baseline covariates; then, $\text{VE}_k^\mathrm{challenge}=\text{VE}_k^\mathrm{obs}$ marginally for all $k$, and conventional vaccine efficacy estimates are equal to the marginal challenge effect. This implies an even stronger homogeneity condition than assuming \eqref{eq: no UY}-\eqref{eq: no UE} with baseline covariates, because it can also be violated by paths such as $\Delta Y_1 \leftarrow L \rightarrow \Delta Y_2$ in Figure~\ref{fig:SWIG_waning_2_intervals}(A).
Investigators who want to quantify vaccine waning should decide on a case-by-case basis whether to report estimates of $\mathcal{L}_2(l),\mathcal{U}_2(l), \text{VE}_2^\mathrm{obs}(l)$ or marginal estimands, and justify their assumptions accordingly, using subject matter knowledge. We do not rely on these homogeneity conditions in our analysis of the BNT162b2 COVID-19 vaccine trial in Section~\ref{sec:example Pfizer trial}.

Next, suppose that the effect of placebo does not wane in the sense that the risk of the outcomes under a challenge immediately after placebo administration is equal to the risk of outcomes under isolation during interval~1, and subsequent challenge during interval~2.
\begin{assumption}[No waning of placebo\label{ass: no waning placebo 2 intervals}]
\begin{align}
E[\Delta Y_1^{a=0,e_1=1}\mid L] =E[\Delta Y_2^{a=0,e_{1}=0,e_2=1}\mid L] \text{ w.p.~1} \label{eq: no waning placebo 2 intervals}~.
\end{align}
\end{assumption}

Assumption~\ref{ass: no waning placebo 2 intervals} states that a controlled exposure leads to the same outcomes in conditional expectation, whether or not the exposure is preceded by an isolation period. The assumption could be violated if isolation during interval 1 leads to a loss of natural immunity acquired before baseline, but this violation is unlikely in \citet{thomas_safety_2021} because around $95\%$ of participants had no prior history of COVID-19 infections.

\subsection{An alternative target trial}
Under no waning of placebo (Assumption~\ref{ass: no waning placebo 2 intervals}), it follows straightforwardly from Theorem~\ref{thm: waning minimal ass 2 intevals} that we can bound the ratio $\psi(l)=E[\Delta Y_1^{a=1,e_1=1}\mid L=l]/E[\Delta Y_2^{a=1,e_1=0,e_2=1}\mid L=l]$
 by
\begin{align}
    \mathcal{L}_\psi(l) \leq \psi(l) \leq \mathcal{U}_\psi(l) ~, \label{eq: bounds psi}
\end{align}
where
\begin{align}
    \mathcal{L}_\psi(l)&=\frac{1-\text{VE}_{1}^\mathrm{obs}(l)}{1-\mathcal{L}_2(l)}  ~, \label{eq: LB minimal ass 2 intervals}\\
    \mathcal{U}_\psi(l)&=\frac{1-\text{VE}_{1}^\mathrm{obs}(l)}{1-\mathcal{U}_2(l)} ~.   \label{eq: UB minimal ass 2 intervals}
\end{align}
The estimand $\psi(l)$ can also be expressed as $\psi(l)=(1-\text{VE}_1^\mathrm{challenge}(l))/(1-\text{VE}_2^\mathrm{challenge}(l))$ under Assumption~\ref{ass: no waning placebo 2 intervals}, and corresponds to the following target trial \citep{hernan_target_2022}: Let a group of individuals be randomized to one of two treatment groups on a calendar date $X$. On the same date, all individuals in both  groups are vaccinated.  In one  group, all individuals are isolated against infectious exposures until calendar date $Z$, and then they are challenged through a controlled procedure. In the second group, all individuals are vaccinated and directly challenged with the infectious agent through the same controlled procedure. The outcome of interest is the cumulative incidence of infection during a pre-specified duration of time after calendar date $Z$, for example 2 months. If there are more outcome events under a challenge that \textit{is preceded by} an isolation period after vaccination ($\Delta Y_2^{a=1,e_1=0,e_2=1}$) compared to an immediate challenge after vaccination ($\Delta Y_1^{a=1,e_1=1}$), that is, $\psi<1$, then the vaccine effect has waned. The target trial is similar to the estimand identified by the clinical experiment proposed by \citet{ray_depletion--susceptibles_2020}, which is a contrast of the observed incidence of influenza infection in recently vaccinated individuals versus individuals vaccinated further in the past. 

Finally, under the homogeneity conditions \eqref{eq: no UY}-\eqref{eq: no UE} in Proposition~\ref{prp: homogeneity}, $\psi(l)$ is identified by the naive contrast of cumulative incidences $\psi^\mathrm{obs}(l)=(1-\text{VE}_1^\mathrm{obs}(l))/(1-\text{VE}_2^\mathrm{obs}(l))$.

\section{Estimation\label{sec:maintext estimation}}
Suppose we have access to data for individuals $i\in\{1,\dots,n\}$ consisting of treatment $A_i$, baseline covariates $L_i$, and event times $T_i$ subject to losses to follow-up (censoring), indicated by $C_i\in\{0,1\}$. We assume that individuals are sampled into the study through a procedure such that the random vectors $(A_i,L_i,T_i,C_i)$ are i.i.d.\ \citep{cox_planning_1958}.  Then, for each treatment group $a\in\{0,1\}$, we estimate the conditional cumulative incidence functions, $E[Y_k\mid A=a,L=l]$, at the end of interval $k$ (time $t_k$) by $\widehat \mu_{k,a,l}= 1-\exp(-\widehat \Lambda_{0,a}(t_k))^{r(l;\widehat{\beta}_a )}$ using the Breslow estimator $\widehat \Lambda_{0,a}(t)$ and  estimated coefficients $\widehat\beta_{a}$ that maximize the partial likelihood with respect to the proportional hazards model $\lambda(t\mid A=a,L=l)=\lambda_{0,a}(t)r(l;\beta_a)$ for $t\in[0,t_2]$ \citep{cox_regression_1972}. Here, $\lambda_{0,a}(t)$ denotes the baseline hazard of the infectious outcome at time $t$ in treatment group $A=a$.  The estimator $\widehat\mu_{k,a,l}$ is a standard cumulative incidence estimator \citep{therneau_modeling_2000}, and can easily be implemented using standard statistical software. We give an example in \texttt{R} using the \texttt{survival} package \citep{terry_therneau_package_2021} in the Supplementary Material. In our data example, we assumed that the parametric part of the hazard model is given by $r(l;\beta_a)=\exp(\beta_a l)$. We chose to handle tied event times in the Cox model using the Efron approximation \citep{therneau_modeling_2000}. 
It is also possible to estimate the cumulative incidence function through other frequently used regression models, such as logistic regression, as we discuss in Appendix~F, or additive hazards models \citep{aalen_survival_2008}, to name a few. 

Finally, expressions \eqref{eq: observed VE}, \eqref{eq: VE_challenge_1 and VE_obs_1}-\eqref{eq:U_VE_challenge_2} and \eqref{eq: LB minimal ass 2 intervals}-\eqref{eq: UB minimal ass 2 intervals} motivate the plugin-estimators 
\begin{align*}
    \widehat{\text{VE}}_{1}^\mathrm{obs}(l) &= 1-\frac{\widehat{\mu}_{k=1,a=1,l}}{\widehat{\mu}_{k=1,a=0,l}}~, \\
    \widehat{\text{VE}}_{2}^\mathrm{obs}(l) &= 1-\frac{\widehat{\mu}_{k=2,a=1,l}-\widehat{\mu}_{k=1,a=1,l}}{\widehat{\mu}_{k=2,a=0,l}-\widehat{\mu}_{k=1,a=0,l}}\cdot\frac{1-\widehat{\mu}_{k=1,a=0,l}}{1-\widehat{\mu}_{k=1,a=1,l}}~,  \\
    \widehat{\mathcal{L}}_{2}(l)&=1-\frac{\widehat{\mu}_{k=2,a=1,l}}{\widehat{\mu}_{k=2,a=0,l}-\widehat{\mu}_{k=1,a=0,l}}~,  \\
    \widehat{\mathcal{U}}_{2}(l)&=1-\frac{\widehat{\mu}_{k=2,a=1,l}-\widehat{\mu}_{k=1,a=1,l}}{\widehat{\mu}_{k=2,a=0,l}}~,   \\
    \widehat{\mathcal{L}}_{\psi}(l)
    &=(1-\widehat{\text{VE}}_{1}^\mathrm{obs}(l))/(1- \widehat{\mathcal{L}}_{2}(l))~, \label{eq: L_psi plugin} \\
    \widehat{\mathcal{U}}_{\psi}(l)&=(1-\widehat{\text{VE}}_{1}^\mathrm{obs}(l))/(1- \widehat{\mathcal{U}}_{2}(l))  ~.
\end{align*}
Pointwise confidence intervals  can, e.g., be estimated with individual-level data using non-parametric bootstrap, which we illustrate in Appendix~H using a publicly available synthetic dataset resembling the RTS,S/AS01 malaria vaccine trial \citep{noauthor_phase_2012}, described by \citet{benkeser_estimating_2019}.

Suppose a decision-maker is interested in the lower bound $\mathcal{L}_2(l)$, because they are concerned about the worst-case scenario corresponding to the greatest extent of waning. Then, we propose to use a lower one-sided $95\%$ confidence interval for $\widehat{\mathcal{L}}_2(l)$, as $\mathcal{L}_2(l)$. Conversely, for a decision-maker who is interested in testing whether any waning is present, we propose to use a one-sided upper $95\%$ confidence interval for $\widehat{\mathcal{U}}_2(l)$.
Finally, decision-makers who seek to weigh the lower and upper bounds evenly  may prefer to use a joint confidence for the lower and upper bound \citep{horowitz_nonparametric_2000}.

In Appendix~G, we describe estimators of the bounds $\mathcal{L}_2$ and $\mathcal{U}_2$ using summary data for the number of recorded events and person time at risk, and characterize their asymptotic distribution using the delta method. 
Furthermore, in Appendix~F, we describe estimators of $\text{VE}_2^\mathrm{obs}$ and of bounds of $\text{VE}_2^\mathrm{challenge}$ that use logistic regression to estimate the cumulative incidences, and illustrate the approach with a simulated example in Appendix~I.

\section{Example: BNT162b2 against COVID-19\label{sec:example Pfizer trial}}
We analyzed data from a blinded, placebo controlled vaccine trial described by \citet{thomas_safety_2021}, where individuals were randomized to two doses of the mRNA vaccine BNT162b2 against COVID-19 ($A=1$) or placebo ($A=0$), 21 days apart. Participants were 12 years or older, and were enrolled during a period of time from July 27, 2020 to October 29, 2020 (older than 16) and from October 15, 2020 to January 12, 2021 (aged 12-15), in 152 sites in the United States (130 sites), Argentina (1 site), Brazil (2 sites), South Africa (4 sites), Germany (6 sites) and Turkey (9 sites). By January 12, 2021, there had been 21.94 million cases of COVID-19 in the United States \citep{mathieu_coronavirus_2020}, amounting to roughly 7\% of the US population \citep{US_census_population_nodate}. The study included systematic measures to test participants for infection with COVID-19, with 5 follow-up visits within the first 12 months, and an additional sixth follow-up visit after 24 months \citep[Protocol]{thomas_safety_2021}. Here, participants were questioned about respiratory symptoms. Additionally, they were instructed to report any respiratory symptoms via a telehealth visit after symptom onset.

Vaccine efficacy estimates ($\text{VE}^\mathrm{obs}$) were reported to decrease with time since vaccination \citep{thomas_safety_2021}. In principle, the decrease of $\text{VE}^\mathrm{obs}$ over time could be due to declining protection of the vaccine, or alternatively also due to a higher depletion of susceptible individuals in the placebo group compared to the vaccine group during time interval 1. To distinguish between these two explanations, we conducted inference on \eqref{eq: challenge VE} using publicly available summary data from \citet{thomas_safety_2021}, reported in Table~\ref{tab:Pfizer CI 3}. 
A detailed description of the estimators is given in Appendix~G.

We let $k=1$ denote the time interval from 11 days after dose 1 until 2 months after dose 2, and $k=2$ to denote the time interval from 2 months after dose 2 until 4 months after dose 2 (Figure~\ref{fig: time intervals}). Individuals received the second dose of the vaccine 10 days into interval $k=1$. The estimate $\widehat{\mathrm{VE}}^\mathrm{challenge}_1$ can be interpreted as a conservative estimate of the challenge effect if individuals had been isolated from dose 1 until shortly after dose 2 and then challenged ($e_1=1$), under the assumption that vaccine protection was at its greatest shortly after dose 2. 
We give a detailed argument for this claim in Appendix~J, and present a sensitivity analysis which does not use this assumption in Table~S7 (Appendix~J).
\begin{figure}
    \centering
    \includegraphics{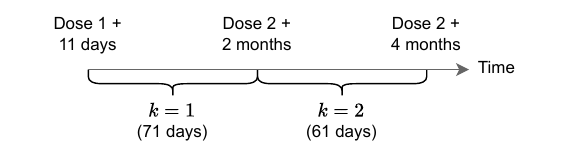}
    \caption{Illustration of time intervals $k=1$ and $k=2$}
    \label{fig: time intervals}
\end{figure}

For a given interval 1, investigators face a trade-off when choosing an appropriate length of interval 2: a longer interval 2 gives narrower bounds \eqref{eq:L_VE_challenge_2}-\eqref{eq:U_VE_challenge_2}, in addition to narrower confidence intervals, as more events are accrued. Thereby, a longer interval 2 can give a more sensitive test of vaccine waning. However, Assumption~\ref{ass:consistency 2 intervals} is more likely to be violated if interval 2 greatly exceeds interval 1 in length, as the quantity of viral particles per exposure may change over time, and the risk of multiple exposures per interval increases with a longer interval 2. This could lead the contrast  of estimates $\widehat{\text{VE}}_1^\mathrm{challenge}$\ vs.\ $\widehat{\text{VE}}_2^\mathrm{challenge}$ to compare different versions of infectious challenges if the lengths of intervals 1 and 2 differ greatly; for instance, $\widehat{\text{VE}}_1^\mathrm{challenge}$ and $\widehat{\text{VE}}_2^\mathrm{challenge}$ may quantify  challenge effects in a target trial where individuals are challenged with larger viral inocula during interval 1 compared to interval 2, and a difference between $\widehat{\text{VE}}_1^\mathrm{challenge}$ and $\widehat{\text{VE}}_2^\mathrm{challenge}$ may therefore not be due to vaccine waning.

\begin{table}[htbp]
  \centering
  \caption{Estimates and 95\% confidence intervals for \eqref{eq: observed VE}, \eqref{eq:L_VE_challenge_2}-\eqref{eq:U_VE_challenge_2} and \eqref{eq: LB minimal ass 2 intervals}-\eqref{eq: UB minimal ass 2 intervals}. Interval 1 ranged from 11 days after dose 1 until 2 months after dose 2 (71 days in total) and interval 2 ranged from 2 months after dose 2 until 4 months after dose 2 (61 days in total). Confidence intervals for the bounds $\mathcal{L}_\bullet,\mathcal{U}_\bullet$ were one-sided, whereas two-sided confidence intervals were used for VE estimates. The point estimates of $\widehat{\text{VE}}_1^{\mathrm{obs}}, \widehat{\mathcal{U}}_2$ and $\widehat{\mathcal{U}}_\psi$ are consistent with waning vaccine protection, but the confidence intervals include the null value of no waning. }
    \begin{tabular}{lr}
         \hline Estimator & \multicolumn{1}{l}{Estimate (95\% CI)}  \\\hline
     $\widehat{\text{VE}}_1^{\text{obs}},\widehat{\text{VE}}_1^{\text{challenge}}$ & $0.95(0.93,0.97)$ \\
    $\widehat{\text{VE}}_2^{\text{obs}}$ & $0.90(0.87,0.93)$ \\
    $\widehat{\mathcal{L}}_2$ & $0.87 (0.84, -)$ \\
    $\widehat{\mathcal{U}}_2$ & $0.94(-,0.95)$ \\
    $\widehat{\mathcal{L}}_\psi$ & $0.36 (0.26, - )$\\
    $\widehat{\mathcal{U}}_\psi$ & $0.81(-,1.27)$
    \end{tabular}%
  \label{tab:Pfizer CI 3}%
\end{table}%

The upper confidence limit of $\widehat{ \mathcal{U}}_2$ was close to  $\widehat{\text{VE}}_1^\mathrm{challenge}$ (Table~\ref{tab:Pfizer CI 3}), although the upper confidence limit of $\widehat{\mathcal{U}}_\psi$ exceeded 1. 
In Table~S6 (Appendix~J), where interval 2 was extended until day 190, the upper confidence limit of $\widehat{ \mathcal{U}}_2$ was close to the lower confidence limit of $\widehat{\text{VE}}_1^\mathrm{challenge}$. 
Additionally, the upper confidence limit of $\widehat{\mathcal{U}}_\psi$ was smaller than 1. In Table~S7, we considered an additional choice of intervals to illustrate possible depletion of susceptible individuals between dose 1 and dose 2 of the vaccine, which gave wide bounds that included the null (no waning). Overall, our analyses suggest waning vaccine protection from interval 1 to interval 2.

Our primary analysis did not condition on any baseline covariates $L$, as individual patient data were not available. To illustrate the use of estimators described in Section~\ref{sec:maintext estimation} with individual-level data, we have included an additional analysis motivated by a malaria vaccine trial in Appendix~H.   As a sensitivity analysis, we conducted subgroup analyses of \citet{thomas_safety_2021} in Appendix~J,  which show that the cumulative incidence of the outcome by the end of follow-up was nearly constant across the following baseline covariates ($L$): age over 65, Charlson Comorbidity Index category $\geq 1$ and obesity, for both vaccine and placebo treatment. 
Therefore, we expect that estimates $\widehat{\mathrm{VE}}_1^\mathrm{obs}(l),\widehat{\mathcal{L}}_2(l),\widehat{\mathcal{U}}_2(l)$ conditional on the baseline covariates $l$ would have been close to the marginal estimates reported in Table~\ref{tab:Pfizer CI 3} for all baseline covariates $l$. Although we cannot guarantee the absence of residual confounding given $L$ ($U_{EY}$ in Figure~\ref{fig:SWIG_waning_2_intervals}(B)), it is plausible that the above choice of baseline covariates $L$ is sufficient to block open backdoor paths between exposure and infection status.

Our main analysis suggests that depletion of susceptible individuals could not alone account for the decline in vaccine efficacy over time.  Large real-world effectiveness studies that established waning of the BNT162b2 vaccine, such as \citet{goldberg_waning_2021} and \citet{levin_waning_2021}, were published in 2021. However, it would have been possible to estimate challenge effects \eqref{eq: VE_challenge_1 and VE_obs_1}-\eqref{eq:U_VE_challenge_2} using preliminary trial data from the BNT162b2 vaccine trial, published already in December 2020 \citep{polack_fernando_p_safety_2020}. This could have provided earlier evidence of vaccine waning. Such analyses could guide future vaccination policies during the window of time before booster trials are available.

In the Supplementary Material, we include \texttt{R} code to simulate a data generating mechanism that reaches the bounds in Theorem~\ref{thm: waning minimal ass 2 intevals} for the observed data distribution. This illustrates that the bounds are sharp, i.e., that the true value of $\text{VE}_2^\mathrm{challenge}$ in the vaccine trial could lie on the bounds $\widehat{\mathcal{L}}_2,\widehat{\mathcal{U}}_2$ under Assumptions~\ref{ass:consistency 2 intervals}-\ref{ass: exposure exchageability 2 intevals}.

\section{Discussion\label{sec:discussion}}

The challenge effect quantifies the vaccine waning that would be observed in a hypothetical challenge trial. Often, investigators only have access to data from a conventional randomized vaccine trial where participants are freely exposed to the infectious agent in the community. We have shown that sharp bounds on the challenge effect can be derived under plausible assumptions, and we illustrate that these bounds can confirm clinically significant waning. 

Our results are broadly applicable to vaccine trials using routinely collected data and can be extended to account for treatment outcome confounding in observational vaccine studies. Furthermore, the bounds on the challenge effect can be estimated using standard statistical methods that are implemented in commonly used software packages. As illustrated in Section~\ref{sec:example Pfizer trial}, the estimators for summary data can also be applied to re-analyze historical data from vaccine studies when individual-level data are not available, for example due to privacy concerns.

The challenge effect makes it possible to distinguish between settings where the observed vaccine efficacy diminishes solely due to depletion of susceptible individuals and settings where the vaccine protection wanes over time, and thereby addresses, and formalizes, an open problem in the analysis of vaccine trials \citep{halloran_study_1997}. The proposed methods can offer new empirical evidence of interest in health policy questions, for example about timing of vaccinations or booster doses.

\section{Acknowledgements}
This work was done while MJ was a PhD student at École Polytechnique Fédérale de Lausanne and was supported by the Swiss National Science Foundation. We would like to thank Julien D. Laurendeau and the anonymous referees for their insightful comments.

\section{Disclosure statement}
The authors report there are no competing interests to declare.

\bibliography{references}
\bibliographystyle{plainnat}

\appendix

\section{Identification under interference \label{app: interference}}

As in \citet{tsiatis_estimating_2022},  \citet{halloran_estimability_1996}, and (implicitly) in most randomized vaccine  trials, we have assumed that interference between the participants in the observed trial data is negligible, such that one participant's outcome does not  depend on another participant's treatment assignment (no interference among the participants). While it is often plausible that interference between the trial participants is negligible, there will often be interference in a setting where a vaccine program is rolled out in a human population. Thus, when applying conventional estimators under i.i.d.\ assumptions to the trial data, we draw valid superpopulation inference in a (fictive) population with potentially limited practical relevance. Yet, we will argue that the interference is not an issue when studying the challenge effect, unlike the usual vaccine efficacy estimand ($\text{VE}^\mathrm{obs}$). This is because the challenge effect is insensitive to the interference that arises in most infectious disease settings when treatments are rolled out.

To be explicit, consider a classical randomized vaccine trial, i.e., a trial without any controlled infectious challenges. 
Then, the vaccine efficacy ($\text{VE}^\mathrm{obs}$) in the (fictive) superpopulation with no interference will not correspond to $\text{VE}^\mathrm{obs}$ in a realistic target population, because there will be interference (e.g.\ herd immunity) in the realistic target population.

Consider now a challenge trial corresponding to \eqref{eq: challenge VE}.  Then, the challenge effect in the (fictive) superpopulation with no interference would be identical to the challenge effect in the (more realistic) target population with interference, because the interference is trivial under an intervention on exposure; indeed, there is no longer any interference when the exposure to the infectious agent is controlled (fixed) for all individuals. In this sense, considering effects under interventions on exposure will often be more practically relevant.

\section{Identification with two time intervals\label{app: identification with 2 time intervals}}

\begin{assumption}[Treatment exchangeability\label{ass: treatment exchageability 2 intevals}]\hfill

For all $a,e_1,e_2\in\{0,1\}$,

\begin{align*}
    &E_1^a, \Delta Y_1^a,\Delta Y_1^{a,e_1}, E_2^{a,e_1},\Delta Y_2^a, \Delta Y_2^{a,e_1,e_2}   \independent A \mid L ~.
\end{align*}
\end{assumption}

\begin{assumption}[Positivity\label{ass: exposure positivity 2 intervals}]
\begin{align*}
P(A =a \mid L) &> 0 \text{ for all } a\in\{0,1\}~ \text{ w.p. 1} ~.
\end{align*} 
\end{assumption}

Assumptions~\ref{ass: treatment exchageability 2 intevals} and \ref{ass: exposure positivity 2 intervals} hold by design when $A$ randomly assigned, for example in an RCT.

\subsection{Proof of Theorem~\ref{thm: waning minimal ass 2 intevals}\label{sec: proof of theorem}}
For the first time interval, 
\begin{align}
    &P(\Delta Y_1=1\mid A=a,L) \\
    \stackrel{\text{Assumption~\ref{ass:consistency 2 intervals}}}{=} &P( \Delta Y_1^{a} =1\mid A=a,L) \notag\\ 
    \stackrel{\text{Assumption~\ref{ass: exposure necessity 2 intervals}}}{=} &P(\Delta Y_1^{a}=1,E_1^a=1\mid A=a,L) \notag\\
 \stackrel{\text{Assumption~\ref{ass:consistency 2 intervals} }}{=} &P(\Delta Y_1^{a,e_1=1}=1,E_1^a=1\mid A=a,L)  \notag\\
 \stackrel{\text{Assumption~\ref{ass: exposure exchageability 2 intevals}}}{=} &P(\Delta Y_1^{a,e_1=1}=1\mid A=a,L)P(E_1^a=1\mid A=a,L) \notag\\
 \stackrel{\text{Assumptions~\ref{ass: treatment exchageability 2 intevals},\ref{ass: exposure positivity 2 intervals}}}{=} &P(\Delta Y_1^{a,e_1=1}=1\mid L)P(E_1^a=1\mid L)~. \label{eq: interval 1 a}
\end{align}
Taking the ratio of \eqref{eq: interval 1 a} for $a=1$ vs. $a=0$, and using Assumption~\ref{ass:no effect on exposure 2 intervals} to cancel the ratio of exposure probabilities gives
\begin{align*}
    \frac{E[\Delta Y_1^{a=1,e_1=1}\mid L]}{E[\Delta Y_1^{a=0,e_1=1}\mid L]} = \frac{E[\Delta Y_1\mid A=1,L]}{E[\Delta Y_1\mid A=0,L]}~,
\end{align*}
and therefore $\text{VE}_1^{\mathrm{challenge}}=\text{VE}_1^\mathrm{obs}$.  

For the second time interval,
\begin{align}
    &P(\Delta Y_2^{e_1=0}=1\mid A=a,L) \notag\\
    \stackrel{\text{Assumption~\ref{ass:consistency 2 intervals}}}{=}
    &P(\Delta Y_2^{a,e_1=0}=1\mid A=a,L) \notag\\
    \stackrel{\text{Assumption~\ref{ass: exposure necessity 2 intervals}}}{=} &P(E_2^{a,e_1=0}=1,\Delta Y_2^{a,e_1=0}=1\mid A=a,L) \notag\\
    \stackrel{\text{Assumption~\ref{ass:consistency 2 intervals}}}{=}
    &P(E_2^{a,e_1=0}=1,\Delta Y_2^{a,e_1=0,e_2=1}=1 \mid A=a,L) \label{eq: cross world intermediate step} \\
    \stackrel{\text{Assumption~\ref{ass: exposure exchageability 2 intevals}}}{=} &P(\Delta Y_2^{a,e_1=0,e_2=1}=1 \mid A=a,L)P(E_2^{a,e_1=0}=1\mid A=a,L) \notag\\
    \stackrel{\text{Assumptions~\ref{ass: treatment exchageability 2 intevals},\ref{ass: exposure positivity 2 intervals}}}{=} &P(\Delta Y_2^{a,e_1=0,e_2=1}=1 \mid L)P(E_2^{a,e_1=0}=1\mid L) ~. \label{eq: identified single world inequality}
\end{align}

Taking the ratio of \eqref{eq: identified single world inequality} for $a=1$ vs. $a=0$  and using Assumption~\ref{ass: exclusion 2 intervals} gives
\begin{align*}
    &\frac{E[\Delta Y_2 \mid A=1,L]}{E[\Delta Y_1 + \Delta Y_2\mid A=0,L]} \\
    &\quad \leq \frac{E[\Delta Y_2^{a=1,e_1=0,e_2=1}\mid L]}{E[\Delta Y_2^{a=0,e_1=0,e_2=1}\mid L]}\cdot\frac{P(E_2^{a=1,e_1=0}=1\mid L)}{P(E_2^{a=0,e_1=0}=1\mid L)} \\
    &\qquad \leq \frac{E[\Delta Y_1 + \Delta Y_2\mid A=1,L]}{E[\Delta Y_2\mid A=0,L]} ~.
\end{align*}
Finally, using Assumption~\ref{ass:no effect on exposure 2 intervals} to cancel the ratio of exposure probabilities, we obtain
\begin{align*}
    \frac{E[\Delta Y_2 \mid A=1,L]}{E[\Delta Y_1 + \Delta Y_2\mid A=0,L]} \leq \frac{E[\Delta Y_2^{a=1,e_1=0,e_2=1}\mid L]}{E[\Delta Y_2^{a=0,e_1=0,e_2=1}\mid L]} \leq \frac{E[\Delta Y_1 + \Delta Y_2\mid A=1,L]}{E[\Delta Y_2\mid A=0,L]} ~.
\end{align*}

To establish sharpness of the bounds \eqref{eq:L_VE_challenge_2} and \eqref{eq:U_VE_challenge_2}, it is sufficient to show that there exists a counterfactual data generating mechanism that attains the bounds. An example that attains the lower bound $\mathcal{L}_2$ is given below. We define  $p_{k,a,l}=E[\Delta Y_k\mid A=a,L=l]$ for all $k,a,l$ and denote the observed laws of $A,L$ by $P_A,P_L$ respectively.
\begin{dgm} \label{dgm: attaining bounds} \hfill 
    \begin{enumerate}[(I)]
    \item $L\sim P_L$
    \item $A\sim P_A$
    \item $U_Y\sim \mathrm{Unif}[0,1]$
    \item $E_1^{a}=E_2^{a,e_1}=1$ for all $a,e_1$
    \item \begin{enumerate}
        \item $\Delta Y_1^{a,e_1=1}=I(U_Y\leq p_{k=1,a,L})$ for all $a$
        \item $\Delta Y_1^{a,e_1=0}=0$ for all $a$
    \end{enumerate}
    \item \begin{enumerate}[(a)]
        \item $\Delta Y_2^{a=1,e_1=0,e_2=1}=I(U_Y\leq p_{k=1,a=1,L}+p_{k=2,a=1,L})$
        \item $\Delta Y_2^{a=0,e_1=0,e_2=1}=I(p_{k=1,a=0,L} < U_Y \leq p_{k=1,a=0,L}+p_{k=2,a=0,L})$
        \item $\Delta Y_2^{a,e_1=1,e_2=1}=\Delta Y_2^{a,e_1=0,e_2=1}I(\Delta Y_1^{a,e_1=1}=0)$ for all $a$
        \item $\Delta Y_2^{a,e_1,e_2=0}=0$ for all $a,e_1$
    \end{enumerate}
\end{enumerate}
All other counterfactuals are understood to be recursively related to (I)-(VI) through Definition~1 of \citet{thomas_s_richardson_single_2013}.
\end{dgm}

The data generating mechanism generalizes the example in Figure~\ref{fig:weak_exclusion_restriction}. It is straightforward to verify that Data generating mechanism~\ref{dgm: attaining bounds} satisfies Assumptions~\ref{ass:consistency 2 intervals}-\ref{ass: exposure exchageability 2 intevals} and \ref{ass: treatment exchageability 2 intevals}-\ref{ass: exposure positivity 2 intervals}. Furthermore, the lower bound $\mathcal{L}_2$ is attained since
\begin{align*}
    &E[\Delta Y_2^{a=1,e_1=0,e_2=1}\mid L] \\
   =&E[I(U_Y\leq p_{k=1,a=1,L}+p_{k=2,a=1,L})\mid L] \\
   =& p_{k=1,a=1,L}+p_{k=2,a=1,L} \\
   =&E[\Delta Y_1+\Delta Y_2\mid A=1,L] ~,
\end{align*}
and
\begin{align*}
    &E[\Delta Y_2^{a=0,e_1=0,e_2=1}\mid L] \\
    =&E[I(p_{k=1,a=0,L}<U_Y\leq p_{k=1,a=0,L}+p_{k=2,a=0,L})\mid L] \\
    =&p_{k=2,a=0,L} \\
    =&E[\Delta Y_2\mid A=0,L] ~.
\end{align*}
Additionally, Data generating mechanism~\ref{dgm: attaining bounds} is consistent with the observed cumulative incidences $E[\Delta Y_k\mid A=a,L]$ for all $a,k$, since
\begin{align*}
    &E[\Delta Y_1\mid A=a,L] \\
    \stackrel{\text{Assumption~\ref{ass:consistency 2 intervals}}}{=} &E[\Delta Y_1^a \mid A=a,L] \\
    \stackrel{\text{Assumption~\ref{ass:consistency 2 intervals}}}{=}&E[\Delta Y_1^{a,e_1=1} \mid A=a,L] \text{ since } E_1^a=1 \text{ w.p. 1} \\
    =&E[I(U_{Y}\leq p_{k=1,a,L}) \mid A=a,L]  \\
    =& p_{k=1,a,L}~,
\end{align*}
and
\begin{align*}
    &E[\Delta Y_2\mid A=a,L] \\
    \stackrel{\text{Assumption~\ref{ass:consistency 2 intervals}}}{=} &E[\Delta Y_2^a \mid A=a,L] \\
    \stackrel{\text{Assumption~\ref{ass:consistency 2 intervals}}}{=}&E[\Delta Y_2^{a,e_1=1,e_2=1} \mid A=a,L] \text{ since } E_1^a=E_2^{a,e_1}=1 \text{ w.p. 1} \\
    =&E[I(p_{k=1,a,L}< U_{Y}\leq p_{k=1,a,L} +p_{k=2,a,L}) \mid A=a,L]  \\
    =& p_{k=2,a,L}~.
\end{align*}

Finally, the upper bound $\mathcal{U}_2$ can be reached by permuting treatment groups $a=0\leftrightarrow a=1$ in (VI) (a) and (b) of Data generating mechanism~\ref{dgm: attaining bounds}. An implementation of Data generating mechanism~\ref{dgm: attaining bounds} in \texttt{R} for a simplified setting with $A\sim \mathrm{Ber}(1/2)$ and without $L$ is given in the Supplementary Material.

\subsection{Proof of Proposition~\ref{prp: homogeneity}}
We have that 
\begin{align}
    &P(\Delta Y_2=1\mid \Delta Y_1=0,A=a,L) \notag\\
    \stackrel{\text{Assumption~\ref{ass:consistency 2 intervals}}}{=} &P(\Delta Y_2^a=1\mid \Delta Y_1^a=0,A=a,L) \notag\\
    \stackrel{\text{Assumption~\ref{ass: exposure necessity 2 intervals}}}{=}& P(\Delta Y_2^{a}=1,E_2^a=1\mid \Delta Y_1^{a}=0, A=a,L) \notag\\
    =&P(\Delta Y_2^{a}=1\mid E_2^a=1, \Delta Y_1^a=0,A=a,L)P(E_2^a=1\mid \Delta Y_1^a=0,A=a,L) \notag\\
    =&P(\Delta Y_2^a=1\mid E_1^a=0,E_2^a=1,\Delta Y_1^a=0,A=a,L)\notag\\
    &\quad\times P(E_2^a=1\mid E_1^a=0,\Delta Y_1^a=0,A=a,L) W_{Y,a}W_{E,a} \notag\\
    \stackrel{\text{Assumption~\ref{ass: exposure necessity 2 intervals}}}{=}&P(\Delta Y_2^a=1\mid E_1^a=0,E_2^a=1,A=a,L)P(E_2^a=1\mid E_1^a=0,A=a,L) W_{Y,a}W_{E,a} \notag\\
    =&P(\Delta Y_2^a=1,E_2^a=1\mid E_1^a=0,A=a,L) W_{Y,a}W_{E,a} \notag\\
    \stackrel{\text{Assumption~\ref{ass: exposure necessity 2 intervals}}}{=}&P(\Delta Y_2^a=1\mid E_1^a=0,A=a,L) W_{Y,a}W_{E,a} \notag\\
    \stackrel{\text{Assumption~\ref{ass:consistency 2 intervals}}}{=} 
    &P(\Delta Y_2^{a,e_1=0}=1\mid E_1^a=0,A=a,L)W_{Y,a}W_{E,a}\notag\\
    \stackrel{\eqref{eq: homogeneous_exch}}{=}&P(\Delta Y_2^{a,e_1=0}=1\mid A=a,L)W_{Y,a}W_{E,a} \notag\\
    \stackrel{\text{Assumption~\ref{ass:consistency 2 intervals}}}{=}&P(\Delta Y_2^{e_1=0}=1\mid A=a,L)W_{Y,a}W_{E,a}~, \notag
\end{align}
where we have defined the weights
\begin{align*}
    W_{Y,a} &= \frac{P(\Delta Y_2^a=1\mid E_2^a=1,\Delta Y_1^a=0,A=a,L)}{P(\Delta Y_2^a=1\mid E_2^a=1,\Delta Y_1^a=0, E_1^a=0,A=a,L)} ~,\\
    W_{E,a}&= \frac{P(E_2^a=1\mid \Delta Y_1^a=0,A=a,L)}{P(E_2^a=1\mid E_1^a=0,\Delta Y_1^a=0,A=a,L)} ~.
\end{align*}
When \eqref{eq: no UY}-\eqref{eq: no UE} hold, then $W_{Y,a}=W_{E,a}=1$, which implies \eqref{eq: homogeneity in Y and E}. Taking the ratio of \eqref{eq: identified single world inequality} for $a=1$ vs. $a=0$, and using \eqref{eq: homogeneity in Y and E},
\begin{align*}
    &\frac{E[\Delta Y_2^{a=1,e_1=0,e_2=1}\mid L]}{E[\Delta Y_2^{a=0,e_1=0,e_2=1}\mid L]}=\frac{P(\Delta Y_2=1\mid \Delta Y_1=0, A=1,L)}{P(\Delta Y_2=1\mid \Delta Y_1=0, A=0,L)}  \cdot \frac{P(E_2^{a=0,e_1=0}=1\mid L)}{P(E_2^{a=1,e_1=0}=1\mid L)} \\
   =  &\frac{P(\Delta Y_2=1\mid \Delta Y_1=0, A=1,L)}{P(\Delta Y_2=1\mid \Delta Y_1=0, A=0,L)} ~,
\end{align*}
where we have used Assumption~\ref{ass:no effect on exposure 2 intervals} to cancel the ratio of exposure probabilities in the final line.

\subsubsection{Proof that \eqref{eq: homogeneity in Y and E} implies Assumption~\ref{ass: exclusion 2 intervals}}
Expressing $E[\Delta Y_1+\Delta Y_2\mid A=a,L]$ as the convex combination 
\begin{align*}
    P(\Delta Y_1=1\mid A=a,L)\cdot 1 + (1-P(\Delta Y_1=1\mid A=a,L))E[\Delta Y_2\mid \Delta Y_1=0, A=a,L]
\end{align*}
implies that $E[\Delta Y_2\mid \Delta Y_1=0,A=a,L]\leq E[\Delta Y_1+\Delta Y_2\mid A=a,L]$. Next, writing $E[\Delta Y_2\mid \Delta Y_1=0,A=a,L]$ as $E[\Delta Y_2\mid A=a,L]/(P(\Delta Y_1=0,\mid A=a,L))$ gives $E[\Delta Y_2\mid \Delta Y_1=0,A=a,L]\geq E[\Delta Y_2\mid A=a,L]$.

\section{Motivation for Assumption~\ref{ass: exclusion 2 intervals}\label{app: motivating exclusion restriction}}

In this section we assume a conventional causal model where all nodes are intervenable and where counterfactuals are defined recursively, according to Definition~1 in \citet{thomas_s_richardson_single_2013}. In particular, suppose that it is possible to intervene on $\Delta Y_1$, for example through a form of post exposure treatment that prevents the infection from developing.  An example of such an intervention is antiretroviral post-exposure prophylaxis (PEP) for HIV \citep{dehaan_pep_2022}. Next, assume the population level exclusion restriction
\begin{align}
    P(\Delta Y_2^{a,e_1=1,\Delta y_1=0}=1\mid L ) = P(\Delta Y_2^{a,e_1=0,\Delta y_1=0}=1\mid L ) \text{ w.p. 1}~, \label{eq: exclusion Y intervention}
\end{align}
which can be violated by arrows $E_1\rightarrow E_2$ or $E_1\rightarrow \Delta Y_2$ in Figure~\ref{fig:SWIG_waning_2_intervals}(A). Then,
\begin{align*}
    &P(\Delta Y_2^{a}=1\mid L) \\
    = &P(\Delta Y_1^a=0,\Delta Y_2^a=1 \mid L) \\
    = &P(\Delta Y_1^a=0,\Delta Y_2^{a,\Delta Y_1^a}=1 \mid L) \\
    =&P(\Delta Y_1^a=0,\Delta Y_2^{a,\Delta y_1=0}=1 \mid L) \\
    =&P(\Delta Y_2^{a,\Delta y_1=0}=1\mid L) \\
    &\quad - P(\Delta Y_1^a=1,\Delta Y_2^{a,\Delta y_1=0}=1 \mid L) ~.
\end{align*}
We have used the recursive definition of counterfactuals (Definition~1 in \citet{thomas_s_richardson_single_2013}) in the third line. 
Using the fact that 
\begin{align*}
    0\leq P(\Delta Y_1^a=1, \Delta Y_2^{a,\Delta y_1=0}=1\mid L) \leq P(\Delta Y_1^a=1\mid L) ~,
\end{align*}
it follows that
\begin{align*}
    P(\Delta Y_2^a=1\mid L) \leq P(\Delta Y_2^{a,\Delta y_1=0}=1\mid L) \leq P(\Delta Y_1^a=1\mid L)+P(\Delta Y_2^a=1\mid L) ~.
\end{align*}
Finally, we obtain \eqref{eq: sufficient single-world inequality} from the fact that and $ P(\Delta Y_k^a=1\mid L)= P(\Delta Y_k=1\mid A=a,L)$ for $k\in\{1,2\}$ by Assumptions~\ref{ass:consistency 2 intervals} and \ref{ass: treatment exchageability 2 intevals}-\ref{ass: exposure positivity 2 intervals}, and
\begin{align*}
    &P(\Delta Y_2^{a,\Delta y_1=0}=1\mid L) \\
    = &P(\Delta Y_2^{a,E_1^a,\Delta y_1=0}=1\mid L) \\
    \stackrel{\eqref{eq: exclusion Y intervention}}{=} &P(\Delta Y_2^{a,e_1=0,\Delta y_1=0}=1\mid L) \\
    =&P(\Delta Y_2^{a,e_1=0,\Delta Y_1^{a,e_1=0}}=1\mid L) \\
    = &P(\Delta Y_2^{a,e_1=0}=1\mid L) ~.
\end{align*}
The penultimate line used the fact that $P(\Delta Y_1^{a,e_1=0}=0\mid L)=1$, which holds because
\begin{align*}
    &P(\Delta Y_1^{a,e_1=0}=0\mid L) \\
    \stackrel{\text{Assumptions~\ref{ass:consistency 2 intervals},\ref{ass: treatment exchageability 2 intevals},\ref{ass: exposure positivity 2 intervals}}}{=}&P(\Delta Y_1^{a,e_1=0}=0\mid A=a,L) \\
    \stackrel{\text{Assumption~\ref{ass: exposure exchageability 2 intevals}}}{=}& P(\Delta Y_1^{a,e_1=0}=0\mid E_1^a=0,A=a,L) \\
    \stackrel{\text{Assumption~\ref{ass:consistency 2 intervals}}}{=} &P(\Delta Y_1^{a}=0\mid E_1^a=0,A=a,L) \\
    \stackrel{\text{Assumption~\ref{ass: exposure necessity 2 intervals}}}{=} &1~.
\end{align*}

\subsection{Alternative motivation using cross world assumptions}
Instead of \eqref{eq: exclusion Y intervention},  suppose the following cross-world equality holds:
    \begin{align}
        P(\Delta Y_2^{a,e_1=0}=1\mid \Delta Y_1^a=0,L) &=P(\Delta Y_2^{a}=1\mid \Delta Y_1^a=0,L) \text{ w.p. 1}~. \label{eq:cross world 2 intervals i} 
    \end{align}
The equality \eqref{eq:cross world 2 intervals i} is motivated by the following: the infectious outcome during time interval 2 would be the same in those who were naturally uninfected, regardless of whether or not they would have been isolated during time interval 1.  
However, this justification may be deceptively simple: the equality \eqref{eq:cross world 2 intervals i} is difficult to justify in principle because it involves a cross-world quantity on the left hand side, which is difficult to interpret. Therefore, although \eqref{eq:cross world 2 intervals i} may provide intuition for \eqref{eq: sufficient single-world inequality}, we do not endorse the assumption as a sufficient justification for \eqref{eq: sufficient single-world inequality}.

Expression \eqref{eq:cross world 2 intervals i} implies the single world inequality \eqref{eq: sufficient single-world inequality}.
To see this, we have from the laws of probability that
\begin{align*}
    &P(\Delta Y_2^{a,e_1=0}=1\mid L)\\
    = &P(\Delta Y_2^{a,e_1=0}=1\mid \Delta Y_1^a=0, L)P(\Delta Y_1^a=0\mid L)\\
    &\quad + P(\Delta Y_2^{a,e_1=0}=1\mid \Delta Y_1^a=1, L)P(\Delta Y_1^a=1\mid L) ~.
\end{align*}
The inequality \eqref{eq: sufficient single-world inequality} follows from substituting \eqref{eq:cross world 2 intervals i} on the right hand side, and using the inequality $0\leq P(\Delta Y_2^{a,e_1=0}=1\mid \Delta Y_1^a=1, L)\leq 1$.

\section{Identification with multiple versions of exposure\label{app: multiple treatment versions}}
\subsection{Identification assumptions}

In this section, we establish conditions that allow identification and hypothesis testing of vaccine waning in the presence of multiple versions of treatment.  Suppose that multiple versions of the observed exposure $E_k=1$ are present in the observed data. For example, among individuals with $E_k=1$, there may be  subgroups that are exposed 1) \textit{more than} once during interval $k$, 2) to larger (or smaller) infectious inocula per infectious exposure or 3) at different anatomical barriers, for example gastrointestinal versus respiratory mucosa. However, the counterfactuals $\Delta Y_1^{a,e_1=1}$ and $\Delta Y_2^{a,e_1=0,e_2=1}$ refer to potential outcomes under a particular controlled (binary) exposure $e_k=1$, e.g.\ by intranasal challenge with a \textit{particular} quantity of infectious inoculum, in keeping with Section~\ref{sec:interventions on exposure}. 
We denote the exposure version (quantity of the controlled infectious inoculum) by $q$. 
Throughout this section we will continue to assume that there is only one version of \textit{not} being exposed, such that Assumption~\ref{ass:consistency 2 intervals} holds whenever $e_1=0$ and $e_2=0$.

To accommodate multiple versions of the observed exposure $E_k=1$, we will consider a modified version $G$ of the vaccine trial described in Section~\ref{sec:observed data}, motivated by \citet{vanderweele_causal_2013} and \citet {vanderweele_constructed_2022}. In the modified trial $G$ (Figure~\ref{fig:DAGs_multiple_versions}), individuals are first randomized to vaccine versus placebo, $A\in\{0,1\}$, and then to a version of infectious exposure $Q_1,Q_2\in\mathcal{Q}$ during intervals 1 and 2 respectively. The support $\mathcal{Q}$ contains a collection of well-defined controlled procedures to expose individuals to infectious inocula in different ways; for example by intranasal challenge with a pipette containing a random quantity $Q_k$ of infectious inoculum, drawn from a pre-specified distribution. As before, we assume that there is only one version of $Q_k$ where individuals are \textit{not} exposed to the infectious agent, and denote this by $Q_k=0$. 
In the modified trial $G$, we define the exposure status $E_k$ to be a coarse-grained version of $Q_k$: for $k\in\{1,2\}$, let 
\begin{align}
    E_k=I(Q_k\neq 0)  \text{ and } E_2^{q_1=0}=I(Q_2^{q_1=0}\neq 0) ~.\label{eq: Ek definition}
\end{align}

To establish a relation between the modified trial $G$ and the original vaccine trial, we introduce the following assumption.
\begin{assumption}[Equivalence of the modified trial $G$\label{ass:equivalence_modified_trial}]
    The modified trial $G$ is conducted in an identical population to the trial in Section~\ref{sec:observed data}, with a choice of $\mathcal{Q}$ and randomization rule for $A,Q_1,Q_2$ such that
    \begin{align}
        P_G(A,L,E_1,\Delta Y_1,E_2,\Delta Y_2)=P(A,L,E_1,\Delta Y_1,E_2,\Delta Y_2)~,\label{eq:margin1} \\
        P_G(A,L,E_1,\Delta Y_1^{q_1=0},E_2^{q_1=0},\Delta Y_2^{q_1=0})=P(A,L,E_1,\Delta Y_1^{e_1=0},E_2^{e_1=0},\Delta Y_2^{e_1=0})~, \label{eq:margin2}
    \end{align}
    where right hand side is the distribution of the trial in Section~\ref{sec:observed data}.
\end{assumption}
Importantly, the original trial in Section~\ref{sec:observed data} and the modified trial $G$ are not necessarily identical, but we require that Assumption~\ref{ass:equivalence_modified_trial} holds. The equivalence assumption can fail if the controlled infectious exposure versions in $\mathcal{Q}$ are not representative of the infectious exposures in the original trial, meaning that there does not exist any randomization rules for $A,Q_1,Q_2$ for controlled exposures $\mathcal{Q}$ in the trial $G$  such that $P_G$ satisfies \eqref{eq:margin1}-\eqref{eq:margin2}.

\begin{figure}
    \centering
    \begin{tikzpicture}
\tikzset{line width=1.5pt, outer sep=0pt,
ell/.style={draw,fill=white, inner sep=2pt,
line width=1.5pt},
swig vsplit={gap=5pt,
inner line width right=0.5pt}};
\node[name=A,ell,  shape=ellipse] at (3,0) {$A$};
\node[name=Q1,ell,  shape=ellipse] at (6,0) {$Q_1$};
\node[name=Q2,ell,  shape=ellipse] at (12,0) {$Q_2$};
\node[name=Y1,ell,  shape=ellipse] at (6,2) {$\Delta Y_1$};
\node[name=Y2,ell,  shape=ellipse] at (12,2) {$\Delta Y_2$};
\node[name=L,ell,  shape=ellipse] at (3,2) {$L$};
\node[name=UY,ell,  shape=ellipse] at (9,4) {$U_{Y}$};
\node[name=UE,ell,  shape=ellipse] at (9,-2) {$U_{E}$};
\begin{scope}[transparency group, opacity=0.3] 
    \path [->,>={Stealth[black]}] (A) edge (Y1);
    \path [->,>={Stealth[black]}] (A) edge (Y2);
    \path[->,>={Stealth[black]}]  (Q1) edge  (Y1);
    \path[->,>={Stealth[black]}]  (Q2) edge  (Y2);
    \path [->,>={Stealth[black]}] (Y1) edge (Y2);
    \path [->,>={Stealth[black]}] (Y1) edge (Q2);
    \path [->,>={Stealth[black]}] (L) edge (Q1);
    \path [->,>={Stealth[black]}] (L) edge (Q2);
    \path [->,>={Stealth[black]}] (L) edge (Y1);
    \path [->,>={Stealth[black]}] (L) edge[bend left=20] (Y2);
    \path[->,>={Stealth[black]}]  (UY) edge[black] (Y1);
    \path[->,>={Stealth[black]}]  (UY) edge[black] (Y2);
    \path[->,>={Stealth[black]}]  (UE) edge[black] (Q2);
    \path[->,>={Stealth[black]}]  (UE) edge[black] (Q1);
\end{scope}
\begin{scope}[>={Stealth[black]},
              every edge/.style={draw=black,very thick}]
    
\end{scope}
\end{tikzpicture}

\caption{Modified trial $G$ with exposure versions $Q_1$ and $Q_2$}
    \label{fig:DAGs_multiple_versions}
\end{figure}
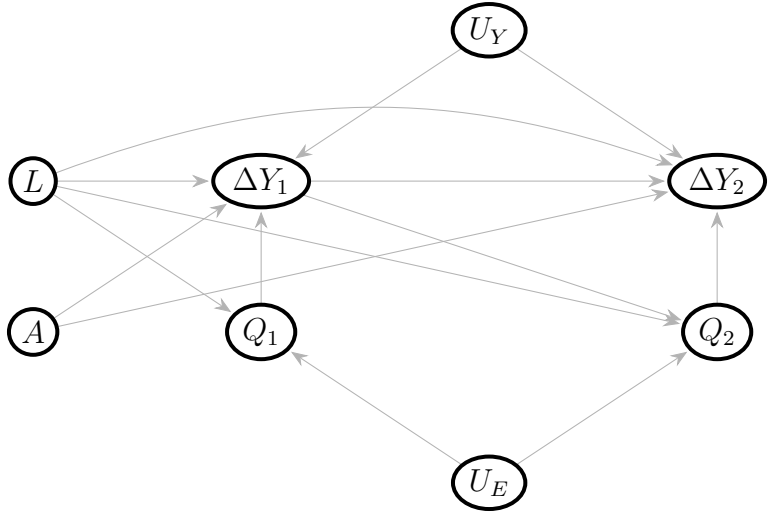

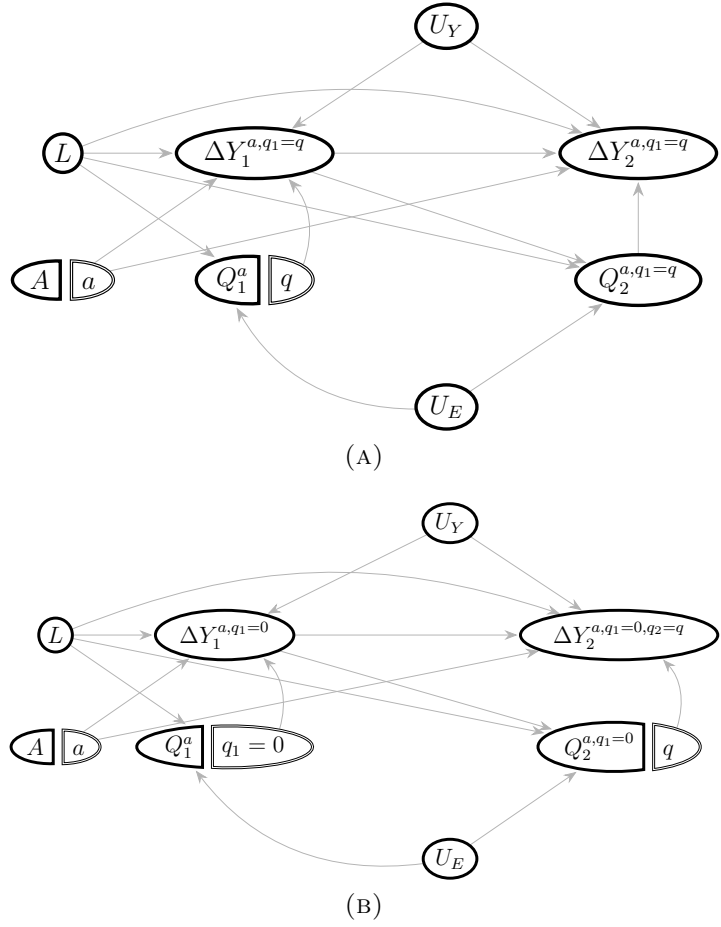
\begin{figure}
    \centering
\subfloat[]{
 \resizebox{0.6\columnwidth}{!}{
    \begin{tikzpicture}
\tikzset{line width=1.5pt, outer sep=0pt,
ell/.style={draw,fill=white, inner sep=2pt,
line width=1.5pt},
swig vsplit={gap=5pt,
inner line width right=0.5pt}};
\node[name=A,shape=swig vsplit] at (3,0){
\nodepart{left}{$A$}
\nodepart{right}{$a$} };
\node[name=Q1,shape=swig vsplit] at (6,0){
\nodepart{left}{$Q_1^{a}$}
\nodepart{right}{$q$} };
\node[name=Q2,ell,  shape=ellipse] at (12,0) {$Q_2^{a,q_1=q}$};
\node[name=Y1,ell,  shape=ellipse] at (6,2) {$\Delta Y_1^{a,q_1=q}$};
\node[name=Y2,ell,  shape=ellipse] at (12,2) {$\Delta Y_2^{a,q_1=q}$};
\node[name=L,ell,  shape=ellipse] at (3,2) {$L$};
\node[name=UY,ell,  shape=ellipse] at (9,4) {$U_{Y}$};
\node[name=UE,ell,  shape=ellipse] at (9,-2) {$U_{E}$};
\begin{scope}[transparency group, opacity=0.3] 
    \path [->,>={Stealth[black]}] (A) edge (Y1);
    \path [->,>={Stealth[black]}] (A) edge (Y2);
    \path[->,>={Stealth[black]}]  (Q1.20) edge[bend right]  (Y1);
    \path[->,>={Stealth[black]}]  (Q2) edge  (Y2);
    \path [->,>={Stealth[black]}] (Y1) edge (Y2);
    \path [->,>={Stealth[black]}] (Y1) edge (Q2);
    \path [->,>={Stealth[black]}] (L) edge (Q1);
    \path [->,>={Stealth[black]}] (L) edge (Q2);
    \path [->,>={Stealth[black]}] (L) edge (Y1);
    \path [->,>={Stealth[black]}] (L) edge[bend left=20] (Y2);
    \path[->,>={Stealth[black]}]  (UY) edge[black] (Y1);
    \path[->,>={Stealth[black]}]  (UY) edge[black] (Y2);
    \path[->,>={Stealth[black]}]  (UE) edge[black] (Q2);
    \path[->,>={Stealth[black]}]  (UE) edge[black, bend left] (Q1.240);
\end{scope}
\begin{scope}[>={Stealth[black]},
              every edge/.style={draw=black,very thick}]
    
\end{scope}
\end{tikzpicture}
 }
}

\subfloat[]{
 \resizebox{0.6\columnwidth}{!}{
    \begin{tikzpicture}
\tikzset{line width=1.5pt, outer sep=0pt,
ell/.style={draw,fill=white, inner sep=2pt,
line width=1.5pt},
swig vsplit={gap=5pt,
inner line width right=0.5pt}};
\node[name=A,shape=swig vsplit] at (3,0){
\nodepart{left}{$A$}
\nodepart{right}{$a$} };
\node[name=Q1,shape=swig vsplit] at (6,0){
\nodepart{left}{$Q_1^{a}$}
\nodepart{right}{$q_1=0$} };
\node[name=Q2,shape=swig vsplit] at (13,0){
\nodepart{left}{$Q_2^{a,q_1=0}$}
\nodepart{right}{$q$} };
\node[name=Y1,ell,  shape=ellipse] at (6,2) {$\Delta Y_1^{a,q_1=0}$};
\node[name=Y2,ell,  shape=ellipse] at (13,2) {$\Delta Y_2^{a,q_1=0,q_2=q}$};
\node[name=L,ell,  shape=ellipse] at (3,2) {$L$};
\node[name=UY,ell,  shape=ellipse] at (10,4) {$U_{Y}$};
\node[name=UE,ell,  shape=ellipse] at (10,-2) {$U_{E}$};
\begin{scope}[transparency group, opacity=0.3] 
    \path [->,>={Stealth[black]}] (A) edge (Y1);
    \path [->,>={Stealth[black]}] (A) edge (Y2);
    \path[->,>={Stealth[black]}]  (Q1.20) edge[bend right]  (Y1);
    \path[->,>={Stealth[black]}]  (Q2.20) edge[bend right]  (Y2);
    \path [->,>={Stealth[black]}] (Y1) edge (Y2);
    \path [->,>={Stealth[black]}] (Y1) edge (Q2);
    \path [->,>={Stealth[black]}] (L) edge (Q1);
    \path [->,>={Stealth[black]}] (L) edge (Q2);
    \path [->,>={Stealth[black]}] (L) edge (Y1);
    \path [->,>={Stealth[black]}] (L) edge[bend left=20] (Y2);
    \path[->,>={Stealth[black]}]  (UY) edge[black] (Y1);
    \path[->,>={Stealth[black]}]  (UY) edge[black] (Y2);
    \path[->,>={Stealth[black]}]  (UE) edge[black] (Q2);
    \path[->,>={Stealth[black]}]  (UE) edge[black, bend left] (Q1.220);
\end{scope}
\begin{scope}[>={Stealth[black]},
              every edge/.style={draw=black,very thick}]
    
\end{scope}
\end{tikzpicture}
 }
}

\caption{Interventions on exposure to the infectious agent in the modified trial $G$ during  interval 1 (A) and  interval 2 (B)}
    \label{fig:SWIGs_multiple_versions}
\end{figure}

In this section, our aim is to identify
\begin{align*}
    \text{VE}_1^\mathrm{challenge}(l,q)&=1-\frac{E_G[\Delta Y_1^{a=1,q_1=q}\mid L=l]}{E_G[\Delta Y_1^{a=0,q_1=q}\mid L=l]} ~, \\
    \text{VE}_2^\mathrm{challenge}(l,q)&=1-\frac{E_G[\Delta Y_2^{a=1,q_1=0,q_2=q}\mid L=l]}{E_G[\Delta Y_2^{a=0,q_1=0,q_2=q}\mid L=l]} ~,
\end{align*}
the challenge effect under infectious exposure $q$ in trial $G$ (Figure~\ref{fig:SWIGs_multiple_versions}), in terms of the observed distribution $P(A,L,\Delta Y_1,\Delta Y_2)$ in the original trial. To this end, we introduce the following assumptions.

\begin{assumption}[Consistency (multiple exposure versions)\label{ass:consistency_multiple_versions}]
    We assume that interventions on treatment $A$ and exposures $Q_1,Q_2$ are well-defined such that the following consistency conditions hold in trial $G$ for all $a\in\{0,1\}$ and $q_1,q_2\in\mathcal{Q}$:
    \begin{enumerate}[(i)]
        \item $\text{ if }    A=a \text{ then } Q_1=Q_1^a, Q_2^{q_1=0}= Q_2^{a,q_1=0}, \Delta Y_2^{q_1=0}=\Delta Y_2^{a,q_1=0}$~,
        \item $\text{ if }   A=a,Q_1=q_1  \text{ then } \Delta Y_1=\Delta Y_1^{a,q_1}$~,
        \item $\text{ if } A=a,Q_2^{q_1=0}=q_2  \text{ then } \Delta Y_2^{q_1=0}=\Delta Y_2^{a,q_1=0,q_2}$~.
    \end{enumerate}
\end{assumption}
\begin{assumption}[Treatment exchangeability (multiple exposure versions)\label{ass: treatment exch multiple versions}] For all $q_1,q_2\in\mathcal{Q}$,
    \begin{align*}
        \Delta Y_1^{a,q_1},\Delta Y_2^{a,q_1,q_2}\independent A\mid L ~.
    \end{align*}
\end{assumption}

\begin{assumption}[Exposure exchangeability (multiple exposure versions)\label{ass: exposure exchageability multiple versions}]\hfill

For all $a\in\{0,1\}$ and $q_1,q_2\in\mathcal{Q}$,
\begin{align*}
    \Delta Y_1^{a,q_1} \independent Q_1^a\mid A=a,L \text{ and } \Delta Y_2^{a,q_1=0,q_2} \independent Q_2^{a,q_1=0}\mid A=a,L ~.
\end{align*}

\end{assumption}

\begin{assumption}[No waning of placebo (multiple exposure versions)\label{ass: no waning placebo multiple versions}]
    For all $q\in\mathcal{Q}$,
    \begin{align*}
        E_G[\Delta Y_1^{a=0,q_1=q}\mid L] =E_G[\Delta Y_2^{a=0,q_{1}=0,q_2=q}\mid L] \text{ w.p.~1}~.
    \end{align*}
\end{assumption}

\begin{assumption}[Stationarity of exposure versions among the exposed\label{ass: stationarity environment}] For all $a\in\{0,1\}$ and $q\in\mathcal{Q}$,
    \begin{align*}
        P_G(Q_1\leq q\mid E_1=1,A=a,L)=P_G(Q_2^{q_1=0}\leq q\mid E_2^{q_1=0}=1,A=a,L) \text{ w.p.~1}~.
    \end{align*}
\end{assumption}
Assumption~\ref{ass: stationarity environment} is closely related to the ``Similar Study Environment'' assumption by \citet{fintzi_assessing_2021}:
\begin{quote}
    ``The proportional hazards model allows for the attack rate to change with time. But if the pathogen mutates to a form that is resistant to vaccine effects, efficacy may appear to wane. Another possibility is if human behavior changes in such a way that the vaccine is less effective. \textbf{For example, if there is less mask wearing in the community over the study, the viral inoculum at infection may increase over the study and overwhelm the immune response for later cases. Vaccines may work less well against larger inoculums and thus VE might appear to wane}. For viral mutation, analyses could be run separately for different major strains provided they occur both prior and post crossover.''
\end{quote}

Assumption~\ref{ass: stationarity environment} allows us to discern whether changes in cumulative incidences over time are due to changes in the exposure versions or changes in the challenge effect, as also alluded to in the quote from \citet{fintzi_assessing_2021}. In particular, this consideration also applies to conventional approaches to vaccine waning, i.e.\ the direct comparison of $\text{VE}_1^\mathrm{obs}$ vs. $\text{VE}_2^\mathrm{obs}$ as considered by e.g.\ \citet{fintzi_assessing_2021}.  Assumption~\ref{ass: stationarity environment} does \textit{not} require that the same number of individuals are exposed during intervals 1 and 2. In other words, the assumption is not necessarily violated if $P(E_1=1\mid L)<P(E_2=1\mid L)$, e.g.\ if the prevalence of infection increases in the larger population that embeds the trial participants. For instance, in an all-or-nothing model of vaccine protection \citep{halloran_design_2012}, immune individuals will never contract infection, regardless of the number of exposures during a given time interval, and those who are not immune will always contract infection if exposed. However, in a leaky model of vaccine protection, individuals with multiple exposures during interval $k$ will have a greater risk of the outcome than an individual with a single exposure. If $P(E_1=1\mid L)\ll P(E_2=1\mid L)$ w.p.~1, e.g. i \ due to a large difference in the lengths of intervals $k=1$ vs. $k=2$, or due to a large change in the infection prevalence, then multiple exposures are more likely during interval 2 compared to 1, which could violate Assumption~\ref{ass: stationarity environment}.

Violations of Assumption~\ref{ass: stationarity environment} can be mitigated by collecting certain data in a particular randomized experiment. Consider the sequential blinded crossover trial discussed by \citet{follmann_deferred-vaccination_2021}, where individuals are randomly assigned to vaccine versus control at time 1, and then cross over to the other treatment arm at time 2. By comparing the incidence \textit{during the same interval of calendar time} of those who received the vaccine at time 1 (early) versus time 2 (late), one can minimize differences in the study environment between early and late vaccinees.

To state the next assumption, we define the dose-response relations $\varphi_k(q)$ and an auxiliary function $h$.
\begin{align*}
    \varphi_1(q)&=E_G[\Delta Y_1^{a=1,q_1=q}\mid L=l] ~, \\
    \varphi_2(q)&=E_G[\Delta Y_2^{a=1,q_1=0,q_2=q}\mid L=l]~,\\
    h(q)&=\frac{\varphi_1(q)}{E_G[\varphi_1(Q_1)\mid E_1=1,A=1,L=l]}-\frac{\varphi_2(q)}{E_G[\varphi_2(Q_1)\mid E_1=1,A=1,L=l]}~.
\end{align*}
\begin{assumption}[Existence of a representative exposure (weak)\label{ass: weak existence of representative inoculum}] There exists a representative exposure $q_l^\ast\in\mathcal{Q}$ such that
\begin{align*}
    h(q_l^\ast)=0 ~.
\end{align*}
\end{assumption}

Since $E_G[h(Q_1)\mid E_1=1,A=1,L=l]=0$, Assumption~\ref{ass: weak existence of representative inoculum} is implied by the mean value theorem for a non-trivial class of dose-response relations $\varphi_k(q)$ and distributions of $Q_k$.

\begin{assumption}[Existence of a representative exposure (strong)\label{ass: strong existence of representative inoculum}] There exists a representative exposure $q_l^{\ast\ast}\in\mathcal{Q}$ such that
    \begin{align}
        E_G[\Delta Y_1^{a,q_1=q_l^{\ast\ast}}\mid L=l] &=   E_G[E_G[\Delta Y_1^{a,Q_1}\mid L=l]\mid  E_1=1,A=a, L=l] ~, \label{eq: strong I}\\
        E_G[\Delta Y_2^{a,q_1=0,q_2=q_l^{\ast\ast}}\mid L=l] &= E_G[E_G[\Delta Y_2^{a,q_1=0,Q_2^{q_1=0}}\mid L=l]\mid E_2^{q_1=0}=1,A=a,L=l]  ~, \label{eq: strong II}
    \end{align}
    for all $a\in\{0,1\}$.
\end{assumption}
Assumption~\ref{ass: strong existence of representative inoculum} can be regarded as a consistency assumption in conditional expectation, and states that an infectious inoculum of version $q_l^{\ast\ast}$ leads to the same outcome in conditional expectation as an average over a random version $Q_1$ among individuals with $E_1=1$, and likewise for a random version $Q_2^{q_1=0}$ among individuals with $E_2^{q_1=0}=1$. In other words, individuals with $Q_1>q_l^{\ast\ast}$ are perfectly balanced by individuals with $Q_1<q_l^{\ast\ast}$, for both treatment groups $A\in\{0,1\}$, and correspondingly for the second time interval. One particular scenario where this happens is if the conditional distribution of $Q_1,Q_2^{q_1=0}$ is narrow and centers around the particular version $q_l^{\ast\ast}$. 
This is equivalent to the statement that the dose-response relations $E_G[\Delta Y_1^{a,q}\mid L=l]$ and $E_G[\Delta Y_2^{a,q_1=0,q_2=q}\mid L=l]$, viewed as functions of $q$, do not vary in $q$ over the range of treatment versions with non-negligible probability, i.e.\ that all the observed exposure versions lead to the same outcomes in conditional expectation. In turn, this occurs in an all-or-nothing model of vaccine protection, but not necessarily in a leaky model \citep{halloran_design_2012}.

\subsection{Identification results}

\begin{proposition}\label{prp: randomized exposure version}
    Under Assumptions~\ref{ass:consistency 2 intervals}~(i) and (ii) for $e_1=0$ and Assumptions~\ref{ass: exposure positivity 2 intervals}-\ref{ass: exposure exchageability multiple versions},
    \begin{align}
       &E[\Delta Y_1^a\mid E_1^a=1,A=a, L=l]=E_G[E_G[\Delta Y_1^{a,Q_1}\mid L=l]\mid  E_1=1,A=a, L=l] \label{eq: dY1 version marginalization}~, \\
       &E[\Delta Y_2^{a,e_1=0}\mid E_2^{a,e_1=0}=1,A=a,L=l] \notag\\
       &\quad = E_G[E_G[\Delta Y_2^{a,q_1=0,Q_2^{q_1=0}}\mid L=l]\mid E_2^{q_1=0}=1,A=a,L=l] ~.\label{eq: dY2 version marginalization}
    \end{align}
    Under the additional Assumptions~\ref{ass: exposure necessity 2 intervals}-\ref{ass: exclusion 2 intervals} and Assumption~\ref{ass: treatment exchageability 2 intevals},
    \begin{align}
        1-\frac{E_G[E_G[\Delta Y_1^{a=1,Q_1}\mid L=l]\mid  E_1=1,A=1, L=l]}{E_G[E_G[\Delta Y_1^{a=0,Q_1}\mid L=l]\mid  E_1=1,A=0, L=l]}  &=\text{VE}_1^\mathrm{obs}(l) \label{eq: VE1 marginalization}~, \\
        1-\frac{E_G[E_G[\Delta Y_2^{a=1,q_1=0,Q_2^{q_1=0}}\mid L=l]\mid E_2^{q_1=0}=1,A=1,L=l]}{E_G[E_G[\Delta Y_2^{a=0,q_1=0,Q_2^{q_1=0}}\mid L=l]\mid E_2^{q_1=0}=1,A=0,L=l]} &\in[\mathcal{L}_2(l),\mathcal{U}_2(l)]~.  \label{eq: VE2 marginalization}
    \end{align}
\end{proposition}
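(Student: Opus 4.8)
The plan is to prove the two ``version-marginalization'' identities \eqref{eq: dY1 version marginalization}--\eqref{eq: dY2 version marginalization} first, since these carry the substantive content; the two efficacy statements \eqref{eq: VE1 marginalization}--\eqref{eq: VE2 marginalization} then follow by substituting these identities into the exposure-necessity/no-effect-on-exposure cancellation argument already used in the proof of Theorem~\ref{thm: waning minimal ass 2 intevals}. The guiding idea is that, within the exposed stratum $E_k=1$, the version-specific challenge framework of trial $G$ collapses onto the binary-exposure quantities of the original trial once we average the dose-response function over the conditional law of the random inoculum $Q_k$.

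For \eqref{eq: dY1 version marginalization} I would start on the left with original-trial consistency (Assumption~\ref{ass:consistency 2 intervals}) to write $E[\Delta Y_1^a\mid E_1^a=1,A=a,L=l]=E[\Delta Y_1\mid E_1=1,A=a,L=l]$, and then invoke the equivalence \eqref{eq:margin1} to replace this observed-data functional by its counterpart $E_G[\Delta Y_1\mid E_1=1,A=a,L=l]$ in trial $G$. Inside $G$ I would iterate the expectation over $Q_1$, using that $E_1=I(Q_1\neq 0)$ so conditioning on $\{Q_1=q_1,E_1=1\}$ for $q_1\neq 0$ is just conditioning on $\{Q_1=q_1\}$. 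Applying consistency in $G$ (Assumption~\ref{ass:consistency_multiple_versions}(ii)), then exposure exchangeability in $G$ (Assumption~\ref{ass: exposure exchageability multiple versions}, with $Q_1=Q_1^a$ given $A=a$) to drop the conditioning on $Q_1$, and finally treatment exchangeability (Assumption~\ref{ass: treatment exch multiple versions}) to drop the conditioning on $A$, yields $E_G[\Delta Y_1\mid Q_1=q_1,A=a,L=l]=E_G[\Delta Y_1^{a,q_1}\mid L=l]$. Averaging over the conditional law of $Q_1$ given $E_1=1,A=a,L=l$ produces the right-hand side.

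The identity \eqref{eq: dY2 version marginalization} follows the same template but is the step I expect to be the main obstacle, because it first requires aligning the original trial's isolation world ($e_1=0$) with trial $G$'s $q_1=0$ world before any marginalization can take place. Here I would use consistency (Assumption~\ref{ass:consistency 2 intervals}) to pass from $\Delta Y_2^{a,e_1=0}$ to $\Delta Y_2^{e_1=0}$ given $A=a$, and then the second equivalence \eqref{eq:margin2} to identify the distribution of $(A,L,E_1,\Delta Y_1^{e_1=0},E_2^{e_1=0},\Delta Y_2^{e_1=0})$ in the original trial with that of the $q_1=0$ quantities in $G$. The delicate bookkeeping is that everything is now nested under the $q_1=0$ intervention: I must condition on $E_2^{q_1=0}=I(Q_2^{q_1=0}\neq 0)$, iterate over $Q_2^{q_1=0}$, and apply consistency (Assumption~\ref{ass:consistency_multiple_versions}(iii)) together with the interval-2 clause of exposure exchangeability in $G$ (using $Q_2^{q_1=0}=Q_2^{a,q_1=0}$ given $A=a$) and treatment exchangeability to replace $E_G[\Delta Y_2^{q_1=0}\mid Q_2^{q_1=0}=q_2,A=a,L]$ by $E_G[\Delta Y_2^{a,q_1=0,q_2}\mid L]$. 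Keeping the intervention and the conditioning events mutually consistent across these layers, and confirming that \eqref{eq:margin2} supplies exactly the marginal needed, is where the care is required.

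With both identities in hand, \eqref{eq: VE1 marginalization} and \eqref{eq: VE2 marginalization} are short. Writing $m_a$ and $n_a$ for the left-hand sides of \eqref{eq: dY1 version marginalization} and \eqref{eq: dY2 version marginalization}, exposure necessity (Assumption~\ref{ass: exposure necessity 2 intervals}) gives $E[\Delta Y_1^a\mid A=a,L]=m_a\,P(E_1^a=1\mid A=a,L)$ and $E[\Delta Y_2^{a,e_1=0}\mid A=a,L]=n_a\,P(E_2^{a,e_1=0}=1\mid A=a,L)$, since the unexposed contribute nothing. Taking ratios across $a=1$ versus $a=0$ and cancelling the exposure probabilities via no treatment effect on exposure (Assumption~\ref{ass:no effect on exposure 2 intervals}) leaves $m_1/m_0=E[\Delta Y_1\mid A=1,L]/E[\Delta Y_1\mid A=0,L]$, which is $1-\text{VE}_1^\mathrm{obs}(l)$ and so establishes \eqref{eq: VE1 marginalization}, and $n_1/n_0=E[\Delta Y_2^{e_1=0}\mid A=1,L]/E[\Delta Y_2^{e_1=0}\mid A=0,L]$. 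For \eqref{eq: VE2 marginalization} I would then apply the exposure effect restriction (Assumption~\ref{ass: exclusion 2 intervals}) to sandwich $E[\Delta Y_2^{e_1=0}\mid A=a,L]$ between $E[\Delta Y_2\mid A=a,L]$ and $E[\Delta Y_1+\Delta Y_2\mid A=a,L]$, and use $Y_2=\Delta Y_1+\Delta Y_2$ and $Y_2-Y_1=\Delta Y_2$ to recognize the resulting ratio bounds as exactly $\mathcal{L}_2(l)$ and $\mathcal{U}_2(l)$ from Theorem~\ref{thm: waning minimal ass 2 intevals}.
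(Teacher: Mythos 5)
Your proposal is correct and follows essentially the same route as the paper's proof: the same consistency--equivalence--iterated-expectation--exchangeability chain for \eqref{eq: dY1 version marginalization}--\eqref{eq: dY2 version marginalization} (including the key use of $E_k=I(Q_k\neq 0)$ to drop the redundant conditioning on $E_k$), followed by the same exposure-necessity and no-effect-on-exposure cancellation, plus Assumption~\ref{ass: exclusion 2 intervals}, for \eqref{eq: VE1 marginalization}--\eqref{eq: VE2 marginalization}. The only detail left implicit is that cancelling $P(E_1^{a=1}=1\mid A=1,L)$ against $P(E_1^{a=0}=1\mid A=0,L)$ also requires treatment exchangeability and positivity (Assumptions~\ref{ass: treatment exchageability 2 intevals}--\ref{ass: exposure positivity 2 intervals}) to first remove the conditioning on $A$, exactly as in the proof of Theorem~\ref{thm: waning minimal ass 2 intevals} that you reference.
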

\begin{proof}
Expression~\eqref{eq: dY1 version marginalization} follows from
    \begin{align*}
        &E[\Delta Y_1^a\mid E_1^a=1,A=a, L] \\
        \stackrel{\text{Assumption~\ref{ass:consistency 2 intervals}~(i)}}{=} &E[\Delta Y_1\mid E_1=1,A=a, L] \\
        \stackrel{\text{Assumption~\ref{ass:equivalence_modified_trial}}}{=} &E_G[\Delta Y_1\mid E_1=1,A=a, L] \\
        =&E_G[E_G[\Delta Y_1\mid Q_1,E_1=1,A=a, L]\mid  E_1=1,A=a, L]\\
        \stackrel{\eqref{eq: Ek definition}}{=}&E_G[E_G[\Delta Y_1\mid Q_1,A=a, L]\mid  E_1=1,A=a, L]\\
        \stackrel{\text{Assumption~\ref{ass:consistency_multiple_versions}}}{=}&E_G[E_G[\Delta Y_1^{a,Q_1^a}\mid Q_1^a,A=a, L]\mid  E_1=1,A=a, L]\\
        \stackrel{\text{Assumption~\ref{ass: exposure exchageability multiple versions}}}{=}& E_G[E_G[\Delta Y_1^{a,Q_1^a}\mid A=a, L]\mid  E_1=1,A=a, L]\\
        \stackrel{\text{Assumption~\ref{ass: treatment exch multiple versions}}}{=}& E_G[E_G[\Delta Y_1^{a,Q_1^a}\mid L]\mid  E_1=1,A=a, L] \\
        \stackrel{\text{Assumption~\ref{ass:consistency_multiple_versions}~(i)}}{=}&E_G[E_G[\Delta Y_1^{a,Q_1}\mid L]\mid  E_1=1,A=a, L] ~.
    \end{align*}
    The penultimate equality used the positivity condition $P_G(A=a\mid L)>0$ for all $a\in\{0,1\}$ w.p.~1, which follows from Assumption~\ref{ass: exposure positivity 2 intervals} and definition of the trial $G$ (Assumption~\ref{ass:equivalence_modified_trial}). Likewise, \eqref{eq: dY2 version marginalization} follows from
\begin{align*}
        &E[\Delta Y_2^{a,e_1=0}\mid E_2^{a,e_1=0}=1,A=a,L]\\
        \stackrel{\text{Assumption~\ref{ass:consistency 2 intervals}~(i)}}{=} &E[\Delta Y_2^{e_1=0}\mid E_2^{e_1=0}=1,A=a,L]\\
        \stackrel{\text{Assumption~\ref{ass:equivalence_modified_trial}}}{=} &E_G[\Delta Y_2^{q_1=0}\mid E_2^{q_1=0}=1,A=a,L] \\
        =&E_G[E_G[\Delta Y_2^{q_1=0}\mid Q_2^{q_1=0}, E_2^{q_1=0}=1,A=a,L]\mid E_2^{q_1=0}=1,A=a,L]\\
        \stackrel{\eqref{eq: Ek definition}}{=}&E_G[E_G[\Delta Y_2^{q_1=0}\mid Q_2^{q_1=0},A=a,L]\mid E_2^{q_1=0}=1,A=a,L]\\
        \stackrel{\text{Assumption~\ref{ass:consistency_multiple_versions}}}{=}&E_G[E_G[\Delta Y_2^{a,q_1=0,Q_2^{a,q_1=0}}\mid Q_2^{a,q_1=0},A=a,L]\mid E_2^{q_1=0}=1,A=a,L] \\
        \stackrel{\text{Assumption~\ref{ass: exposure exchageability multiple versions}}}{=}& E_G[E_G[\Delta Y_2^{a,q_1=0,Q_2^{a,q_1=0}}\mid A=a,L]\mid E_2^{q_1=0}=1,A=a,L] \\
        \stackrel{\text{Assumption~\ref{ass: treatment exch multiple versions}}}{=}& E_G[E_G[\Delta Y_2^{a,q_1=0,Q_2^{a,q_1=0}}\mid L]\mid E_2^{q_1=0}=1,A=a,L]\\
        \stackrel{\text{Assumption~\ref{ass:consistency_multiple_versions}~(i)}}{=}&E_G[E_G[\Delta Y_2^{a,q_1=0,Q_2^{q_1=0}}\mid L]\mid E_2^{q_1=0}=1,A=a,L]~.
    \end{align*}
    In the first and second lines, we have used that quantities under intervention $e_1=0$ are well-defined, which follows from Assumption~\ref{ass:consistency 2 intervals} (ii) for $e_1=0$.

    We proceed similarly to Section~\ref{sec: proof of theorem}.
    \begin{align}
        &\frac{E[\Delta Y_1^{a=1}\mid E_1^{a=1}=1,A=1,L]}{E[\Delta Y_1^{a=0}\mid E_1^{a=0}=1,A=0,L]} \notag\\
        =&\frac{E[I(E_1^{a=1}=1)\Delta Y_1^{a=1}\mid A=1,L]}{E[I(E_1^{a=0}=1)\Delta Y_1^{a=0}\mid A=0,L]}\times\frac{P(E_1^{a=0}=1\mid A=0,L)}{P(E_1^{a=1}=1\mid A=1,L)} \notag\\
        \stackrel{\text{Assumptions~\ref{ass: treatment exchageability 2 intevals}, \ref{ass: exposure positivity 2 intervals}}}{=}&\frac{E[I(E_1^{a=1}=1)\Delta Y_1^{a=1}\mid A=1,L]}{E[I(E_1^{a=0}=1)\Delta Y_1^{a=0}\mid A=0,L]}\times\frac{P(E_1^{a=0}=1\mid L)}{P(E_1^{a=1}=1\mid L)} \notag\\
        \stackrel{\text{Assumption~\ref{ass:no effect on exposure 2 intervals}}}{=}&\frac{E[I(E_1^{a=1}=1)\Delta Y_1^{a=1}\mid A=1,L]}{E[I(E_1^{a=0}=1)\Delta Y_1^{a=0}\mid A=0,L]} \notag\\
        \stackrel{\text{Assumption~\ref{ass: exposure necessity 2 intervals}}}{=}&\frac{E[\Delta Y_1^{a=1}\mid A=1,L]}{E[\Delta Y_1^{a=0}\mid A=0,L]} \notag\\
        \stackrel{\text{Assumptions~\ref{ass:consistency 2 intervals}~(i)}}{=}&\frac{E[\Delta Y_1\mid A=1,L]}{E[\Delta Y_1\mid A=0,L]} ~, \label{eq: dY1 ratio}
    \end{align}
    and
    \begin{align}
        &\frac{E[\Delta Y_2^{a=1,e_1=0}\mid E_2^{a=1,e_1=0}=1,A=1,L]}{E[\Delta Y_2^{a=0,e_1=0}\mid E_2^{a=0,e_1=0}=1,A=0,L]} \notag\\
        = &\frac{E[I(E_2^{a=1,e_1=0}=1)\Delta Y_2^{a=1,e_1=0}\mid A=1,L]}{E[I(E_2^{a=0,e_1=0}=1)\Delta Y_2^{a=0,e_1=0}\mid A=0,L]}\times\frac{P(E_2^{a=0,e_1=0}=1\mid A=0,L)}{P(E_2^{a=1,e_1=0}=1\mid A=1,L)} \notag\\
        \stackrel{\text{Assumptions~\ref{ass: treatment exchageability 2 intevals}, \ref{ass: exposure positivity 2 intervals}}}{=}&\frac{E[I(E_2^{a=1,e_1=0}=1)\Delta Y_2^{a=1,e_1=0}\mid A=1,L]}{E[I(E_2^{a=0,e_1=0}=1)\Delta Y_2^{a=0,e_1=0}\mid A=0,L]}\times\frac{P(E_2^{a=0,e_1=0}=1\mid L)}{P(E_2^{a=1,e_1=0}=1\mid L)} \notag\\
        \stackrel{\text{Assumption~\ref{ass:no effect on exposure 2 intervals}}}{=}& \frac{E[I(E_2^{a=1,e_1=0}=1)\Delta Y_2^{a=1,e_1=0}\mid A=1,L]}{E[I(E_2^{a=0,e_1=0}=1)\Delta Y_2^{a=0,e_1=0}\mid A=0,L]} \notag\\
        \stackrel{\text{Assumption~\ref{ass: exposure necessity 2 intervals}}}{=}& \frac{E[\Delta Y_2^{a=1,e_1=0}\mid A=1,L]}{E[\Delta Y_2^{a=0,e_1=0}\mid A=0,L]} \notag \\
        \stackrel{\text{Assumption~\ref{ass:consistency 2 intervals}~(i)}}{=}& \frac{E[\Delta Y_2^{e_1=0}\mid A=1,L]}{E[\Delta Y_2^{e_1=0}\mid A=0,L]} \label{eq: dY2 ratio ii} \\
        \stackrel{\text{Assumption~\ref{ass: exclusion 2 intervals}}}{\in}&\left[\frac{E[\Delta Y_2\mid A=1,L]}{E[\Delta Y_1 + \Delta Y_2\mid A=0,L]},\frac{E[\Delta Y_1+\Delta Y_2\mid A=1,L]}{E[ \Delta Y_2\mid A=0,L]} \right] ~. \label{eq: dY2 ratio}
    \end{align}
    Combining \eqref{eq: dY1 version marginalization} and \eqref{eq: dY1 ratio} gives \eqref{eq: VE1 marginalization}. Likewise, \eqref{eq: dY2 version marginalization} and \eqref{eq: dY2 ratio} imply \eqref{eq: VE2 marginalization}.
\end{proof}

Expressions~\eqref{eq: dY1 version marginalization}-\eqref{eq: dY2 version marginalization} state that the conditional exposure risk among the exposed is equal to a conditional mean of the dose-response relations $\varphi_1(Q_1)$ and $\varphi(Q_2^{q_1=0})$ over random exposure versions $Q_1$ and $Q_2^{q_1=0}$. Thus, \eqref{eq: VE1 marginalization}-\eqref{eq: VE2 marginalization} allows us to interpret \eqref{eq: VE_challenge_1 and VE_obs_1} and \eqref{eq:L_VE_challenge_2}-\eqref{eq:U_VE_challenge_2}  as identification formulas for a randomized exposure intervention, where an investigator draws a version $Q_1$ and $Q_2^{q_1=0}$ at random according to the conditional distribution functions $F_{Q_1\mid E_1=1,A=a,L=l}$ and $F_{Q_2^{q_1=0}\mid E_2^{q_1=0}=1,A=a,L=l}$. However, a contrast of \eqref{eq: VE1 marginalization}~vs.~\eqref{eq: VE2 marginalization} could be non-null due to changes in the distribution of exposure versions over time, unless Assumptions~\ref{ass: stationarity environment} holds.

Let $\mathrm{H}_0$ be the strict null hypothesis that the vaccine does not wane for any exposure version $q\in\mathcal{Q}$, 
    \begin{align*}
        \mathrm{H}_0:~E[\Delta Y_1^{a=1,q_1=q}\mid L] = E[\Delta Y_2^{a=1,q_1=0,q_2=q}\mid L] \text{ w.p.~1 for all } q\in\mathcal{Q} ~.
    \end{align*}
\begin{proposition}\label{prp: testing under the null}
    Under Assumptions~\ref{ass:consistency 2 intervals}~(i) and (ii) for $e_1=0$,  Assumptions~\ref{ass: exposure necessity 2 intervals}-\ref{ass: exclusion 2 intervals} and Assumptions~\ref{ass: treatment exchageability 2 intevals}-\ref{ass: stationarity environment},
    \begin{align}
        \mathrm{H}_0\implies \mathcal{L}_\psi(l) \leq 1 \leq \mathcal{U}_\psi(l) ~. \label{eq: hypothesis test}
    \end{align}
\end{proposition}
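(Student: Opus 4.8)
The plan is to reduce the claim to the version-marginalized identities of Proposition~\ref{prp: randomized exposure version} and then exploit the null together with the placebo and stationarity assumptions to match the two marginalized contrasts. First I would introduce, for each $a\in\{0,1\}$, the dose-response functions $\varphi_1^a(q)=E_G[\Delta Y_1^{a,q_1=q}\mid L=l]$ and $\varphi_2^a(q)=E_G[\Delta Y_2^{a,q_1=0,q_2=q}\mid L=l]$, together with their exposure-averaged counterparts $m_1^a=E_G[\varphi_1^a(Q_1)\mid E_1=1,A=a,L=l]$ and $m_2^a=E_G[\varphi_2^a(Q_2^{q_1=0})\mid E_2^{q_1=0}=1,A=a,L=l]$. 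Proposition~\ref{prp: randomized exposure version} then gives directly $1-\text{VE}_1^\mathrm{obs}(l)=m_1^{a=1}/m_1^{a=0}$ and $m_2^{a=1}/m_2^{a=0}\in[1-\mathcal{U}_2(l),1-\mathcal{L}_2(l)]$.

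The heart of the argument is to show that under $\mathrm{H}_0$ the two contrasts coincide, i.e.\ $m_1^{a=1}/m_1^{a=0}=m_2^{a=1}/m_2^{a=0}$. This requires matching the averages in two respects. Stationarity (Assumption~\ref{ass: stationarity environment}) ensures that, within each arm $a$, the conditional law of the exposure version among the exposed is identical across intervals, so that $m_1^a$ and $m_2^a$ integrate their respective dose-response functions against a common measure $F_a$. Separately, the null $\mathrm{H}_0$ gives $\varphi_1^{a=1}(q)=\varphi_2^{a=1}(q)$ for all $q$, while no waning of placebo (Assumption~\ref{ass: no waning placebo multiple versions}) gives the analogous equality $\varphi_1^{a=0}(q)=\varphi_2^{a=0}(q)$ in the placebo arm. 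Combining these, $m_1^a=\int\varphi_1^a\,dF_a=\int\varphi_2^a\,dF_a=m_2^a$ for both $a$, whence $m_1^{a=1}/m_1^{a=0}=m_2^{a=1}/m_2^{a=0}$.

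It then remains to translate this equality into the bracketing claim. Since $1-\text{VE}_1^\mathrm{obs}(l)=m_1^{a=1}/m_1^{a=0}$ equals the interval-2 contrast, which lies in $[1-\mathcal{U}_2(l),1-\mathcal{L}_2(l)]$, I obtain $1-\mathcal{U}_2(l)\leq 1-\text{VE}_1^\mathrm{obs}(l)\leq 1-\mathcal{L}_2(l)$. Dividing the right inequality by $1-\mathcal{L}_2(l)$ and the left by $1-\mathcal{U}_2(l)$, both strictly positive under the stated positivity conditions, and recalling $\mathcal{L}_\psi(l)=(1-\text{VE}_1^\mathrm{obs}(l))/(1-\mathcal{L}_2(l))$ and $\mathcal{U}_\psi(l)=(1-\text{VE}_1^\mathrm{obs}(l))/(1-\mathcal{U}_2(l))$ yields exactly $\mathcal{L}_\psi(l)\leq 1\leq\mathcal{U}_\psi(l)$.

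I expect the main obstacle to be the middle step: one needs \emph{both} stationarity (to equate the averaging measures) and the two no-waning equalities (to equate the dose-response functions), since neither alone suffices, and the bookkeeping must respect that the conditioning events $\{E_1=1\}$ and $\{E_2^{q_1=0}=1\}$ differ across intervals. A secondary technical point is to confirm that $1-\mathcal{U}_2(l)>0$ so that the final division preserves the inequality direction, which I would handle by invoking the positivity assumptions carried over from Theorem~\ref{thm: waning minimal ass 2 intevals} and Proposition~\ref{prp: randomized exposure version}.
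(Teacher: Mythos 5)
Your proposal is correct and follows essentially the same route as the paper: the paper likewise uses the version-marginalized identities of Proposition~\ref{prp: randomized exposure version}, applies stationarity together with $\mathrm{H}_0$ (vaccine arm) and no waning of placebo (placebo arm) to conclude that the within-arm interval-1 and interval-2 exposure-averaged risks coincide, and then combines this with \eqref{eq: dY1 ratio} and \eqref{eq: dY2 ratio} to place $1-\text{VE}_1^{\mathrm{obs}}(l)$ inside $[1-\mathcal{U}_2(l),\,1-\mathcal{L}_2(l)]$, which yields $\mathcal{L}_\psi(l)\leq 1\leq \mathcal{U}_\psi(l)$. The only cosmetic difference is that the paper organizes the argument as two within-arm ratios each equal to one before taking their quotient, whereas you equate numerators and denominators directly.
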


\begin{proof}
    Evaluating the ratio of \eqref{eq: dY1 version marginalization} and \eqref{eq: dY2 version marginalization} for $a=0$, and using Assumptions~\ref{ass: no waning placebo multiple versions} and \ref{ass: stationarity environment}, gives
    \begin{align}
        \frac{E[\Delta Y_1^{a=0}\mid E_1^{a=0}=1,A=0, L]}{E[\Delta Y_2^{a=0,e_1=0}\mid E_2^{a=0,e_1=0}=1,A=0,L]}=1~. \label{eq: ratio placebo step}
    \end{align}
    Similarly, by evaluating the ratio of \eqref{eq: dY1 version marginalization} and \eqref{eq: dY2 version marginalization} for $a=1$ under $\mathrm{H}_0$ and Assumption~\ref{ass: stationarity environment}, we find that
    \begin{align}
        \frac{E[\Delta Y_1^{a=1}\mid E_1^{a=1}=1,A=1, L]}{E[\Delta Y_2^{a=1,e_1=0}\mid E_2^{a=1,e_1=0}=1,A=1,L]}=1~. \label{eq: ratio vaccine step}
    \end{align}
    Taking the ratio of \eqref{eq: ratio placebo step} and \eqref{eq: ratio vaccine step} gives
    \begin{align}
        1=&\frac{\frac{E[\Delta Y_1^{a=1}\mid E_1^{a=1}=1,A=1,L]}{E[\Delta Y_1^{a=0}\mid E_1^{a=0}=1,A=0,L]}}{\frac{E[\Delta Y_2^{a=1,e_1=0}\mid E_2^{a=1,e_1=0}=1,A=1,L]}{E[\Delta Y_2^{a=0,e_1=0}\mid E_2^{a=0,e_1=0}=1,A=0,L]}}~. \label{eq: ratio=1}
    \end{align}
    Using \eqref{eq: dY1 ratio} and \eqref{eq: dY2 ratio} in \eqref{eq: ratio=1} gives the final result.
\end{proof}

By testing whether the observed data violates \eqref{eq: hypothesis test}, one can test the null hypothesis $\mathrm{H}_0$. Proposition~\ref{prp: testing under the null} clarifies that it is possible to test for the presence of vaccine waning even under \textit{arbitrary} distributions of versions of exposure using analogous assumptions to Theorem~\ref{thm: waning minimal ass 2 intevals}, as long as the distribution of exposure versions is stationary over time and the effect of placebo does not wane. A violation of $\mathrm{H}_0$ implies that there exists at least one infectious inoculum $q$ for which the vaccine wanes, but it does not establish for which inocula $q$ the vaccine wanes. However, it would be surprising if $\text{VE}_1^\mathrm{challenge}(l,q)>\text{VE}_2^\mathrm{challenge}(l,q)$ for some $q$ while $\text{VE}_1^\mathrm{challenge}(l,q)<\text{VE}_2^\mathrm{challenge}(l,q)$ for other versions $q$, and therefore a violation of $\mathrm{H}_0$ gives meaningful insight into vaccine waning, even though it does not tell us by how much the vaccine wanes for each exposure version $q$. Furthermore, the power to reject $\mathrm{H}_0$ is driven by exposure versions that appear frequently, or wane substantially, in the observed data, and therefore a rejection of $\mathrm{H}_0$ gives insight on waning of such exposure versions.

In the following propositions, we clarify conditions which allow us to interpret previous identification results for $\text{VE}_1^\mathrm{challenge}$ and $\text{VE}_2^\mathrm{challenge}$ in terms of controlled exposures to \textit{non-random}, representative infectious inocula.

Let $\psi(l,q)=E_G[\Delta Y_1^{a=1,q_1=q}\mid L=l]/E_G[\Delta Y_2^{a=1,q_1=0,q_2=q}\mid L=l]$. 
\begin{proposition}\label{prp: identification weak existence}
    Under Assumptions~\ref{ass:consistency 2 intervals}~(i) and (ii) for $e_1=0$, Assumptions~\ref{ass: exposure necessity 2 intervals}-\ref{ass: exclusion 2 intervals} and Assumptions~\ref{ass: treatment exchageability 2 intevals}-\ref{ass: weak existence of representative inoculum}, $\mathcal{L}_\psi(l) \leq \psi(l,q^{\ast})\leq \mathcal{U}_\psi(l)$.
\end{proposition}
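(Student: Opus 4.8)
The plan is to combine the two marginalization identities of Proposition~\ref{prp: randomized exposure version} with the normalization supplied by the representative exposure $q^\ast=q_l^\ast$ of Assumption~\ref{ass: weak existence of representative inoculum}, and then to eliminate the placebo arm using no waning of placebo together with stationarity. To fix notation, I keep $\varphi_1(q)=E_G[\Delta Y_1^{a=1,q_1=q}\mid L=l]$ and $\varphi_2(q)=E_G[\Delta Y_2^{a=1,q_1=0,q_2=q}\mid L=l]$ from the statement, and abbreviate the population-averaged dose responses by $\bar\varphi_1^a=E_G[E_G[\Delta Y_1^{a,Q_1}\mid L=l]\mid E_1=1,A=a,L=l]$ and $\bar\varphi_2^a=E_G[E_G[\Delta Y_2^{a,q_1=0,Q_2^{q_1=0}}\mid L=l]\mid E_2^{q_1=0}=1,A=a,L=l]$, which are exactly the quantities entering \eqref{eq: VE1 marginalization}--\eqref{eq: VE2 marginalization}.

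The first step rewrites $h(q^\ast)=0$ as a statement about $\psi(l,q^\ast)$. By stationarity (Assumption~\ref{ass: stationarity environment}), the conditional law of $Q_1$ given $E_1=1,A=1,L=l$ agrees with that of $Q_2^{q_1=0}$ given $E_2^{q_1=0}=1,A=1,L=l$, so the two denominators in the definition of $h$ equal $\bar\varphi_1^{a=1}$ and $\bar\varphi_2^{a=1}$ respectively. Thus $h(q^\ast)=0$ reads $\varphi_1(q^\ast)/\bar\varphi_1^{a=1}=\varphi_2(q^\ast)/\bar\varphi_2^{a=1}$, which rearranges to
\begin{align*}
\psi(l,q^\ast)=\frac{\varphi_1(q^\ast)}{\varphi_2(q^\ast)}=\frac{\bar\varphi_1^{a=1}}{\bar\varphi_2^{a=1}} ~.
\end{align*}

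The second step eliminates the placebo arm. No waning of placebo for multiple versions (Assumption~\ref{ass: no waning placebo multiple versions}) gives $E_G[\Delta Y_1^{a=0,q_1=q}\mid L=l]=E_G[\Delta Y_2^{a=0,q_1=0,q_2=q}\mid L=l]$ for every $q$; combined with stationarity for $a=0$ this is precisely the placebo step \eqref{eq: ratio placebo step} in the proof of Proposition~\ref{prp: testing under the null}, which together with \eqref{eq: dY1 version marginalization}--\eqref{eq: dY2 version marginalization} yields $\bar\varphi_1^{a=0}=\bar\varphi_2^{a=0}$. Dividing the displayed identity for $\psi(l,q^\ast)$ through by this equality and inserting \eqref{eq: VE1 marginalization}--\eqref{eq: VE2 marginalization} gives
\begin{align*}
\psi(l,q^\ast)=\frac{\bar\varphi_1^{a=1}/\bar\varphi_1^{a=0}}{\bar\varphi_2^{a=1}/\bar\varphi_2^{a=0}}=\frac{1-\text{VE}_1^\mathrm{obs}(l)}{\bar\varphi_2^{a=1}/\bar\varphi_2^{a=0}} ~,
\end{align*}
where the denominator lies in $[1-\mathcal{U}_2(l),1-\mathcal{L}_2(l)]$ by \eqref{eq: VE2 marginalization}. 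Since $1-\text{VE}_1^\mathrm{obs}(l)>0$, the right-hand side is decreasing in this denominator, so it is maximized at $1-\mathcal{U}_2(l)$ and minimized at $1-\mathcal{L}_2(l)$, which gives $\mathcal{L}_\psi(l)\leq\psi(l,q^\ast)\leq\mathcal{U}_\psi(l)$ after recalling the definitions \eqref{eq: LB minimal ass 2 intervals}--\eqref{eq: UB minimal ass 2 intervals}.

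I expect the main obstacle to be the bookkeeping in the first step: one must recognize that the seemingly mismatched term $E_G[\varphi_2(Q_1)\mid E_1=1,A=1,L=l]$ in the definition of $h$ is, under stationarity, nothing other than the interval-2 population-averaged dose response $\bar\varphi_2^{a=1}$, so that $h(q^\ast)=0$ pins $\psi(l,q^\ast)$ to the ratio of \emph{averaged} dose responses rather than to the value at any particular $q$. The concluding monotonicity argument is then routine, provided the positivity conditions of Theorem~\ref{thm: waning minimal ass 2 intevals} ensure that the expectations entering $\mathcal{L}_2,\mathcal{U}_2$ and $1-\text{VE}_1^\mathrm{obs}$ are strictly positive.
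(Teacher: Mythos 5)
Your proof is correct and follows essentially the same route as the paper's: both use stationarity plus $h(q_l^\ast)=0$ to equate $\psi(l,q^\ast)$ with the ratio of averaged dose responses in the vaccine arm, use no waning of placebo plus stationarity to insert the placebo normalization, and then apply the marginalization identities and bounds \eqref{eq: VE1 marginalization}--\eqref{eq: VE2 marginalization} of Proposition~\ref{prp: randomized exposure version}. The only difference is bookkeeping — the paper introduces the unidentified correction-factor ratios \eqref{eq: VE1q without consistency}--\eqref{eq: VE2q without consistency} and cancels them via \eqref{eq: VE12 ratio factor}, whereas you work directly with the averaged dose responses $\bar\varphi_k^a$ — which is arguably a cleaner presentation of the same argument.
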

\begin{proof}
        Multiplying both sides of \eqref{eq: dY1 ratio} by $E_G[\Delta Y_1^{a=1,q_1=q}\mid L]/E_G[\Delta Y_1^{a=0,q_1=q}\mid L]$ and both sides  of \eqref{eq: dY2 ratio ii} by $E_G[\Delta Y_2^{a=1,q_1=0,q_2=q}\mid L]/E_G[\Delta Y_2^{a=0,q_1=0,q_2=q}\mid L]$ gives
        \begin{align}
            \scriptstyle \frac{P_G(\Delta Y_1^{a=1,q_1=q}=1\mid L)}{P_G(\Delta Y_1^{a=0,q_1=q}=1\mid L)} &= \scriptstyle \frac{P(\Delta Y_1=1\mid A=1,L)}{P(\Delta Y_1=1\mid A=0,L)}\times\frac{\frac{P_G(\Delta Y_1^{a=1,q_1=q}=1\mid L)}{P(\Delta Y_1^{a=1}=1\mid E_1^{a=1}=1,A=1,L)}}{\frac{P_G(\Delta Y_1^{a=0,q_1=q}=1\mid L)}{P(\Delta Y_1^{a=0}=1\mid E_1^{a=0}=1,A=0,L)}} ~, \label{eq: VE1q without consistency}\\
            \scriptstyle \frac{P_G(\Delta Y_2^{a=1,q_1=0,q_2=q}=1\mid L)}{P_G(\Delta Y_2^{a=0,q_1=0,q_2=q}=1\mid L)} &= \scriptstyle \frac{P(\Delta Y_2^{e_1=0}=1\mid A=1,L)}{P(\Delta Y_2^{e_1=0}=1\mid A=0,L)}\times\frac{\frac{P_G(\Delta Y_2^{a=1,q_1=0,q_2=q}=1\mid L)}{P(\Delta Y_2^{a=1,e_1=0}=1\mid E_2^{a=1,e_1=0}=1,A=1,L)}}{\frac{P_G(\Delta Y_2^{a=0,q_1=0,q_2=q}=1\mid L)}{P(\Delta Y_2^{a=0,e_1=0}=1\mid E_2^{a=0,e_1=0}=1,A=0,L)}} ~. \label{eq: VE2q without consistency}
        \end{align}
    Assumptions~\ref{ass: stationarity environment} and \ref{ass: weak existence of representative inoculum} together imply that
    \begin{align}
        \scriptstyle \frac{P_G(\Delta Y_1^{a=1,q_1=q_l^\ast}=1\mid L)}{E_G[E_G[\Delta Y_1^{a=1,Q_1}\mid L]\mid  E_1=1,A=1, L]}= \scriptstyle \frac{P_G(\Delta Y_2^{a=1,q_1=0,q_2=q_l^\ast}=1\mid L)}{E_G[E_G[\Delta Y_2^{a=1,q_1=0,Q_2^{q_1=0}}\mid L]\mid E_2^{q_1=0}=1,A=1,L]}~. \label{eq: simplified VE12 ratio factor}
    \end{align}
    Next, Assumptions~\ref{ass: no waning placebo multiple versions} and \ref{ass: stationarity environment} imply that
    \begin{align}
        \scriptstyle \frac{P_G(\Delta Y_1^{a=0,q_1=q_l^\ast}=1\mid L)}{E_G[E_G[\Delta Y_1^{a=0,Q_1}\mid L]\mid  E_1=1,A=0, L]}= \scriptstyle \frac{P_G(\Delta Y_2^{a=0,q_1=0,q_2=q_l^\ast}=1\mid L)}{E_G[E_G[\Delta Y_2^{a=0,q_1=0,Q_2^{q_1=0}}\mid L]\mid E_2^{q_1=0}=1,A=0,L]}~. \label{eq: simplified VE12 ratio factor A=0}
    \end{align}
    Taking the ratio of \eqref{eq: simplified VE12 ratio factor} and \eqref{eq: simplified VE12 ratio factor A=0}, and using \eqref{eq: dY1 version marginalization}-\eqref{eq: dY2 version marginalization} gives
    \begin{align}
        \frac{\frac{P_G(\Delta Y_1^{a=1,q_1=q_l^\ast}=1\mid L)}{P(\Delta Y_1^{a=1}=1\mid E_1^{a=1}=1,A=1,L)}}{\frac{P_G(\Delta Y_1^{a=0,q_1=q_l^\ast}=1\mid L)}{P(\Delta Y_1^{a=0}=1\mid E_1^{a=0}=1,A=0,L)}}=\frac{\frac{P_G(\Delta Y_2^{a=1,q_1=0,q_2=q_l^\ast}=1\mid L)}{P(\Delta Y_2^{a=1,e_1=0}=1\mid E_2^{a=1,e_1=0}=1,A=1,L)}}{\frac{P_G(\Delta Y_2^{a=0,q_1=0,q_2=q_l^\ast}=1\mid L)}{P(\Delta Y_2^{a=0,e_1=0}=1\mid E_2^{a=0,e_1=0}=1,A=0,L)}} ~. \label{eq: VE12 ratio factor}
    \end{align}
    Finally, taking the ratio of \eqref{eq: VE1q without consistency} and \eqref{eq: VE2q without consistency}, and using Assumption~\ref{ass: exclusion 2 intervals} to bound $P(\Delta Y_2^{e_1=0}=1\mid A=1,L)/P(\Delta Y_2^{e_1=0}=1\mid A=0,L)$ and \eqref{eq: VE12 ratio factor} to cancel the remaining unidentified fractions for $q=q_l^\ast$ gives the final result.
\end{proof}
Proposition~\ref{prp: identification weak existence} clarifies that $\mathcal{L}_\psi(l),\mathcal{U}_\psi(l)$ can be interpreted as bounds on a ratio of challenge effects in intervals 1 and 2 for a representative exposure $q_l^\ast$ under analogous assumptions used to identify $\psi(l)$ in Section~\ref{sec:identification}, even when multiple exposure versions are present, as long as the distribution of exposure versions is constant across intervals 1 and 2 and satisfies Assumption~\ref{ass: weak existence of representative inoculum}.

\begin{proposition}\label{prp: identification strong existence}
Assumptions~\ref{ass:consistency 2 intervals}~(i) and (ii) for $e_1=0$, Assumptions~\ref{ass: exposure necessity 2 intervals}-\ref{ass: exclusion 2 intervals} and Assumptions~\ref{ass: treatment exchageability 2 intevals}-\ref{ass: exposure exchageability multiple versions} combined with  Assumption~\ref{ass: strong existence of representative inoculum}, imply that
    \begin{align}
        &\text{VE}_1^\mathrm{challenge}(l,q_l^{\ast\ast}) =\frac{E[\Delta Y_1\mid A=1,L=l]}{E[\Delta Y_1\mid A=0,L=l]} ~, \label{eq: strong VE1} \\
        \mathcal{L}_2(l) \leq  &\text{VE}_2^\mathrm{challenge}(l,q_l^{\ast\ast}) \leq \mathcal{U}_2(l) \label{eq: strong VE2}~.
    \end{align}
\end{proposition}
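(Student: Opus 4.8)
The plan is to show that Assumption~\ref{ass: strong existence of representative inoculum} collapses the randomized–version identification of Proposition~\ref{prp: randomized exposure version} onto the single representative inoculum $q_l^{\ast\ast}$, after which both conclusions follow directly from the ratio computations already carried out inside the proof of that proposition. The first step is to read \eqref{eq: strong I}--\eqref{eq: strong II} side by side with \eqref{eq: dY1 version marginalization}--\eqref{eq: dY2 version marginalization}: the right-hand sides of \eqref{eq: strong I}--\eqref{eq: strong II} are \emph{verbatim} the iterated conditional expectations appearing in \eqref{eq: dY1 version marginalization}--\eqref{eq: dY2 version marginalization}. Chaining the two therefore gives, for every $a\in\{0,1\}$,
\begin{align*}
    E_G[\Delta Y_1^{a,q_1=q_l^{\ast\ast}}\mid L=l] &= E[\Delta Y_1^{a}\mid E_1^{a}=1,A=a,L=l] ~,\\
    E_G[\Delta Y_2^{a,q_1=0,q_2=q_l^{\ast\ast}}\mid L=l] &= E[\Delta Y_2^{a,e_1=0}\mid E_2^{a,e_1=0}=1,A=a,L=l] ~.
\end{align*}
The crucial structural point is that Assumption~\ref{ass: strong existence of representative inoculum} is imposed for \emph{both} $a=0$ and $a=1$, so that this substitution applies simultaneously to the numerator and the denominator of each challenge effect.

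For interval~1, I would substitute the first display into the definition of $\text{VE}_1^\mathrm{challenge}(l,q_l^{\ast\ast})$, which reduces it to a contrast of outcome risks among the exposed, and then invoke the chain of equalities \eqref{eq: dY1 ratio} (established under exposure necessity, no treatment effect on exposure in the unexposed, consistency, and treatment exchangeability/positivity) to rewrite that contrast in terms of $E[\Delta Y_1\mid A=1,L=l]/E[\Delta Y_1\mid A=0,L=l]$, yielding the point identification \eqref{eq: strong VE1}. For interval~2, the same substitution reduces $\text{VE}_2^\mathrm{challenge}(l,q_l^{\ast\ast})$ to a contrast of the counterfactual isolation risks among the exposed; the chain \eqref{eq: dY2 ratio ii} rewrites this ratio as $E[\Delta Y_2^{e_1=0}\mid A=1,L]/E[\Delta Y_2^{e_1=0}\mid A=0,L]$, and Assumption~\ref{ass: exclusion 2 intervals} bounds it between the two endpoints displayed in \eqref{eq: dY2 ratio}. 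Passing through $1-(\cdot)$ reverses the inequalities and maps the ratio interval of \eqref{eq: dY2 ratio} onto $[\mathcal{L}_2(l),\mathcal{U}_2(l)]$, with $\mathcal{L}_2(l)$ and $\mathcal{U}_2(l)$ as in \eqref{eq:L_VE_challenge_2}--\eqref{eq:U_VE_challenge_2}, establishing \eqref{eq: strong VE2}.

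I do not anticipate a substantive obstacle, since essentially all of the analytic work is inherited from Proposition~\ref{prp: randomized exposure version}; the present statement is, in effect, a corollary once Assumption~\ref{ass: strong existence of representative inoculum} is available to replace the ``averaged over a random version'' quantity by the ``single representative version'' quantity. The only points requiring care are bookkeeping: matching \eqref{eq: strong I}--\eqref{eq: strong II} to \eqref{eq: dY1 version marginalization}--\eqref{eq: dY2 version marginalization} for the correct treatment arm before taking ratios, and confirming that applying $1-(\cdot)$ to the interval in \eqref{eq: dY2 ratio} returns precisely $[\mathcal{L}_2(l),\mathcal{U}_2(l)]$ once the sign change has swapped the order of the endpoints. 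Note also that, unlike Proposition~\ref{prp: identification weak existence}, this argument invokes exposure exchangeability for multiple versions (Assumption~\ref{ass: exposure exchageability multiple versions}) rather than Assumptions~\ref{ass: no waning placebo multiple versions}--\ref{ass: stationarity environment}, because the representative version is characterized here directly through \eqref{eq: strong I}--\eqref{eq: strong II} in each arm separately rather than through a cross-interval stationarity argument.
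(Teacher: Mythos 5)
Your proposal is correct and follows essentially the same route as the paper: chaining \eqref{eq: strong I}--\eqref{eq: strong II} with \eqref{eq: dY1 version marginalization}--\eqref{eq: dY2 version marginalization} to equate $E_G[\Delta Y_1^{a,q_1=q_l^{\ast\ast}}\mid L]$ and $E_G[\Delta Y_2^{a,q_1=0,q_2=q_l^{\ast\ast}}\mid L]$ with the exposed-conditional risks (the paper's \eqref{eq: consistency in expectation 1}--\eqref{eq: consistency in expectation 2}), and then feeding these into the ratio identities \eqref{eq: dY1 ratio} and \eqref{eq: dY2 ratio ii}--\eqref{eq: dY2 ratio} together with Assumption~\ref{ass: exclusion 2 intervals}. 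The only cosmetic difference is that the paper substitutes into \eqref{eq: VE1q without consistency}--\eqref{eq: VE2q without consistency}, where the unidentified correction factor collapses to $1$, whereas you substitute directly into the challenge-effect definitions; your endpoint bookkeeping for $1-(\cdot)$ mapping \eqref{eq: dY2 ratio} onto $[\mathcal{L}_2(l),\mathcal{U}_2(l)]$ is also correct.
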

\begin{proof}
By \eqref{eq: dY1 version marginalization}-\eqref{eq: dY2 version marginalization}, Assumption~\ref{ass: strong existence of representative inoculum} implies that
\begin{align}
    P_G(\Delta Y_1^{a,q_1=q_l^{\ast\ast}}=1\mid L)&=P(\Delta Y_1^{a}=1\mid E_1^{a}=1,A=a,L)~, \label{eq: consistency in expectation 1}\\
   P_G(\Delta Y_2^{a,q_1=0,q_2=q_l^{\ast\ast}}=1\mid L)&=P(\Delta Y_2^{a,e_1=0}=1\mid E_2^{a,e_1=0}=1,A=a,L)  \label{eq: consistency in expectation 2}
\end{align}
for all $a\in\{0,1\}$. We obtain \eqref{eq: strong VE1}  from using  \eqref{eq: consistency in expectation 1} in \eqref{eq: VE1q without consistency} and likewise we obtain \eqref{eq: strong VE2}  from using  \eqref{eq: consistency in expectation 2} and Assumption~\ref{ass: exclusion 2 intervals} in \eqref{eq: VE2q without consistency}.

\end{proof}

Similarly to Proposition~\ref{prp: testing under the null}, the identification results in Propositions~\ref{prp: identification weak existence}-\ref{prp: identification strong existence}  do not tell us exactly for which exposure versions $q_l^\ast,q_l^{\ast\ast}$ we (potentially) identify vaccine waning.

\section{Extension to multiple time intervals and loss to follow-up\label{app: extension to multiple times}}
We assume that Definition~1 in \citet{thomas_s_richardson_single_2013} holds, which implies that interventions on  $A,E_k,C_k$ for $k\in\{1,\dots,K\}$ are well-defined. 
We use an underbar to denote future variables, e.g.\ $\underline{Y}_k=(Y_k,\dots, Y_K)$, and an overbar to denote the history of a random variable through time $k$, e.g.\ $\overline{E}_k=(E_1,\dots,E_k)$.  Under an additional intervention to prevent losses to follow-up, denoted by $\overline{c}=0$, the challenge effect at time $k$ is defined as
\begin{align*}
    \text{VE}_k^\mathrm{challenge}(l) = 1-\frac{E[\Delta Y_k^{a=1,\overline{e}_{k-1}=0,e_k=1,\overline{c}=0}\mid L=l]}{E[\Delta Y_k^{a=0,\overline{e}_{k-1}=0,e_k=1,\overline{c}=0}\mid L=l]} ~.
\end{align*}
Suppose that the following assumptions hold for all $a,k,\overline{e}_k,l$.
\begin{assumption}[Exposure necessity ($K$ intervals)\label{ass: exposure necessity K intervals}]
 \begin{align*}
     E_k^{a,\overline{e}_{k-1}=0,\overline{c}=0}=0\implies \Delta Y_k^{a,\overline{e}_{k-1}=0,\overline{c}=0}=0  ~.
 \end{align*}
\end{assumption}

\begin{assumption}[No treatment effect on exposure in the unexposed ($K$ intervals)\label{ass:no effect on exposure K intervals}]
\begin{align*}
    E_k^{a=0,\overline{e}_{k-1}=0,\overline{c}=0}=E_k^{a=1,\overline{e}_{k-1}=0,\overline{c}=0}  ~.
\end{align*}
\end{assumption}

\begin{assumption}[Exposure effect restriction ($K$ intervals)\label{ass: exclusion K intervals}]
\begin{align*}
    E[\Delta Y_k^{\overline{c}=0}\mid A=a, L] \leq E[\Delta Y_k^{\overline{e}_{k-1}=0,\overline{c}=0}\mid A=a,L] \leq E[Y^{\overline{c}=0}_k\mid A=a, L] \text{ w.p. 1}~.
\end{align*}
\end{assumption}

\begin{assumption}[Exposure exchangeability ($K$ intervals)\label{ass: exposure exchageability K intevals}]\hfill
\begin{align*}
    \Delta Y_k^{a,\overline{e}_{k-1}=0,e_k,\overline{c}=0}\independent E_k^{a,\overline{e}_{k-1}=0,\overline{c}=0}\mid A=a,L  ~.
\end{align*}
\end{assumption}

\begin{assumption}[Treatment exchangeability ($K$ intervals)\label{ass: treatment exchageability K intevals}]\hfill
\begin{align*}
     E_k^{a,\overline{e}_{k-1},\overline{c}=0}, \Delta Y_k^{a,\overline{e}_{k-1},e_k,\overline{c}=0}   \independent A \mid L ~.
\end{align*}
\end{assumption}

\begin{assumption}[Exchangeability for loss to follow-up ($K$ intervals)\label{ass: exchangeability censoring K intervals}]
\begin{align}
     \underline{Y}_{k+1}^{a,\overline{c}=0} \independent C_{k+1}^{a,\overline{c}=0}\mid C_k^{a,\overline{c}=0},Y_{k}^{a,\overline{c}=0},A=a,L ~. \label{eq: exchangeability censoring K intervals} 
\end{align}    
\end{assumption}

\begin{assumption}[Positivity for loss to follow-up ($K$ intervals\label{ass: positivity censoring K intervals})]
    \begin{align}
    &f_{Y_k,C_k,A,L}(0,0,a,l) >0\notag\\
    &\quad\implies P(C_{k+1}=0\mid Y_k=0,C_k=0,A=a,L=l) >0 \text{ for all } l~. \label{eq: positivity censoring K intervals}
\end{align}
\end{assumption}

Let $\Lambda_{k,a,l}=P(Y_k=1\mid Y_{k-1}=0,C_k=0,A=a,L=l)$ be a discrete time hazard of the outcome. 
\begin{assumption} [Rare events ($K$ intervals)\label{ass: rare events K intervals}]
   \begin{align}
       \sum_{k=1}^K \Lambda_{k,a,l} \ll 1 \text{ for all } a,l ~.
   \end{align}
\end{assumption}

\begin{theorem}[Bounds for $K$ intervals\label{thm: bounds for K intervals}]
    Under Assumption~\ref{ass: exposure positivity 2 intervals} and Assumptions~\ref{ass: exposure necessity K intervals}-\ref{ass: treatment exchageability K intevals},
    \begin{align*}
        \mathcal{L}_k(l)\leq \text{VE}_k^\mathrm{challenge}(l)\leq \mathcal{U}_k(l) \text{ for $k\in\{2,\dots,K\}$}~,
    \end{align*}
    where
    \begin{align*}
        \mathcal{L}_k(l)&=1-\frac{E[Y^{\overline{c}=0}_k\mid A=1,L=l]}{E[\Delta Y^{\overline{c}=0}_k\mid A=0,L=l]} ~,\\
        \mathcal{U}_k(l)&=1-\frac{E[\Delta Y^{\overline{c}=0}_k\mid A=1,L=l]}{E[ Y^{\overline{c}=0}_k\mid A=0,L=l]} ~,\\
    \end{align*}
    whenever $E[\Delta Y^{\overline{c}=0}_k\mid A=a,L=l]>0$ for all $a\in\{0,1\}$.
\end{theorem}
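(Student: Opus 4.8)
The plan is to mirror the two-interval argument in Section~\ref{sec: proof of theorem} essentially verbatim, carrying the censoring intervention $\overline{c}=0$ through every step and replacing interval~2 by a generic interval $k$ in which the individual is isolated throughout intervals $1,\dots,k-1$ (that is, $\overline{e}_{k-1}=0$) and challenged at interval~$k$ ($e_k=1$). Because all quantities of interest live in the world where losses to follow-up are prevented, the censoring conditions (Assumptions~\ref{ass: exchangeability censoring K intervals}-\ref{ass: positivity censoring K intervals}) and the rare-events condition (Assumption~\ref{ass: rare events K intervals}) are not invoked here: the bounds are stated directly in terms of the counterfactual incidences $E[Y_k^{\overline{c}=0}\mid A=a,L]$ and $E[\Delta Y_k^{\overline{c}=0}\mid A=a,L]$, and the separate problem of recovering these from censored data is handled elsewhere.

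First I would establish the $K$-interval analogue of \eqref{eq: identified single world inequality}, namely
\begin{align*}
    E[\Delta Y_k^{\overline{e}_{k-1}=0,\overline{c}=0}\mid A=a,L]
    = E[\Delta Y_k^{a,\overline{e}_{k-1}=0,e_k=1,\overline{c}=0}\mid L]\, P(E_k^{a,\overline{e}_{k-1}=0,\overline{c}=0}=1\mid L)~.
\end{align*}
The chain is: pass to the $A=a$-conditional counterfactual via the recursive (consistency) definition of \citet{thomas_s_richardson_single_2013}; insert the event $\{E_k^{a,\overline{e}_{k-1}=0,\overline{c}=0}=1\}$ using exposure necessity ($K$ intervals, Assumption~\ref{ass: exposure necessity K intervals}), since $\Delta Y_k=0$ off that event; use consistency again to replace $\Delta Y_k^{a,\overline{e}_{k-1}=0,\overline{c}=0}$ by $\Delta Y_k^{a,\overline{e}_{k-1}=0,e_k=1,\overline{c}=0}$ on the event where $E_k=1$; factor the joint probability using exposure exchangeability ($K$ intervals, Assumption~\ref{ass: exposure exchageability K intevals}); and finally drop the conditioning on $A=a$ using treatment exchangeability ($K$ intervals, Assumption~\ref{ass: treatment exchageability K intevals}) together with positivity (Assumption~\ref{ass: exposure positivity 2 intervals}).

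Next I would take the ratio of this identity for $a=1$ versus $a=0$. The exposure probabilities cancel because no treatment effect on exposure in the unexposed ($K$ intervals, Assumption~\ref{ass:no effect on exposure K intervals}) forces $E_k^{a=0,\overline{e}_{k-1}=0,\overline{c}=0}=E_k^{a=1,\overline{e}_{k-1}=0,\overline{c}=0}$, giving
\begin{align*}
    \frac{E[\Delta Y_k^{a=1,\overline{e}_{k-1}=0,e_k=1,\overline{c}=0}\mid L]}{E[\Delta Y_k^{a=0,\overline{e}_{k-1}=0,e_k=1,\overline{c}=0}\mid L]}
    = \frac{E[\Delta Y_k^{\overline{e}_{k-1}=0,\overline{c}=0}\mid A=1,L]}{E[\Delta Y_k^{\overline{e}_{k-1}=0,\overline{c}=0}\mid A=0,L]}~.
\end{align*}
Applying the exposure effect restriction ($K$ intervals, Assumption~\ref{ass: exclusion K intervals}) to bound the numerator above by $E[Y_k^{\overline{c}=0}\mid A=1,L]$ and the denominator below by $E[\Delta Y_k^{\overline{c}=0}\mid A=0,L]$ yields the largest admissible ratio, hence the lower bound $\mathcal{L}_k(l)$; bounding the numerator below and the denominator above yields $\mathcal{U}_k(l)$. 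Subtracting from one, and using $E[\Delta Y_k^{\overline{c}=0}\mid A=a,L]>0$ to keep the ratios well-defined, completes the argument.

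I do not expect a substantive obstacle, since the content beyond bookkeeping is light: the only delicate point is the consistency step licensing the replacement of $\Delta Y_k^{a,\overline{e}_{k-1}=0,\overline{c}=0}$ by $\Delta Y_k^{a,\overline{e}_{k-1}=0,e_k=1,\overline{c}=0}$ on $\{E_k=1\}$, where the recursive counterfactual definition and the single-world nature of the assumptions must be applied carefully with the full history $\overline{e}_{k-1}=0,\overline{c}=0$ in the superscript. If sharpness is also wanted (the statement asserts only validity), the main labor would be to extend Data generating mechanism~\ref{dgm: attaining bounds}: define each arm's interval indicators through nested thresholds on a single $\mathrm{Unif}[0,1]$ draw $U_Y$ so that the cumulative incidences $p_{j,a,l}$ for $j\le k$ are reproduced and the extreme overlap is realized (all interval-$<k$ cases in one arm recur under isolation, none in the other), and then verify that this mechanism satisfies Assumptions~\ref{ass: exposure necessity K intervals}-\ref{ass: treatment exchageability K intevals}.
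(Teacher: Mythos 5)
Your proposal is correct and follows essentially the same route as the paper's proof: establish the factorization $E[\Delta Y_k^{\overline{e}_{k-1}=0,\overline{c}=0}\mid A=a,L]=E[\Delta Y_k^{a,\overline{e}_{k-1}=0,e_k=1,\overline{c}=0}\mid L]\,P(E_k^{a,\overline{e}_{k-1}=0,\overline{c}=0}=1\mid L)$ via exposure necessity, recursive substitution of the natural exposure value, exposure exchangeability, and treatment exchangeability with positivity; then take the $a=1$ versus $a=0$ ratio, cancel exposure probabilities by Assumption~\ref{ass:no effect on exposure K intervals}, and apply Assumption~\ref{ass: exclusion K intervals}. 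You are also right that the censoring and rare-events assumptions are not needed for the bounds themselves and that the theorem does not assert sharpness, so the DGM extension you sketch is optional.
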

\begin{proof}
    \begin{align*}
        &P(\Delta Y_k^{\overline{e}_{k-1}=0,\overline{c}=0}=1\mid A=a,L) \\
        =&P(\Delta Y_k^{a,\overline{e}_{k-1}=0,\overline{c}=0}=1\mid A=a,L) \\
        \stackrel{\text{Assumption~\ref{ass: exposure necessity K intervals}}}{=} &P(E_k^{a,\overline{e}_{k-1}=0,\overline{c}=0}=1,\Delta Y_k^{a,\overline{e}_{k-1}=0,\overline{c}=0}=1\mid A=a,L) \\
        =&P(E_k^{a,\overline{e}_{k-1}=0,\overline{c}=0}=1,\Delta Y_k^{a,\overline{e}_{k-1}=0,E_{k}^{a,\overline{e}_{k-1}=0,\overline{c}=0},\overline{c}=0}=1\mid A=a,L) \\
        =&P(E_k^{a,\overline{e}_{k-1}=0,\overline{c}=0}=1,\Delta Y_k^{a,\overline{e}_{k-1}=0,e_k=1,\overline{c}=0}=1\mid A=a,L) \\
        \stackrel{\text{Assumption~\ref{ass: exposure exchageability K intevals}}}{=}&P(\Delta Y_k^{a,\overline{e}_{k-1}=0,e_k=1,\overline{c}=0}=1\mid A=a,L) P(E_k^{a,\overline{e}_{k-1}=0,\overline{c}=0}=1\mid A=a,L) \\
        \stackrel{\text{Assumptions~\ref{ass: exposure positivity 2 intervals}, \ref{ass: treatment exchageability K intevals}}}{=}&P(\Delta Y_k^{a,\overline{e}_{k-1}=0,e_k=1,\overline{c}=0}=1\mid L) P(E_k^{a,\overline{e}_{k-1}=0,\overline{c}=0}=1\mid L)~.
    \end{align*}
    Therefore, by Assumption~\ref{ass: exclusion K intervals}
    \begin{align*}
        &E[\Delta Y^{\overline{c}=0}_k\mid A=a,L] \\
        &\quad \leq E[\Delta Y_k^{a,\overline{e}_{k-1}=0,e_k=1,\overline{c}=0}\mid L]P(E_k^{a,\overline{e}_{k-1}=0,\overline{c}=0}=1\mid L) \\
        &\qquad \leq E[Y^{\overline{c}=0}_k\mid A=a,L] ~.
    \end{align*}
    Taking the ratio for $a=1$ vs. $a=0$, and using Assumption~\ref{ass:no effect on exposure K intervals} to cancel the resulting quotient of exposure probabilities gives
    \begin{align}
        \frac{E[\Delta Y^{\overline{c}=0}_k\mid A=1,L]}{E[Y^{\overline{c}=0}_k\mid A=0,L]} \leq \frac{E[\Delta Y_k^{a=1,\overline{e}_{k-1}=0,e_k=1,\overline{c}=0}\mid L]}{E[\Delta Y_k^{a=0,\overline{e}_{k-1}=0,e_k=1,\overline{c}=0}\mid L]} \leq \frac{E[Y^{\overline{c}=0}_k\mid A=1,L]}{E[\Delta Y^{\overline{c}=0}_k\mid A=0,L]} ~. \label{eq: generalized ID K and censoring}
    \end{align}
    
\end{proof}

The lower and upper limits of \eqref{eq: generalized ID K and censoring} are straightforward to identify and estimate using techniques from survival analysis, as described in Section~\ref{sec:maintext estimation}. Furthermore, under Assumptions~\ref{ass: exchangeability censoring K intervals} and \ref{ass: positivity censoring K intervals}, we can identify the lower and upper limits of Theorem~\ref{thm: bounds for K intervals} using the fact that
\begin{align}
    E[Y_k^{\overline{c}=0}\mid A=a,L] = \sum_{r=1}^k\prod_{v=1}^{r-1} (1-\Lambda_{v,a,l})\cdot \Lambda_{r,a,L}~,  \label{eq: identification cum inc K intervals}
\end{align}
see e.g.\ Appendix~C in \citet{janvin_causal_2023}. Under Assumption~\ref{ass: rare events K intervals}, the product in \eqref{eq: identification cum inc K intervals} simplifies and we obtain the approximate bounds
\begin{align}
    \mathcal{L}_k(l) = 1-\frac{\sum_{k^\prime=1}^{k} \Lambda_{k^\prime,a=1,l}}{\Lambda_{k,a=0,l}} \quad\text{ and }\quad \mathcal{U}_k(l) = 1-\frac{\Lambda_{k,a=1,l}}{\sum_{k^\prime=1}^{k} \Lambda_{k^\prime,a=0,l}} ~. \label{eq: bounds K intervals rare event}
\end{align}

\section{Logistic regression with individual-level data\label{app: estimation individual level data}}

Suppose we have access to individual baseline variables $L_i$, treatment $A_i$, loss to follow-up (censoring) indicator $C_{i,k}$ and outcome $\Delta Y_{i,k}$ from $k\in\{1,\dots,K\}$ time intervals for individuals $i\in\{1,\dots,n\}$. As we assume individuals in the sample are i.i.d., we suppress the subscript $i$.

Let $f_k(a,l;\beta_k)=E[Y_k\mid Y_{k-1}=0,C_k=0,A=a,L=l;\beta_k]$ be a parametric model for $\Lambda_{k,a,l}$ for each time interval $k\in\{1,\dots,K\}$, e.g.\ a logistic regression model. Suppose that the number of time intervals $K$ is fixed, and that the parameter $\beta_k=(\beta_{1,k},\dots, \beta_{d,k})^T$ has a fixed dimension $d$. Denote the maximum likelihood estimator of $\beta_k$ by $\widehat\beta_k$ and let $\widehat\Lambda_{k,a,l;\widehat\beta_k} =f_k(a,l;\widehat\beta_k)$ be a prediction of $\Lambda_{k,a,l}$ using the estimated coefficients $\widehat\beta_k$. We can then consistently estimate the bounds $\mathcal{L}_k(l)$ and $\mathcal{U}_k(l)$ using plugin estimators
\begin{align}
    \widehat{ \mathcal{L}}_k(l) = 1-\frac{\sum_{k^\prime=1}^{k} \widehat\Lambda_{k^\prime,a=1,l;\widehat\beta_k}}{\widehat\Lambda_{k,a=0,l;\widehat\beta_k}} \text{ and } \widehat{\mathcal{U}}_k(l) = 1-\frac{\widehat\Lambda_{k,a=1,l;\widehat\beta_k}}{\sum_{k^\prime=1}^{k} \widehat\Lambda_{k^\prime,a=0,l;\widehat\beta_k}} ~. \label{eq: plugin estimators}
\end{align}

\section{Summary data\label{app:summary data}}
\subsection{Identification\label{app: loss to follow-up}}
 
We define each of intervals $k=1$ and $k=2$  by combining several subintervals, summarized in Table~\ref{tab:intervals}, using publicly available summary data from Figure~2 in \citet{thomas_safety_2021}. Let, $j=1,\dots,j_k$ denotes subinterval $j$ of interval $k$, and let $s$ index a short time interval of duration $\Delta s=1$ day, such that $s_{k,j}^-,s_{k,j}^+$ denote the first and last days of subinterval $(k,j)$ respectively. 
Let $\tau_{k,j}$ denote the duration (in days) of subinterval $(k,j)$. Next, let $T_{k,j,a}$ and $N_{k,j,a}$ denote respectively the total person time at risk (in days) and the number of recorded cases of infection during subinterval $(k,j)$ of treatment group $A=a$. All quantities introduced in this paragraph are evaluated in a subset $L=l$ of baseline covariates, even though they are not indexed by $l$ to reduce clutter.

Assume that Definition~1 in \citet{thomas_s_richardson_single_2013} holds, which implies that interventions on loss to follow-up $C_s$ are well-defined at all times $s$. We denote the discrete-time hazard of $Y_s$ by $\lambda_{s,a,l}=P(Y_s=1 \mid C_s=0,Y_{s-1}=0, A=a,L=l)/\Delta s$, and assume the following. 
\begin{assumption}[Constant subinterval hazard\label{ass: constant hazard}]
    Within each subinterval $(k,j)$ of every stratum stratum $\{A=a,L=l\}$, the hazard $\lambda_{s,a,l}$ is a constant function of time $s$, denoted by $\lambda_{s,a,l}=\lambda_{k,j,a}$ for all $s\in\{s_{k,j}^-,\dots,s_{k,j}^+\}$.
\end{assumption}

Importantly, we do not assume a constant value of the hazard for different subintervals $(k,j)$. Assumption~\ref{ass: constant hazard} is plausible for short time intervals, such as subintervals $(k,j)$ in Table~\ref{tab:intervals}. Furthermore, Assumption~\ref{ass: constant hazard} can be falsified by inspecting whether the cumulative incidence curves, such as Figure~2 of \citet{thomas_safety_2021}, deviate from the piecewise exponential form implied by the assumption.

To identify $\lambda_{k,j,a}$ in the presence of censoring, we will invoke standard exchangeability and positivity assumptions for loss to follow-up at all times $s$ and for all treatments $a$. 
\begin{assumption}[Exchangeability for loss to follow-up (subinterval)\label{ass: exchangeability censoring}]
   \begin{align*}
       \underline{Y}_{s+1}^{a,\overline{c}=0}\independent C_{s+1}^{a,\overline{c}=0} \mid Y_s^{a,\overline{c}=0},C_s^{a,\overline{c}=0},A=a,L ~.
   \end{align*}
\end{assumption}
Assumption~\ref{ass: exchangeability censoring} precludes the existence of open backdoor paths (i.e.\ confounding) between loss to follow-up and the outcome.

\begin{assumption}[Positivity for loss to follow-up (subinterval)\label{ass: positivity censoring}]
    \begin{align*}
       &f_{Y_s,C_s,A,L}(0,0,a,l) >0 \\
       &\quad\implies P(C_{s+1}=0\mid Y_s=0,C_s=0,A=a,L=l) >0 \text{ for all } l~.
    \end{align*}
\end{assumption}
Assumption~\ref{ass: positivity censoring} states that for any possible combination of treatment assignment and baseline covariates among those who are event free and uncensored in interval $s$, some individuals will remain uncensored during the next interval $s+1$.

\begin{table}[htbp]
  \centering
  \caption{Definition of time indices}
    \resizebox{\linewidth}{!}{
    \begin{tabular}{llllll}
    \toprule
     Interval description & $k$ &  $j$ & $s_{k,j}^-$ & $s_{k,j}^+$ & $\tau_{k,j}$\\
    \midrule
    $\geq 11$ days after dose 1 until dose 2  & 1 & 1 & 12 & 21 & 10 \\
    After dose 2 until $<7$ days after & 1 & 2 & 22 & 28 & 7 \\
    $\geq 7$ days after dose 2 until $<2$ months after &1 & 3  & 29 & 82 & 54 \\
    $\geq 2$ months after dose 2 until $<4$ months after dose 2 &2 & 1   & 83 & 143 & 61 \\
    $\geq 4$ months after dose 2 & 2 & 2 & 144 & 190 & 47 
    \\\hline
    \end{tabular}%
    }
  \label{tab:intervals}%
\end{table}%

An endpoint at day 190 has been chosen in the final row of Table~\ref{tab:intervals}  to ensure that there are still individuals at risk on the final day, i.e.\ that Assumption~\ref{ass: positivity censoring} holds. This is guaranteed since there are recorded events in either treatment group after day 190 in the cumulative incidence plots shown in Figure~2 of \citet{thomas_safety_2021}. By Assumption~\ref{ass: constant hazard}, the hazard $\lambda_{s,a,L}$ is constant during the final subinterval in Table~\ref{tab:intervals}, and we may therefore consider an endpoint on day 190 without introducing any error into the hazard estimate $\widehat\lambda_{k=1,j=2}$, even though $\widehat\lambda_{k=1,j=2}$ may use observations after day 190.

\begin{lemma}\label{lemma: product identification}
    Under Assumptions~\ref{ass: constant hazard}-\ref{ass: positivity censoring}, and for all $k,j,a$, the cumulative incidence of the outcome during subinterval $k$ can be expressed  as
\begin{align}
    E[\Delta Y_1^{\overline{c}=0}\mid A=a,L] &= \sum_{s=1}^{s_{1,j_1}^+}\prod_{s^\prime =1}^{s-1} (1-\lambda_{s^\prime,a,L}\Delta s)\lambda_{s,a,L}\Delta s ~, \label{eq: product lemma i} \\
    E[\Delta Y_2^{\overline{c}=0}\mid A=a,L] &= \sum_{s=s_{2,1}^-}^{s_{2,j_2}^+} \prod_{s^\prime =1}^{s-1} (1-\lambda_{s^\prime,a,L}\Delta s)\lambda_{s,a,L}\Delta s ~, \label{eq: product lemma ii}
\end{align}
and $\lambda_{s,a,L}=\sum_{k,j}I(s_{k,j}^-\leq s \leq s_{k,j}^+)\lambda_{k,j,a}$ is a piece-wise constant hazard identified by
\begin{align*}
    \lambda_{k,j,a}= \frac{E[N_{k,j,a}]}{E[T_{k,j,a}]} ~.
\end{align*}
\end{lemma}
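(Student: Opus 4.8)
The plan is to prove the two claims separately: first the discrete-time survival (g-formula) representation of the counterfactual cumulative incidences \eqref{eq: product lemma i}--\eqref{eq: product lemma ii}, and then the ratio identification $\lambda_{k,j,a}=E[N_{k,j,a}]/E[T_{k,j,a}]$ of the piecewise-constant hazard. The first is a specialization of the standard survival g-formula (as cited for \eqref{eq: identification cum inc K intervals}) to the one-day grid; the second is the classical exponential/Poisson fact that a constant rate equals events over person-time, and hinges on Assumption~\ref{ass: constant hazard}.

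For the cumulative incidence representation, I would first use that the intervals are contiguous, $s_{2,1}^-=s_{1,j_1}^+ + 1$, so that it suffices to identify the counterfactual survivor function at each day $s$ and take first differences. Concretely, I would write $E[\Delta Y_k^{\overline{c}=0}\mid A=a,L]$ as a telescoping sum of one-day increments $P(Y_s^{\overline{c}=0}=1\mid A=a,L)-P(Y_{s-1}^{\overline{c}=0}=1\mid A=a,L)$ over the days belonging to interval $k$, which lands exactly on the ranges $\{1,\dots,s_{1,j_1}^+\}$ and $\{s_{2,1}^-,\dots,s_{2,j_2}^+\}$. Each such increment equals $\prod_{s'=1}^{s-1}(1-\lambda_{s',a,L}\Delta s)\,\lambda_{s,a,L}\Delta s$ once the counterfactual one-step hazard $P(Y_s^{\overline{c}=0}=1\mid Y_{s-1}^{\overline{c}=0}=0,A=a,L)$ is identified by the observed discrete hazard $\lambda_{s,a,L}\Delta s = P(Y_s=1\mid C_s=0,Y_{s-1}=0,A=a,L)$. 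This identification is the sequential-exchangeability argument for loss to follow-up: Assumption~\ref{ass: exchangeability censoring} replaces the counterfactual event probability among survivors by the observed event probability among the uncensored survivors, while Assumption~\ref{ass: positivity censoring} guarantees that each conditioning event has positive probability so that every factor in the product is well defined.

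For the hazard identification, I would fix a subinterval $(k,j)$ and a stratum $L=l$ and expand $E[N_{k,j,a}]$ and $E[T_{k,j,a}]$ day by day. Writing $R_{s,a}$ for the number of individuals at risk (uncensored and event free) entering day $s$, the expected number of recorded events on day $s$ is $E[R_{s,a}]\,\lambda_{s,a,l}\Delta s$ and the expected person-time contributed on that day is $E[R_{s,a}]\Delta s$. Summing over $s\in\{s_{k,j}^-,\dots,s_{k,j}^+\}$ and invoking Assumption~\ref{ass: constant hazard} to set $\lambda_{s,a,l}=\lambda_{k,j,a}$ for every such $s$, the constant hazard factors out of the numerator:
\begin{align*}
    \frac{E[N_{k,j,a}]}{E[T_{k,j,a}]} = \frac{\lambda_{k,j,a}\sum_{s} E[R_{s,a}]\Delta s}{\sum_{s} E[R_{s,a}]\Delta s} = \lambda_{k,j,a}~,
\end{align*}
since the common at-risk weights cancel. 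The stated representation $\lambda_{s,a,L}=\sum_{k,j} I(s_{k,j}^-\le s\le s_{k,j}^+)\lambda_{k,j,a}$ is then simply a restatement of Assumption~\ref{ass: constant hazard}.

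I expect the main obstacle to be careful bookkeeping in the first part rather than any deep idea: I must condition on both $C_s=0$ and $Y_{s-1}=0$ at every step, verify that Assumption~\ref{ass: positivity censoring} propagates so each conditional probability in the product is well defined, and confirm that the telescoping over days reproduces exactly the day ranges claimed in \eqref{eq: product lemma i}--\eqref{eq: product lemma ii}. A secondary technical point is justifying the day-level expectation identities for $E[N_{k,j,a}]$ and $E[T_{k,j,a}]$ under the i.i.d.\ sampling assumption, but once the at-risk decomposition is written down the cancellation of the weights is immediate and requires no further structure.
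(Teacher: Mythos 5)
Your proposal is correct and follows essentially the same route as the paper: the paper defers the product-formula representation \eqref{eq: product lemma i}--\eqref{eq: product lemma ii} to a cited appendix (the same telescoping/sequential-exchangeability g-formula argument you sketch), and for the hazard identification it likewise rewrites $\lambda_{s,a,L}\Delta s$ as a ratio of the expected daily event count to the expected daily at-risk indicator, sums over the days of subinterval $(k,j)$, and uses Assumption~\ref{ass: constant hazard} to factor the constant hazard out so the at-risk person-time cancels, yielding $\lambda_{k,j,a}=E[N_{k,j,a}]/E[T_{k,j,a}]$. Your day-level at-risk bookkeeping with $E[R_{s,a}]$ is the same computation as the paper's, up to normalizing counts by $n$ under the i.i.d.\ sampling assumption.
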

\begin{proof}
    For a derivation of \eqref{eq: product lemma i}-\eqref{eq: product lemma ii}, see e.g.\ Appendix C of \citet{janvin_causal_2023}.  Next,
    \begin{align*}
        &\lambda_{s,a,L}\cdot \Delta s \\
        =&E[\Delta Y_s  \mid Y_{s-1}=0, C_{s}=0, A=a,L] \\
        =&\frac{E[I(C_{s}=Y_{s-1}=0)\Delta Y_s  \mid  A=a,L]}{E[I(C_{s}=Y_{s-1}=0) \mid  A=a,L]} \\
         =&\frac{E[\Delta Y_s\mid  A=a,L]}{E[I(C_{s}=Y_{s-1}=0) \mid  A=a,L]}~.
    \end{align*}
    Hence, by Assumption~\ref{ass: constant hazard},
    \begin{align*}
        &\lambda_{k,j,a}  \\
        =& \frac{1}{\Delta s/(\lambda_{k,j,a} \cdot \Delta s)} \cdot \frac{\sum_{s=s_{k,j}^-}^{s_{k,j}^+} E[\Delta Y_s  \mid  A=a,L]}{\sum_{s^\prime=s_{k,j}^-}^{s_{k,j}^+} E[\Delta Y_{s^\prime} \mid  A=a,L]} \\
        =&\frac{\sum_{s=s_{k,j}^-}^{s_{k,j}^+} E[\Delta Y_s  \mid  A=a,L]}{\Delta s\sum_{s^\prime=s_{k,j}^-}^{s_{k,j}^+} E[\Delta Y_{s^\prime} \mid  A=a,L]/(\lambda_{k,j,a} \cdot \Delta s)} \\
        =&\frac{E[\sum_{s=s_{k,j}^-}^{s_{k,j}^+}\Delta Y_s  \mid  A=a,L]}{ E[\sum_{s=s_{k,j}^-}^{s_{k,j}^+}\Delta s \cdot I(C_{s}=Y_{s-1}=0) \mid  A=a,L]} ~.
    \end{align*}
    The numerator and denominator are equal to the expected number of events and expected person time at risk per individual in stratum $\{A=a,L\}$ during subinterval $(k,j)$. Therefore, 
    \begin{align*}
        \lambda_{k,j,a} = \frac{E[N_{k,j,a}/n]}{E[T_{k,j,a}/n]}=\frac{E[N_{k,j,a}]}{E[T_{k,j,a}]} ~.
    \end{align*}
\end{proof}
Under Assumptions~\ref{ass: rare events K intervals} and \ref{ass: constant hazard}, \eqref{eq: product lemma i}-\eqref{eq: product lemma ii} simplify approximately to $E[\Delta Y_k^{\overline{c}=0}\mid A=a,L]=\Lambda_{k,a}$, where 
\begin{align}
    \Lambda_{k,a}=\sum_{j=1}^{j_k}\lambda_{k,j,a}\tau_{k,j} \label{eq: Lambda_ka definition}
\end{align}

is the cumulative hazard in interval $k$. Thus, under Assumption~\ref{ass: exposure positivity 2 intervals}, Assumptions~\ref{ass: exposure necessity K intervals}-\ref{ass: treatment exchageability K intevals} and Assumptions~\ref{ass: rare events K intervals}-\ref{ass: positivity censoring}, Expression \eqref{eq: generalized ID K and censoring} gives $\mathcal{L}_2(l)\leq \text{VE}_2^\mathrm{challenge}(l)\leq \mathcal{U}_2(l)$, with the approximate bounds
\begin{align}
    \mathcal{L}_2(l)=1-\frac{ \Lambda_{k=1,a=1}+\Lambda_{k=2,a=1}}{\Lambda_{k=2,a=0}} \text{ and } \mathcal{U}_2(l)= 1-\frac{\Lambda_{k=2,a=1}}{\Lambda_{k=1,a=0}+\Lambda_{k=2,a=0}} ~. \label{eq: bounds individual level data}
\end{align}

\subsection{Estimation\label{app:estimation}} In this section, we consider a setting where Assumptions~\ref{ass: rare events K intervals} and \ref{ass: constant hazard} hold. 
We first define the estimator
\begin{align*}
    \widehat \lambda_{k,j,a}=\frac{N_{k,j,a}}{T_{k,j,a}} ~.
\end{align*}
Under Assumption~\ref{ass: rare events K intervals}, an asymptotic variance estimator of $\widehat\lambda_{k,j,a}$ is 
\begin{align}
    \widetilde\var\widehat\lambda_{k,j,a}= \frac{\widehat\lambda_{k,j,a}^2}{N_{k,j,a}}  \label{eq: lambda var estimator} ~.
\end{align}

An estimator for the vaccine efficacy within subinterval $(k,j)$ is
\begin{align*}
    \widehat{\text{VE}}^{\mathrm{obs}}_{k,j}=1- \frac{\widehat\lambda_{k,j,a=1}}{\widehat\lambda_{k,j,a=0}} ~,
\end{align*}
with log transformed confidence interval
\begin{align}
    CI(\widehat{\text{VE}}^{\mathrm{obs}}_{k,j}) = 1-\frac{\widehat\lambda_{k,j,a=1}}{\widehat\lambda_{k,j,a=0}}\cdot \exp\left( \pm  z_{1-\alpha/2}\cdot \sqrt{\frac{1}{N_{k,j,a=0}}+\frac{1}{N_{k,j,a=1}}}  \right) ~, \label{eq: CI_VE_kj}
\end{align}
also described by \citet{ewell_comparing_1996}. \citet{wei_confidence_2022} found that the coverage probability of \eqref{eq: CI_VE_kj} can be lower than than the nominal level when the VE is close to 1 in finite sample simulations.
The log transform ensures that the upper confidence interval for VE does not exceed 1, and was found to improve the error rate for confidence intervals of the cumulative hazard based on the Nelson-Aalen estimator in finite sample simulations \citep{bie_confidence_1987}. In Table~\ref{tab:validation CI} (Appendix~\ref{app: further analyses}), we apply \eqref{eq: CI_VE_kj} to the data from Figure~2 of \citet{thomas_safety_2021}, which yields identical point estimates and nearly identical confidence intervals to those reported by \citet{thomas_safety_2021}.

Next, we estimate the cumulative hazard $\Lambda_{k,a}$ by
\begin{align}
    \widehat\Lambda_{k,a}=\sum_{j=1}^{j_k} \widehat\lambda_{k,j,a}\tau_{k,j}, \quad \widehat\var\widehat\Lambda_{k,a}=\sum_{j=1}^{j_k} \widetilde\var\widehat\lambda_{k,j,a}\tau_{k,j}^2  ~, \label{eq:cumulative hazard estimator}
\end{align}
which we use to define plugin estimators in Table~\ref{tab:estimators}. Variance estimators of the log transformed estimators are defined in Table~\ref{tab:variance estimates}.

\begin{table}[htbp]
  \centering
  \caption{Estimators of \eqref{eq: observed VE}, \eqref{eq:L_VE_challenge_2}-\eqref{eq:U_VE_challenge_2} and \eqref{eq: LB minimal ass 2 intervals}-\eqref{eq: UB minimal ass 2 intervals} with losses to follow-up under Assumptions~\ref{ass: rare events K intervals} and \ref{ass: constant hazard}}
    \begin{tabular}{ll}
    \toprule
    Estimator & Definition \\
    \midrule
    $\widehat{\text{VE}}_1^{\text{obs}},\widehat{\text{VE}}_1^{\text{challenge}}$ & $1-\widehat\Lambda_{k=1,a=1}/\widehat\Lambda_{k=1,a=0}$ \\
    $\widehat{\text{VE}}_2^{\text{obs}} $ & $1-\widehat\Lambda_{k=2,a=1}/\widehat\Lambda_{k=2,a=0}$ \\
   $\widehat{\mathcal{L}}_2$ &  $1-(\widehat\Lambda_{k=1,a=1}+\widehat\Lambda_{k=2,a=1})/\widehat\Lambda_{k=2,a=0}$  \\
   $\widehat{\mathcal{U}}_2$ & $1-\widehat\Lambda_{k=2,a=1}/(\widehat\Lambda_{k=1,a=0}+\widehat\Lambda_{k=2,a=0})$  \\
    $\widehat{\mathcal{L}}_\psi$ & $\widehat\Lambda_{k=1,a=1}/\widehat\Lambda_{k=1,a=0}\cdot \widehat\Lambda_{k=2,a=0}/(\widehat\Lambda_{k=1,a=1}+\widehat\Lambda_{k=2,a=1})$  \\
    $\widehat{\mathcal{U}}_\psi$ & $\widehat\Lambda_{k=1,a=1}/\widehat\Lambda_{k=1,a=0}\cdot (\widehat\Lambda_{k=1,a=0}+\widehat\Lambda_{k=2,a=0})/\widehat\Lambda_{k=2,a=1}$ 
    \end{tabular}%
  \label{tab:estimators}%
\end{table}%
\begin{table}[htbp]
  \centering
  \caption{Variance of log transformed estimators in Table~\ref{tab:estimators}}
    \resizebox{\linewidth}{!}{
    \begin{tabular}{ll}
    \toprule
    Log transformed variance estimator & Definition\\
    \midrule
    $\widetilde\var \log(1-\widehat{\text{VE}}_1^\mathrm{obs}),\widetilde\var \log(1-\widehat{\text{VE}}_1^\mathrm{challenge})$ & $\frac{\widehat\var \widehat\Lambda_{k=1,a=0}}{\widehat\Lambda_{k=1,a=0}^2} + \frac{\widehat\var \widehat\Lambda_{k=1,a=1}}{\widehat\Lambda_{k=1,a=1}^2}$ \\
    $\widetilde\var \log(1-\widehat{\text{VE}}_2^\mathrm{obs}) $ & $\frac{\widehat\var \widehat\Lambda_{k=2,a=0}}{\widehat\Lambda_{k=2,a=0}^2} + \frac{\widehat\var \widehat\Lambda_{k=2,a=1}}{\widehat\Lambda_{k=2,a=1}^2}$ \\
    $\widetilde\var \log(1-\widehat{\mathcal{L}}_2)$ & $\frac{\widehat\var \widehat\Lambda_{k=2,a=0}}{\widehat\Lambda_{k=2,a=0}^2}+ \frac{\widehat\var\widehat\Lambda_{k=1,a=1}+\widehat\var\widehat\Lambda_{k=2,a=1}}{(\widehat\Lambda_{k=1,a=1}+\widehat\Lambda_{k=2,a=1})^2}$   \\
    $\widetilde\var \log(1-\widehat{\mathcal{U}}_2) $ & $\frac{\widehat\var \widehat\Lambda_{k=2,a=1}}{\widehat\Lambda_{k=2,a=1}^2} +\frac{\widehat\var \widehat\Lambda_{k=1,a=0}+\widehat\var \widehat\Lambda_{k=2,a=0}}{(\widehat\Lambda_{k=1,a=0}+\widehat\Lambda_{k=2,a=0})^2}$ \\
    $\widetilde\var\log\widehat{\mathcal{L}}_\psi$ & $\frac{\widehat\var \widehat\Lambda_{k=1,a=0}}{\widehat\Lambda^2_{k=1,a=0} } + \frac{\widehat\var \widehat\Lambda_{k=2,a=0}}{\widehat\Lambda^2_{k=2,a=0} } + \frac{\widehat\var \widehat\Lambda_{k=2,a=1}}{(\widehat\Lambda_{k=1,a=1}+\widehat\Lambda_{k=2,a=1})^2} + \frac{\widehat\var \widehat\Lambda_{k=1,a=1}}{(\widehat\Lambda_{k=1,a=1}+\widehat\Lambda_{k=2,a=1})^2}\cdot \frac{\widehat\Lambda_{k=2,a=1}^2}{\widehat\Lambda_{k=1,a=1}^2}$ \\
    $\widetilde\var\log\widehat{\mathcal{U}}_\psi$ & $\frac{\widehat\var \widehat\Lambda_{k=1,a=1}}{\widehat\Lambda^2_{k=1,a=1} } + \frac{\widehat\var \widehat\Lambda_{k=2,a=1}}{\widehat\Lambda^2_{k=2,a=1} } + \frac{\widehat\var \widehat\Lambda_{k=2,a=0}}{(\widehat\Lambda_{k=1,a=0}+\widehat\Lambda_{k=2,a=0})^2} + \frac{\widehat\var \widehat\Lambda_{k=1,a=0}}{(\widehat\Lambda_{k=1,a=0}+\widehat\Lambda_{k=2,a=0})^2}\cdot \frac{\widehat\Lambda_{k=2,a=0}^2}{\widehat\Lambda_{k=1,a=0}^2}$ 
    \end{tabular}%
    }
  \label{tab:variance estimates}%
\end{table}%

Finally, we construct asymptotic two-sided $1-\alpha$ confidence intervals using an exponential transformation of estimators in Tables~\ref{tab:estimators}-\ref{tab:variance estimates}. For example, the confidence interval of $\text{VE}_1^\mathrm{challenge}$ is
\begin{align*}
    CI(\text{VE}_1^\mathrm{challenge}) = 1- \exp\left\{ \log (1-\widehat{\text{VE}}_1^\mathrm{challenge}) \mp  z_{1-\alpha/2}\cdot\sqrt{\widetilde\var\log (1-\widehat{\text{VE}}_1^\mathrm{challenge})} \right\} ~.
\end{align*}
For the quantities $\mathcal{L}_2,\mathcal{U}_2, \mathcal{L}_\psi$ and $\mathcal{U}_\psi$, which are used to compute bounds, we construct one-sided confidence intervals to ensure a coverage level of $1-\alpha$ for the lower confidence limit of the lower bound, and the upper confidence limit of the upper bound.

\subsubsection{Large sample properties of $\widehat\lambda_{k,j,a}$}

We use the delta method (see e.g.\ \citet[Section 1.8]{lehmann_theory_1998}) to derive the limiting distribution of the estimators introduced in the previous subsection, and show that the proposed confidence intervals are asymptotically valid in large samples under Assumption~\ref{ass: rare events K intervals}. Define $N_{k,j,i}\in\{0,1\}$ to be an indicator of the outcome of interest and $T_{k,j,i}\in[0,\tau_{k,j}]$ to be the time at risk of individual $i$ during subinterval $(k,j)$. If an individual is censored (lost to follow-up) or experiences the outcome during $(k,j)$, then $T_{k,j,i}< \tau_{k,j}$. Conversely, if individual $i$ is censored or has an event before time interval $(k,j)$, we define $T_{k,j,i}=0$. Hence $N_{k,j,a}=\sum_{i=1}^{n} I(A_i=a)N_{k,j,i}$ and $T_{k,j,a}=\sum_{i=1}^{n} I(A_i=a)T_{k,j,i}$. This allows us to write $\widehat\lambda_{k,j,a}$ as
\begin{align*}
    \widehat\lambda_{k,j,a}= \frac{N_{k,j,a}/n}{T_{k,j,a}/n} = \frac{\frac{1}{n} \sum_{i=1}^n I(A_i=a)N_{k,j,i} }{\frac{1}{n}\sum_{i^\prime =1}^n I(A_{i^\prime}=a)T_{k,j,i^\prime}} =\frac{\frac{1}{n} \sum_{i=1}^n N_{k,j,a,i} }{\frac{1}{n}\sum_{i^\prime =1}^nT_{k,j,a,i^\prime}} ~,
\end{align*}
where we have defined the short-hand notation $N_{k,j,a,i}=I(A_i=a)N_{k,j,i}$, $T_{k,j,a,i}=I(A_i=a)T_{k,j,i}$. Since $N_{k,j,a,i}$ and $T_{k,j,a,i}$ are bounded and therefore have finite mean, and also finite variance by Popoviciu's inequality on variances, we obtain the asymptotic distribution of their empirical means using the central limit theorem (CLT).  Let $X^{(n)}=(1/n) \sum_{i=1}^n X_i$ where $X_i$ is a vector containing components $N_{k,j,a,i}$ and $T_{k,j,a,i}$ for all indices $k,j,a$. Next, let  $m=E[X_i]$ and $ \Sigma_{r,s}=\cov(X_{i,r},X_{i,s})$, where $\Sigma_{r,s}$ is the $(r,s)$ component of covariance matrix $\Sigma$, and $X_{i,r}$ is $r$-th component of $X_i$. By the multivariate CLT, 
\begin{align}
    \sqrt{n}(X^{(n)}-m)\xlongrightarrow{d} \mathcal{N}(0,\Sigma) \label{eq: CLT} ~.
\end{align}
Next, we define the transformation $h$ such that 
\begin{align}
    h(m)= \boldsymbol{\lambda} ~, \label{eq: transformation} 
\end{align}
where $\boldsymbol{\lambda}$ is a vector with components $\lambda_{k,j,a}$ for all $k,j,a$. The components of $h$ are ratios of pairs of components of $m$, and consequently $h$ has continuous partial derivatives. Likewise, let $\widehat{\boldsymbol{\lambda}}=h(X^{(n)})$ denote the corresponding vector of estimates $\widehat\lambda_{k,j,a}$. Using \eqref{eq: CLT} and \eqref{eq: transformation} in Theorem 8.22 in \citet{lehmann_theory_1998} then gives
\begin{align}
    \sqrt{n}(\widehat{\boldsymbol{\lambda}}-\boldsymbol{\lambda})\xlongrightarrow{d} \mathcal{N}(0,B\Sigma B^T)~, \label{eq: asymptotic distribution}
\end{align}
where $B$ is a matrix of partial derivatives $B_{r,s}=\partial h_r/\partial X^{(n)}_s$. The covariance matrix $B\Sigma B^T$ has entries equal to the asymptotic covariances of $\widehat\lambda_{k,j,a}$ and $\widehat\lambda_{k^\prime, j^\prime, a^\prime}$, given by
\begin{align*}
    n\cov (\widehat\lambda_{k,j,a},\widehat\lambda_{k^\prime, j^\prime, a^\prime}) \xlongrightarrow{p} \nabla g(\mu_{k,j,a})^T \Omega_{k,j,a}^{k^\prime, j^\prime, a^\prime} \nabla g(\mu_{k^\prime, j^\prime, a^\prime}) ~,
\end{align*}
where $g(x,y)=x/y$, $\mu_{k,j,a}=(E[N_{k,j,a,i}],E[T_{k,j,a,i}])^T$ and
\begin{align*}
    \Omega_{k,j,a}^{k^\prime, j^\prime, a^\prime} = \begin{pmatrix}
\cov (N_{k,j,a,i},N_{k^\prime, j^\prime, a^\prime,i}) & \cov (N_{k,j,a ,i},T_{k^\prime, j^\prime, a^\prime,i}) \\
\cov (T_{k,j,a,i},N_{k^\prime, j^\prime, a^\prime,i}) & \cov (T_{k,j,a,i},T_{k^\prime, j^\prime, a^\prime,i}) 
\end{pmatrix} ~.
\end{align*}

We proceed by showing that 
\begin{align}
     &\nabla g(\mu_{k,j,a})^T\Omega_{k,j,a}^{k^\prime, j^\prime, a^\prime } \nabla  g(\mu_{k^\prime, j^\prime, a^\prime }) \notag\\
    =&\begin{cases}
        \frac{\lambda_{k,j,a}^2}{E[N_{k,j,a,i}]}\Bigg( 1-\lambda_{k,j,a}\tau_{k,j}\frac{E[T_{k,j,a,i}]}{\tau_{k,j}}+\gamma_{k,j,a}\left(1-\frac{E[T_{k,j,a,i}]}{\tau_{k,j}}\right)\lambda_{k,j,a}\tau_{k,j} \\
     \quad -2\eta_{k,j,a}\sqrt{\left(1-\lambda_{k,j,a}\tau_{k,j}\frac{E[T_{k,j,a,i}]}{\tau_{k,j}}\right)\gamma_{k,j,a}\left(1-\frac{E[T_{k,j,a,i}]}{\tau_{k,j}}\right)\lambda_{k,j,a}\tau_{k,j} } \Bigg)  \text{ if } (k,j,a)=(k^\prime,j^\prime,a^\prime) ~, \\
    0 \text{ otherwise } ~,
    \end{cases} \notag\\\label{eq: asymptotic var}
\end{align}
where
\begin{align*}
    \gamma_{k,j,a} &=  \frac{\var T_{k,j,a,i}}{E[T_{k,j,a,i}]^2(\tau_{k,j}/E[T_{k,j,a,i}]-1)} ~, \\
    \eta_{k,j,a} &= \frac{\cov (N_{k,j,a},T_{k,j,a})}{\sqrt{\var N_{k,j,a}\cdot \var T_{k,j,a}}} ~.
\end{align*}
In other words, $B\Sigma B^T$ is a diagonal matrix, and therefore $\widehat\lambda_{k,j,a},\widehat\lambda_{k^\prime,j^\prime,a^\prime}$ are asymptotically uncorrelated.

To begin the derivation of \eqref{eq: asymptotic var}, consider the case $a\neq a^\prime$. Then,
\begin{align*}
    &\nabla g(\mu_{k,j,a=1})^T \Omega_{k,j,a=1}^{k^\prime, j^\prime, a=0} \nabla g(\mu_{k^\prime, j^\prime, a=0}) \\
    = &\begin{pmatrix}
       \frac{1}{ E[\Delta T_{k,j,a=1,i}]} & -\frac{E[N_{k,j,a=1,i}]}{E[T_{k,j,a=1,i}]^2}
    \end{pmatrix}\begin{pmatrix}
        -E[N_{k,j,a=1,i}]E[N_{k^\prime, j^\prime, a=0,i}] & -E[N_{k,j,a=1,i}]E[T_{k^\prime, j^\prime, a=0,i}] \\
        -E[T_{k,j,a=1,i}]E[N_{k^\prime, j^\prime, a=0,i}] & -E[T_{k,j,a=1,i}]E[T_{k^\prime, j^\prime, a=0,i}]
    \end{pmatrix} \\
    &\quad \times 
    \begin{pmatrix}
        \frac{1}{E[\Delta T_{k^\prime, j^\prime ,a=0,i}] }\\ -\frac{E[N_{k^\prime, j^\prime ,a=0,i}]}{E[T_{k^\prime, j^\prime ,a=0,i}]^2}
    \end{pmatrix}\\
    =&0  ~.
\end{align*}
This reflects the fact that observations in different treatment groups are independent since individuals are i.i.d. We derived the covariance matrix from the fact that
\begin{align*}
    &\cov (N_{k,j,a=1,i},N_{k^\prime,j^\prime,a=0,i}) \\
   =&E[N_{k,j,a=1,i}N_{k^\prime,j^\prime,a=0,i}]-E[N_{k,j,a=1,i}]E[N_{k^\prime,j^\prime,a=0,i}] \\
    =&E[I(A_i=1)I(A_i=0)N_{k,j,i}N_{k^\prime,j^\prime,i}]-E[N_{k,j,a=1,i}]E[N_{k^\prime,j^\prime,a=0,i}] \\
    =&-E[N_{k,j,a=1,i}]E[N_{k^\prime,j^\prime,a=0,i}] ~,
\end{align*}
and likewise
\begin{align*}
    \cov (N_{k,j,a=1,i},T_{k^\prime,j^\prime,a=0,i}) &= -E[N_{k,j,a=1,i}]E[T_{k^\prime,j^\prime,a=0,i}] ~, \\
    \cov (T_{k,j,a=1,i},T_{k^\prime,j^\prime,a=0,i}) &= -E[T_{k,j,a=1,i}]E[T_{k^\prime,j^\prime,a=0,i}] ~.
\end{align*}

Next, consider the case $(k,j) \neq (k^\prime,j^\prime)$. Then,
\begin{align}
    &\nabla g(\mu_{k,j,a})^T\Omega_{k,j,a}^{k^\prime, j^\prime, a} \nabla g(\mu_{k^\prime, j^\prime, a}) \notag\\
    =& \frac{1}{E[T_{k,j,a,i}]}\frac{1}{E[T_{k^\prime, j^\prime, a,i}]}\cov (N_{k,j,a,i},N_{k^\prime, j^\prime, a,i}) \notag\\
    &\qquad - \frac{1}{E[T_{k,j,a,i}]}\frac{E[N_{k^\prime, j^\prime, a,i}]}{E[T_{k^\prime, j^\prime, a,i}]^2} \cov (N_{k,j,a,i},T_{k^\prime, j^\prime, a,i}) \notag\\
    &\qquad - \frac{E[N_{k,j,a,i}]}{E[T_{k,j,a,i}]^2}\cdot \frac{1}{E[T_{k^\prime, j^\prime, a,i}] }\cov (T_{k,j,a,i},N_{k^\prime, j^\prime, a,i})  \notag\\
    &\qquad + \frac{E[N_{k,j,a,i}]}{E[T_{k,j,a,i}]^2}\frac{E[N_{k^\prime, j^\prime, a,i}]}{E[T_{k^\prime, j^\prime, a,i}]^2} \cov (T_{k,j,a,i},T_{k^\prime, j^\prime, a,i}) ~, \label{eq: cov expansion}
\end{align}
and
\begin{align}
    \cov (N_{k,j,a,i},N_{k^\prime, j^\prime, a,i}) &= -E[N_{k,j,a,i}]E[N_{k^\prime, j^\prime, a,i}] \label{eq: Omega NN ii}\\
    \cov (N_{k,j,a,i},T_{k^\prime, j^\prime, a,i}) &= E[N_{k,j,a,i}](I(kj\text{ after } k^\prime j^\prime)\tau_{k^\prime, j^\prime} - E[T_{k^\prime, j^\prime, a,i}])\label{eq: Omega NT ii} \\
    \cov (T_{k,j,a,i},N_{k^\prime, j^\prime, a,i}) &= E[N_{k^\prime, j^\prime, a,i}](I(k^\prime, j^\prime \text{ after } k j)\tau_{k, j} - E[T_{k, j, a,i}]) \label{eq: Omega TN ii} \\
    \cov (T_{k,j,a,i},T_{k^\prime, j^\prime, a,i}) &= E[T_{k,j,a,i}]E[T_{k^\prime, j^\prime, a,i}]\bigg( I(kj \text{ after } k^\prime j^\prime )\frac{\tau_{k^\prime, j^\prime } }{E[T_{k^\prime, j^\prime a,i}]} \notag\\
    &\quad + I(k^\prime j^\prime  \text{ after } k j )\frac{\tau_{k, j} }{E[T_{k, j, a,i}]}\bigg) - E[T_{k,j,a,i}]E[T_{k^\prime j^\prime a,i}] ~. \label{eq: Omega TT ii}
\end{align}

We derive \eqref{eq: Omega NN ii}-\eqref{eq: Omega TT ii} in turn. First,
\begin{align*}
    &\cov (N_{k,j,a,i},N_{k^\prime, j^\prime, a,i} )\\
    =&E[N_{k,j,a,i}N_{k^\prime, j^\prime, a,i}]-E[N_{k,j,a,i}]E[N_{k^\prime, j^\prime, a,i}] \\
    =&E[N_{k,j,a,i}I(N_{k,j,a,i}=0)N_{k^\prime, j^\prime, a,i}]-E[N_{k,j,a,i}]E[N_{k^\prime, j^\prime, a,i}] \\
    =&-E[N_{k,j,a,i}]E[N_{k^\prime, j^\prime, a,i}] ~.
\end{align*}
Without loss of generality, we have taken $(k,j)$ to be prior to $(k^\prime, j^\prime)$ in the third line. 

Next, we consider $\cov (N_{k,j,a,i},T_{k^\prime, j^\prime, a,i})$. Suppose $(k,j)$ occurs before $(k^\prime, j^\prime)$. Then,
\begin{align*}
    &\cov (N_{k,j,a,i},T_{k^\prime, j^\prime, a,i}) \\
    =& E[N_{k,j,a,i}I(N_{k,j,a,i}=0)T_{k^\prime, j^\prime, a,i}]-E[N_{k,j,a,i}]E[T_{k^\prime, j^\prime, a,i}] \\
    =& -E[N_{k,j,a,i}]E[T_{k^\prime, j^\prime, a,i}] ~.
\end{align*}
However, if  $(k,j)$ occurs after  $(k^\prime, j^\prime)$, 
\begin{align*}
    &\cov (N_{k,j,a,i},T_{k^\prime, j^\prime a,i}) \\
    =&E[I(T_{k^\prime, j^\prime, a,i}=\tau_{k^\prime, j^\prime })T_{k^\prime, j^\prime, a,i} N_{k,j,a,i}]-E[N_{k,j,a,i}]E[T_{k^\prime, j^\prime, a,i}] \\
    =& \tau_{k^\prime, j^\prime }E[N_{k,j,a,i}]-E[N_{k,j,a,i}]E[T_{k^\prime, j^\prime a,i}]~,
\end{align*}
hence \eqref{eq: Omega NT ii} holds. Expression \eqref{eq: Omega TN ii} follows by permuting $(k,j)\leftrightarrow (k^\prime, j^\prime)$ in \eqref{eq: Omega NT ii}. 

Next, to derive \eqref{eq: Omega TT ii}, suppose again that $(k,j)$ is prior to $(k^\prime, j^\prime)$. Then,
\begin{align*}
    &\cov (T_{k,j,a,i},T_{k^\prime, j^\prime, a,i}) \\
    =&E[T_{k,j,a,i}I(T_{k,j,a,i}=\tau_{k,j})T_{k^\prime, j^\prime, a,i}]-E[T_{k,j,a,i}]E[T_{k^\prime, j^\prime, a,i}] \\
    =&\tau_{k,j}E[T_{k^\prime, j^\prime, a,i}]-E[T_{k,j,a,i}]E[T_{k^\prime, j^\prime, a,i}] ~,
\end{align*}
and vice versa when $(k,j)$ occurs after $(k^\prime, j^\prime)$.

Substituting \eqref{eq: Omega NN ii}-\eqref{eq: Omega TT ii} into \eqref{eq: cov expansion} establishes that $\nabla g(\mu_{k,j,a})^T\Omega_{k,j,a}^{k^\prime, j^\prime, a} \nabla g(\mu_{k^\prime, j^\prime a})=0$.

Finally, we consider the diagonal entries of $B\Sigma B^T$, given by $\nabla g(\mu_{k,j,a})^T\Omega_{k,j,a}^{k,j,a} \nabla  g(\mu_{k,j,a})$. Evaluating the gradient gives $\nabla g(\mu_{k,j,a})=(1/E[T_{k,j,a,i}],-E[N_{k,j,a,i}]/E[T_{k,j,a,i}]^2)^T$. The entries of the matrix $\Omega_{k,j,a}^{k,j,a}$ are
\begin{align}
    &\var N_{k,j,a,i} = E[N_{k,j,a,i}](1-E[N_{k,j,a,i}]) ~,\label{eq: Omega NN}\\
    &\var T_{k,j,a,i} = \gamma_{k,j,a}\left(\tau_{k,j}E[T_{k,j,a,i}]-E[T_{k,j,a,i}]^2\right) ~,\label{eq: Omega TT}\\
    &\cov (N_{k,j,a,i},T_{k,j,a,i}) = \notag\\
    &\qquad\eta_{k,j,a}\sqrt{E[N_{k,j,a,i}](1-E[N_{k,j,a,i}])E[T_{k,j,a,i}]^2\gamma_{k,j,a}\left(\frac{\tau_{k,j}}{E[T_{k,j,a,i}]}-1\right)} ~. \label{eq: Omega NT}
\end{align}

Expression \eqref{eq: Omega NN} holds since $N_{k,j,a,i}\sim \mathrm{Ber}(E[N_{k,j,a,i}])$. Next, \eqref{eq: Omega TT} holds by definition of $\gamma_{k,j,a}$. 
Since $T_{k,j,a,i}$ is bounded by $0\leq T_{k,j,a,i} \leq \tau_{k,j}$,
\begin{align*}
    \var T_{k,j,a,i} &= \tau^2_{k,j} \var \left(\frac{T_{k,j,a,i}}{\tau_{k,j}}\right) \\
    &= \tau_{k,j}^2\left(E\left[\left(\frac{T_{k,j,a,i}}{\tau_{k,j}}\right)^2 \right]-E\left[\frac{T_{k,j,a,i}}{\tau_{k,j}} \right]^2 \right) \\
    &\leq \tau_{k,j}^2\left(E\left[\frac{T_{k,j,a,i}}{\tau_{k,j}} \right]-E\left[\frac{T_{k,j,a,i}}{\tau_{k,j}} \right]^2 \right) \\
    &=E[T_{k,j,a,i}](\tau_{k,j}-E[T_{k,j,a,i}])~,
\end{align*}
and therefore $0\leq \gamma_{k,j,a} \leq 1$.
Finally, expression \eqref{eq: Omega NT} follows from the definition of $\eta_{k,j,a}$, and  $0\leq |\eta_{k,j,a}| \leq 1$ by the Cauchy-Schwarz inequality.

Thus, using $\lambda_{k,j,a}=E[N_{k,j,a,i}]/E[T_{k,j,a,i}]$ and \eqref{eq: Omega NN}-\eqref{eq: Omega NT} in $\nabla g(\mu_{k,j,a})^T\Omega_{k,j,a}^{k,j,a} \nabla  g(\mu_{k,j,a})$ gives \eqref{eq: asymptotic var}.

Expressions \eqref{eq: asymptotic distribution} and \eqref{eq: asymptotic var} imply that
\begin{align}
   \sqrt{n}( \widehat\lambda_{k,j,a}-\lambda_{k,j,a} )\xlongrightarrow{d}\mathcal{N}(0,\nabla g(\mu_{k,j,a})^T\Omega_{k,j,a}^{k,j,a} \nabla g(\mu_{k,j,a}))~. \label{eq: variance lambda}
\end{align}
We construct a variance estimator of $\widehat\lambda_{k,j,a}$ by estimating $\nabla g(\mu_{k,j,a})^T\Omega_{k,j,a}^{k,j,a} \nabla g(\mu_{k,j,a})$. 
By Slutsky's theorem and the continous mapping theorem,
\begin{align*}
    \frac{\widehat\lambda_{k,j,a}^2}{\frac{1}{n}\sum_{i=1}^n N_{k,j,a,i}} \xlongrightarrow{p}  \frac{\lambda_{k,j,a}^2}{E[N_{k,j,a,i}]} ~,
\end{align*}
and thus, by \eqref{eq: asymptotic var}, 
\begin{align}
   \frac{\frac{\widehat\lambda_{k,j,a}^2}{N_{k,j,a}}}{\frac{1}{n}\nabla g(\mu_{k,j,a})^T\Omega_{k,j,a}^{k,j,a} \nabla g(\mu_{k,j,a})}\xlongrightarrow{p} 1 \label{eq: avar lambda}
\end{align}
as $n\longrightarrow \infty$ and $  \lambda_{k,j,a}\tau_{k,j}\longrightarrow 0$, using the fact that $\gamma_{k,j,a},|\eta_{k,j,a}|, E[T_{k,j,a,i}]/\tau_{k,j}\in [0,1]$. This motivates the variance estimator $\widetilde\var\widehat\lambda_{k,j,a}$, defined in \eqref{eq: lambda var estimator}, in the limit of rare events (Assumption~\ref{ass: rare events K intervals}).

\subsubsection{Composite intervals\label{appsec:composite intervals}}

To derive the log transformed confidence interval \eqref{eq: CI_VE_kj}, let $h_{\log} (\boldsymbol{\lambda})=\log (\lambda_{k,j,a=1}/\lambda_{k,j,a=0})$. Then, by \eqref{eq: asymptotic distribution} and Theorem 8.22 in \citet{lehmann_theory_1998},
\begin{align*}
    \sqrt{n}\left(\log \frac{\widehat\lambda_{k,j,a=1}}{\widehat\lambda_{k,j,a=0}}-\log\frac{\lambda_{k,j,a=1}}{\lambda_{k,j,a=0}}\right)\xlongrightarrow{d} \mathcal{N}(0,\nabla h_{\log}(\boldsymbol{\lambda} )^T B\Sigma B^T \nabla h_{\log} (\boldsymbol{\lambda}))~,
\end{align*}
where $\nabla h_{\log}(\boldsymbol{\lambda} )^T B\Sigma B^T \nabla h_{\log} (\boldsymbol{\lambda})=\sum_{a=0}^1 (1/\lambda_{k,j,a}^2) \nabla g(\mu_{k,j,a})^T\Omega_{k,j,a}^{k,j,a} \nabla g(\mu_{k,j,a})$. Thus,
\begin{align*}
    \frac{\frac{1}{N_{k,j,a=0}}+\frac{1}{N_{k,j,a=0}}}{\frac{1}{n} \nabla h_{\log}(\boldsymbol{\lambda} )^T B\Sigma B^T \nabla h_{\log} (\boldsymbol{\lambda}))} \xlongrightarrow{p} 1
\end{align*}
as $n\longrightarrow \infty$ and $  \lambda_{k,j,a}\tau_{k,j}\longrightarrow 0$, which establishes $1/N_{k,j,a=0}+1/N_{k,j,a=1}$ as an asymptotic variance estimator of $\log (\widehat\lambda_{k,j,a=1}/\widehat\lambda_{k,j,a=0})$ under Assumption~\ref{ass: rare events K intervals}.

Let $\boldsymbol{\Lambda}$ be a vector containing components $\Lambda_{k,a}$ \eqref{eq: Lambda_ka definition} for all $k,a$, and let $h_\Lambda(\boldsymbol{\lambda})=\boldsymbol{\Lambda}$. Correspondingly, let $\widehat{\boldsymbol{\Lambda}}=h_\Lambda(\widehat{\boldsymbol{\lambda}})$. Then, by using \eqref{eq: asymptotic distribution} in Theorem 8.22 in \citet{lehmann_theory_1998},
\begin{align}
    \sqrt{n}(\widehat{\boldsymbol{\Lambda}}- \boldsymbol{\Lambda} ) \xlongrightarrow{d}\mathcal{N}(0,\Sigma_\Lambda)~, \label{eq: cum haz limiting dist}
\end{align}
where $\Sigma_\Lambda$ is a diagonal covariance matrix with asymptotic variances
\begin{align*}
    n\var \widehat\Lambda_{k,a} \xlongrightarrow{p} \sum_{j=1}^{j_k}\tau_{k,j}^2\nabla g(\mu_{k,j,a})^T\Omega_{k,j,a}^{k,j,a} \nabla g(\mu_{k,j,a}) ~.
\end{align*}
Consequently,
\begin{align*}
    \frac{\widehat\var\widehat\Lambda_{k,a}}{\frac{1}{n}\sum_{j=1}^{j_k}\tau_{k,j}^2\nabla g(\mu_{k,j,a})^T\Omega_{k,j,a}^{k,j,a} \nabla g(\mu_{k,j,a})} \xlongrightarrow{p} 1~,
\end{align*}
as $n\longrightarrow \infty$ and $\lambda_{k,j,a}\tau_{k,j}\longrightarrow 0$, which motivates the variance estimator $\widehat\var\widehat\Lambda_{k,a}$ under Assumption~\ref{ass: rare events K intervals}.

We derive the estimator of the asymptotic variance of $\log  (1-\widehat{\mathcal{L}}_2)$, as the other estimators in Table~\ref{tab:variance estimates} follow from similar arguments, using corresponding transformations. Let $h_\mathcal{L}(\boldsymbol{\Lambda})=\log\{(\Lambda_{k=1,a=1}+\Lambda_{k=2,a=1})/\Lambda_{k=2,a=0}\}$. Then, using \eqref{eq: cum haz limiting dist} in Theorem 8.22 in \citet{lehmann_theory_1998},
\begin{align*}
    \sqrt{n} \left(\log  (1-\widehat{\mathcal{L}}_2) - \log (1-\mathcal{L}_2) \right) \xlongrightarrow{d} \mathcal{N}(0, \nabla h_\mathcal{L}(\boldsymbol{\Lambda})^T \Sigma_\Lambda \nabla h_\mathcal{L}(\boldsymbol{\Lambda}) ) ~.
\end{align*}
Hence, the asymptotic variance of $\log  (1-\widehat{\mathcal{L}}_2)$ is
\begin{align}
    & \nabla h_\mathcal{L}(\boldsymbol{\Lambda})^T \Sigma_\Lambda \nabla h_\mathcal{L}(\boldsymbol{\Lambda}) \notag\\
    =&\frac{1}{(\Lambda_{k=1,a=1}+\Lambda_{k=2,a=1})^2}\sum_{k=1}^2 \sum_{j=1}^{j_k} \tau_{k,j}^2 \nabla g(\mu_{k,j,a=1})^T\Omega_{k,j,a=1}^{k,j,a=1} \nabla g(\mu_{k,j,a=1}) \notag\\
    &\qquad + \frac{1}{\Lambda_{k=2,a=0}^2} \sum_{j=1}^{j_2} \tau_{k=2,j}
    ^2\nabla g(\mu_{k=2,j,a=0})^T \Omega_{k=2,j,a=0}^{k=2,j,a=0} \nabla g(\mu_{k=2,j,a=0})  ~. \label{eq: asymptotic var composite}
\end{align}

Finally,
\begin{align*}
    \frac{\frac{\widehat\var \widehat\Lambda_{2,a=0}}{\widehat\Lambda_{2,a=0}^2}+ \frac{\widehat\var\widehat\Lambda_{1,a=1}+\widehat\var\widehat\Lambda_{2,a=1}}{(\widehat\Lambda_{1,a=1}+\widehat\Lambda_{2,a=1})^2}}{\frac{1}{n}\nabla h_\mathcal{L}(\boldsymbol{\Lambda})^T \Sigma_\Lambda \nabla h_\mathcal{L}(\boldsymbol{\Lambda})} \xlongrightarrow{p} 1
\end{align*}
as $n\longrightarrow \infty$ and $\lambda_{k,j,a}\tau_{k,j}\longrightarrow 0$ for all $k,j,a$. This motivates the estimator $\widetilde\var\log  (1-\widehat{\mathcal{L}}_2)$ in Table~\ref{tab:variance estimates} under Assumption~\ref{ass: rare events K intervals}.

\section{Example: RTS,S/AS01 vaccine against malaria\label{sec:RTSS}}
In this section, we apply the estimators described in Section~\ref{sec:maintext estimation} on the synthetic dataset by \citet{benkeser_estimating_2019}. The dataset is publicly available, and resembles the RTS,S/AS01 malaria vaccine trial described by \citet{noauthor_first_2011,noauthor_phase_2012}. Here, individuals were randomly assigned to the RTS,S/AS01 malaria versus a comparator vaccine, meningococcal serogroup C conjugate vaccine (Menjugate, Novartis) by double-blinded assignment. \citet{noauthor_phase_2012} reported a 1-year cumulative incidence of clinical malaria of 0.37 in the RTS,S/AS01 group and 0.48 for the comparator vaccine. Kaplain-Meier estimates of survival in the synthetic RTSS data are given in Figure~\ref{fig:survival_RTSS}.
\begin{figure}
    \centering
    \includegraphics[width=0.6\linewidth]{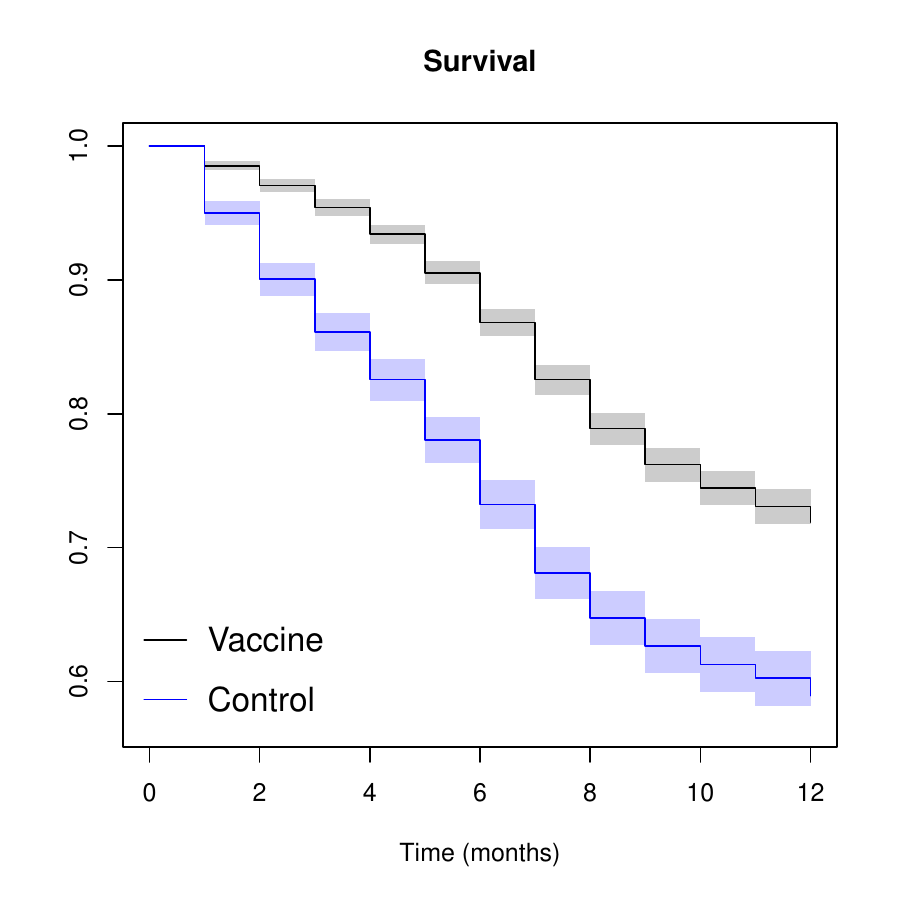}
    \caption{Kaplan-Meier survival estimates for the RTSS data by \citet{benkeser_estimating_2019}. Confidence intervals are shown in shaded colors.}
    \label{fig:survival_RTSS}
\end{figure}
Investigators found that the instantaneous hazard of infection in the RTS,S/AS01 group increased over time relative to the hazard for recipients of the comparator vaccine, and concluded that 
\begin{quote}
    ``[...] [S]tatistical models indicated nonproportionality of hazards over time. This could be due to waning vaccine efficacy, differential acquisition of natural immunity, or other factors that may influence the model, such as heterogeneity of exposure, the vaccine effect at the individual level, or both'' \citep{noauthor_phase_2012}.
\end{quote}
To investigate whether the vaccine protection waned over time, we computed the estimators described in Section~\ref{sec:maintext estimation} without any baseline covariates, reported in Table~\ref{tab:marginal_RTSS}. Here, we let $k=1$ denote the interval from month 1 to the end of month 5 after baseline, and $k=2$ denote the interval from month 6 to the end of month 10.
\begin{table}[htbp]
  \centering
  \caption{Estimates and 95\% bootstrap confidence intervals for the synthetic RTSS dataset by \citet{benkeser_estimating_2019}, computed using the estimators in Section~\ref{sec:maintext estimation} without baseline covariates. One-sided $95\%$ confidence intervals have been used for $\widehat{\mathcal{L}}_2$, $\widehat{\mathcal{U}}_2$, $\widehat{\mathcal{L}}_\psi$ and $\widehat{\mathcal{U}}_\psi$, whereas confidence intervals for $\widehat{\text{VE}}_1^\mathrm{obs},\widehat{\text{VE}}_2^\mathrm{obs}$ and $\widehat{\psi}^\mathrm{obs}$ are two-sided.}
    \begin{tabular}{lr}
         \hline Estimand & \multicolumn{1}{l}{Estimate (95\% CI)}  \\\hline
     $\widehat{\text{VE}}_1^{\text{obs}},\widehat{\text{VE}}_1^{\text{challenge}}$ & $0.57(0.51,0.62)$ \\
    $\widehat{\text{VE}}_2^{\text{obs}}$ & $0.17(0.07,0.26)$ \\
    $\widehat{\mathcal{L}}_2$ & $-0.52 (-0.69, -)$ \\
    $\widehat{\mathcal{U}}_2$ & $0.59(-,0.61)$ \\
    $\widehat{\mathcal{L}}_\psi$ & $0.28 (0.24, - )$\\
    $\widehat{\mathcal{U}}_\psi$ & $1.04(-,1.16)$ \\
    $\widehat{\psi}^\mathrm{obs}$ & $0.52(0.44,0.61)$
    \end{tabular}%
  \label{tab:marginal_RTSS}%
\end{table}%
Additionally, we have computed the estimator $\widehat{\psi}^\mathrm{obs}=(1-\widehat{\text{VE}}_1^\mathrm{obs})/(1-\widehat{\text{VE}}_2^\mathrm{obs}$). Throughout this section, confidence intervals were computed using non-parametric bootstrap with 500 resamples.

The estimated bounds $\widehat{\mathcal{L}}_\psi,\widehat{\mathcal{U}}_\psi$ contain the null-value 1, corresponding to no waning, and therefore do not rule out the possibility that the decline in $\text{VE}^\mathrm{obs}_k$ over time could be due to the depletion of susceptible individuals. However, $\widehat{\mathcal{U}}_2$ is close to $\text{VE}_1^\mathrm{challenge}$. A conditional analysis (Table~\ref{tab:RTSS} and Figure~\ref{fig:CDF}(A)) using the baseline covariates age, sex and study site leads to the same conclusion: the point estimates of $\widehat{\mathcal{U}}_\psi(L)$ indicate waning ($\widehat{\mathcal{U}}_\psi(L)<1$) for a substantial proportion of the observed values of $L$ (Figure~\ref{fig:CDF}(A)), but confidence intervals include the null value of no waning.

\begin{table}[htbp]
  \centering
  \caption{Waning estimates and 95\% bootstrap confidence intervals for the synthetic RTSS dataset by \citet{benkeser_estimating_2019}, illustrated using 3 different combinations of baseline covariates }
    \resizebox{\linewidth}{!}{
    \begin{tabular}{p{0.4cm}p{0.4cm}p{0.4cm}p{2cm}p{2cm}p{2.4cm}p{2.4cm}p{2.4cm}p{2.4cm}p{2cm}}
    \hline Age (weeks) & Sex   & Study site & $\widehat{\text{VE}}_1^\mathrm{obs}(l)$ & $\widehat{\text{VE}}_2^\mathrm{obs}(l)$ & $\widehat{\mathcal{L}}_2(l)$   & $\widehat{\mathcal{U}}_2(l)$ & $\widehat{\mathcal{L}}_\psi(l)$   & $\widehat{\mathcal{U}}_\psi(l)$ & $\widehat{\psi}^\mathrm{obs}(l)$ \\\hline
    51    & female  & 1     & 0.74(0.62,0.81) & 0.53(0.36,0.64) & 0.30(0.10,-) & 0.73(-,0.78) & 0.38(0.33,-) & 0.96(-,1.09) & 0.56(0.46,0.68) \\
    48    & male    & 5     & 0.68(0.58,0.75) & 0.44(0.27,0.57) & -0.01(-0.21,-) & 0.66(-,0.72) & 0.31(0.27,-) & 0.94(-,1.07) & 0.56(0.47,0.68) \\
    58    & male    & 3     & 0.55(0.43,0.64) & 0.23(0.04,0.37) & -0.51(-0.72,-) & 0.55(-,0.61) & 0.30(0.25,-) & 1.00(-,1.14) & 0.58(0.50,0.69) \\
    \end{tabular}
    }
  \label{tab:RTSS}%
\end{table}%
\begin{figure}
    \centering
\subfloat[]{
\includegraphics[width=0.5\linewidth]{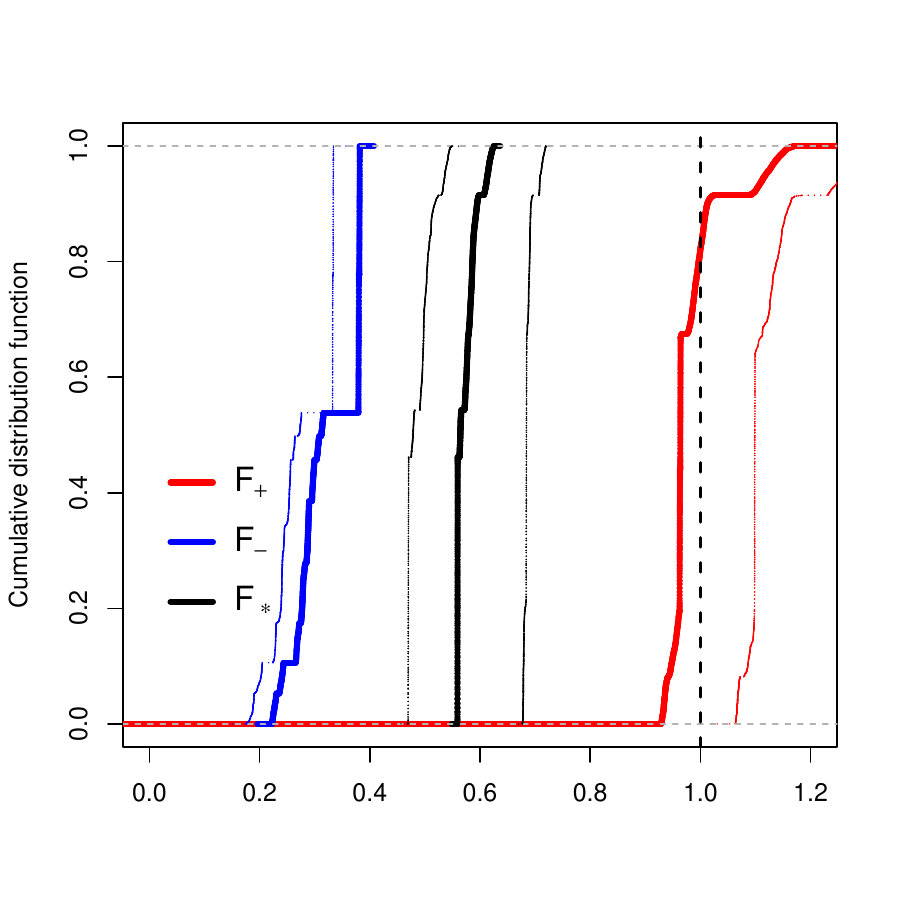}
}
\subfloat[]{
\includegraphics[width=0.5\linewidth]{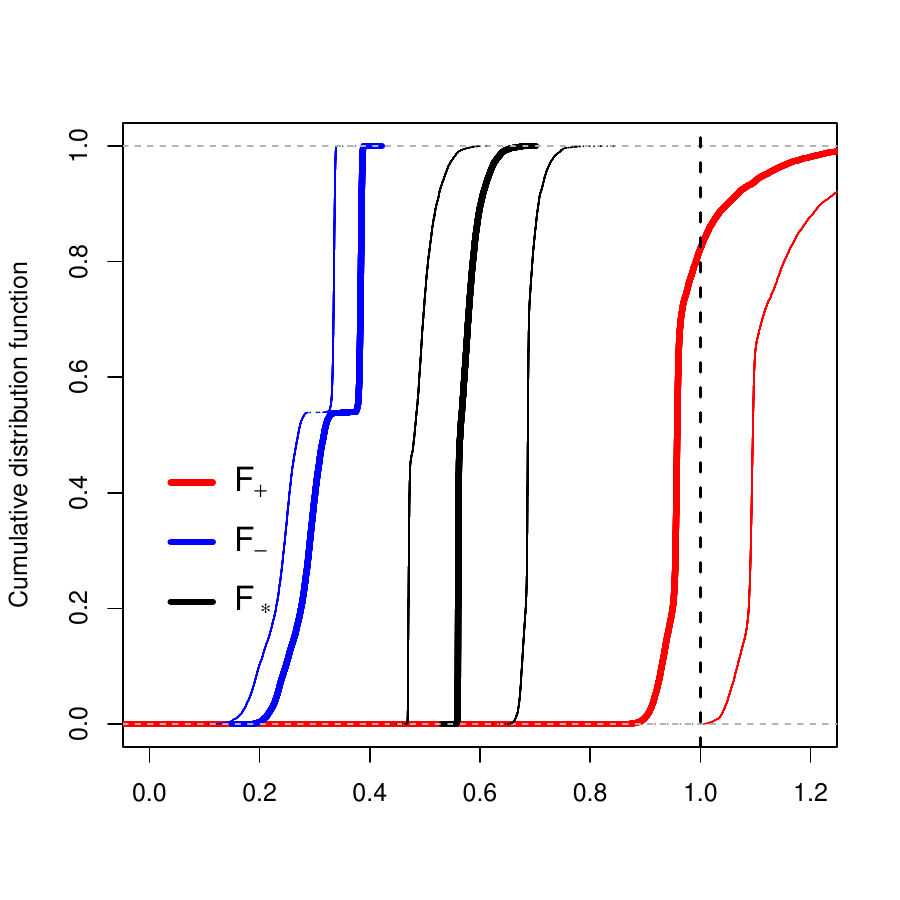}
}
\caption{Cumulative distribution functions (cdf) of bounds and waning estimands. The blue curve is the empirical cdf $F_-$ of $\widehat{\mathcal{L}}_\psi(L)$, over the observed values of $L$. The black curve is the empirical cdf  $F_*$ of $\widehat{\psi}^\mathrm{obs}(L)$, and the red curve is the empirical  cdf  $F_+$ of $\widehat{\mathcal{U}}_\psi(L)$. (A) Shows estimates  with baseline covariates age, sex and study site (B) Shows estimates with baseline covariates age, weight for age (Z score), sex, study site, height for age (Z score), weight for height (Z score), arm circumference (Z score), hemoglobin, distance to nearest inpatient clinic, distance to nearest outpatient clinic and an indicator of rainy versus dry season.  Point estimates are shown with thick lines, and 95\% bootstrap confidence intervals with narrow lines.}
    \label{fig:CDF}
\end{figure}

As a sensitivity analysis, we repeated the analysis in Table~\ref{tab:RTSS}  after breaking tied event times within each month by drawing random days, which gave nearly identical results. Furthermore, we  performed a sensitivity analysis for the choice of baseline covariates, by including additional covariates in Figure~\ref{fig:CDF}(B). The inclusion of additional covariates did not substantially change the distributions of estimates.

\section{Simulated example \label{sec:simulation}}

In this section, we present simulations illustrating the use of logistic regression. Consider a hypothetical vaccine trial with individual-level data from months $k\in\{1,\dots K\}$ following vaccination, where we take $K=4$ (see e.g.\ \citet{voysey_safety_2021}).
Suppose the data were drawn from the following data generating mechanism. First sample $A,L$ according to
\begin{align*}
    A &\sim \mathrm{Ber}(1/2) ~,\\
    L &\sim \mathrm{Unif}[0,1] ~.
\end{align*}
Next, for $k\in\{1,\dots,K\}$, sample $C_k$ and $Y_k$ from the hazards
\begin{align}
    &P(C_k=1\mid Y_{k-1}=0,C_{k-1}=0,A=a,L=l)=\beta_{C} ~,\notag\\
    &\Lambda_{k,a,l} =f_k(a,l;\beta_k)=\expit(\beta_{0,k}+\beta_{1,k}a+\beta_{2,k}l) \label{eq: logistic model} ~.
\end{align}
For each time interval $k$, we computed maximum likelihood estimates $\widehat\beta_{k}$ under the logistic model \eqref{eq: logistic model}, and estimated $\mathcal{L}_k(l),\mathcal{U}_k(l)$ using \eqref{eq: plugin estimators}. We computed confidence intervals using non-parametric bootstrap with $500$ resamples of $n=$~10,000 individuals. The resulting estimates and confidence intervals are shown in Figure~\ref{fig:simulation}. In Figure~\ref{fig:simulation}(A) and (B), the estimated bounds place informative constraints on the extent of vaccine waning, and are close in value to the observed vaccine efficacy $\text{VE}_k^\mathrm{obs}$. However, for the choice of parameters in Figure~\ref{fig:simulation}(C) and (D),  the bounds $\mathcal{L}_k(l),\mathcal{U}_k(l)$ differ substantially from $\text{VE}_k^\mathrm{obs}$, which illustrates that $\text{VE}_k^\mathrm{challenge}$ can be far greater (or smaller) than $\text{VE}_k^\mathrm{obs}$.

\begin{figure}
    \centering
\subfloat[]{
\includegraphics[width=0.5\linewidth]{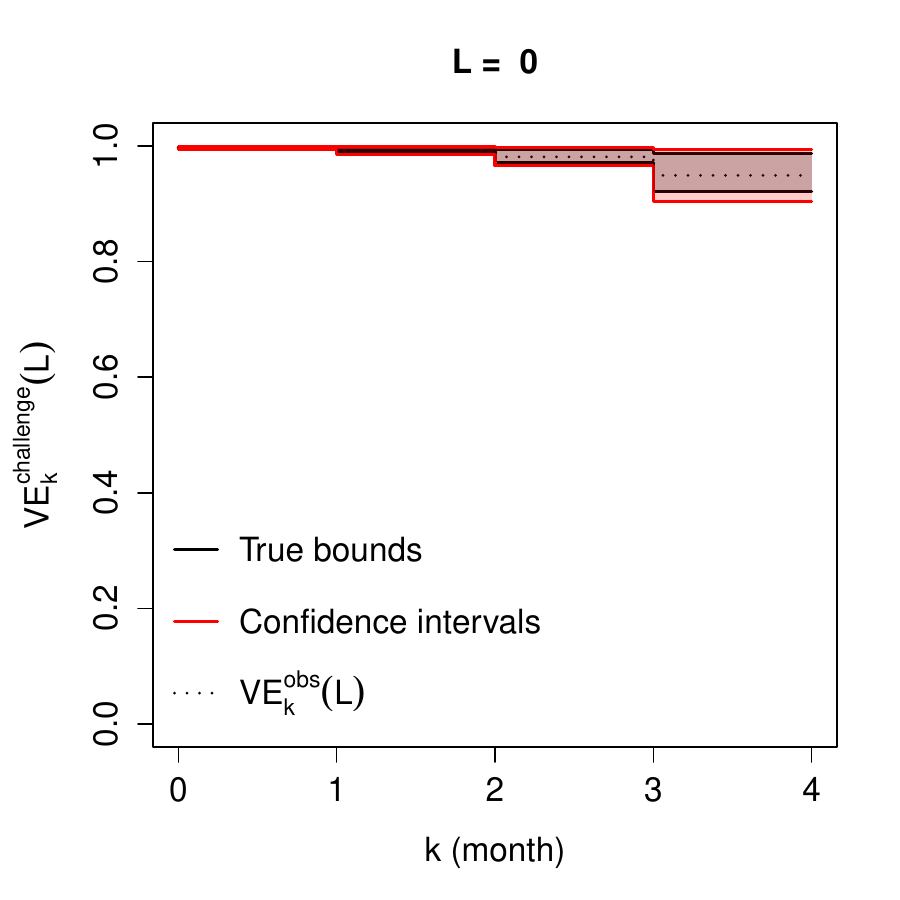}
}
\subfloat[]{
\includegraphics[width=0.5\linewidth]{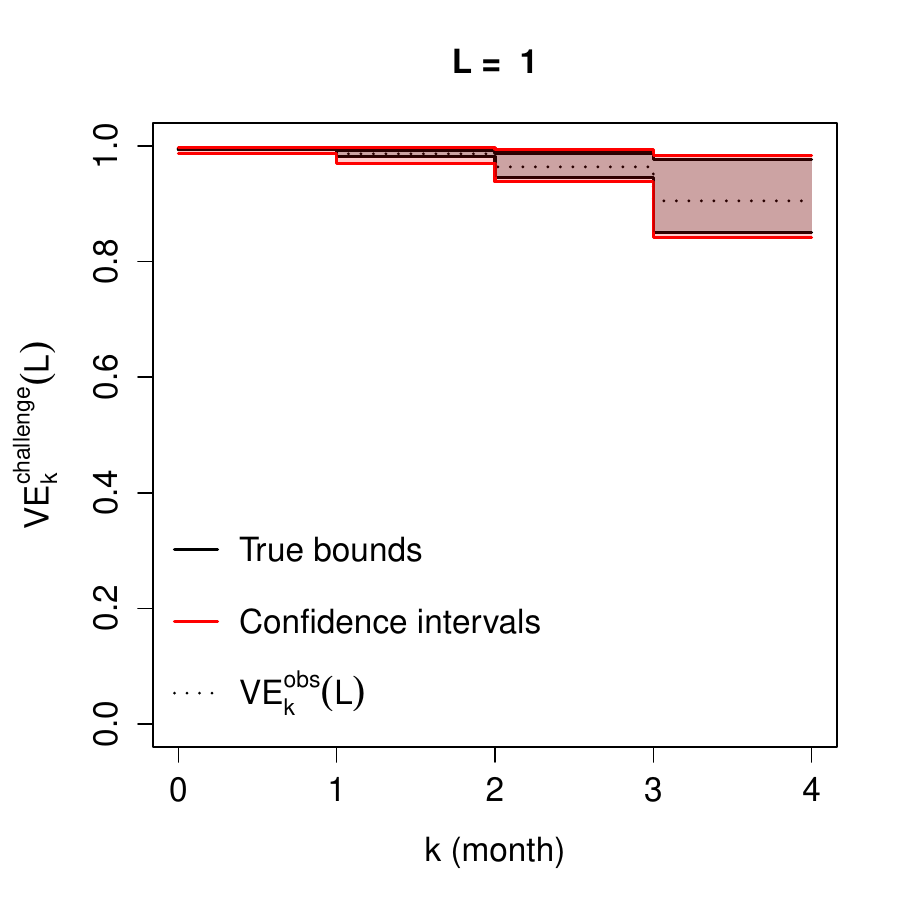}
}
\\
\subfloat[]{
\includegraphics[width=0.5\linewidth]{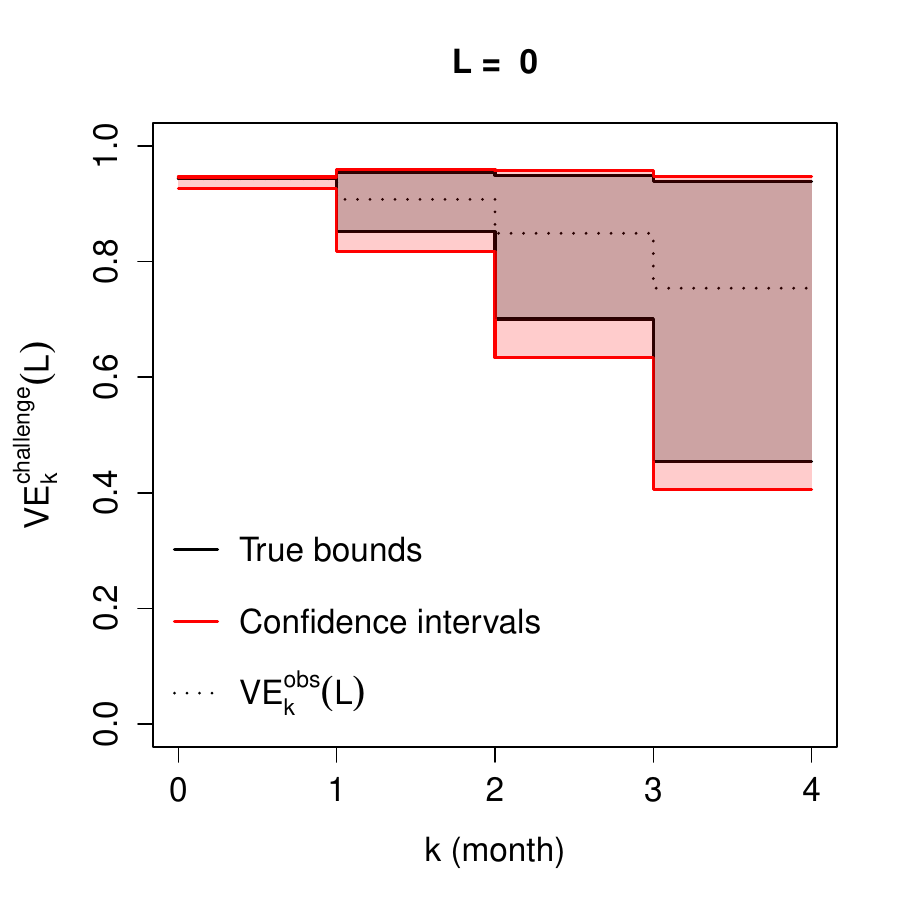}
}
\subfloat[]{
\includegraphics[width=0.5\linewidth]{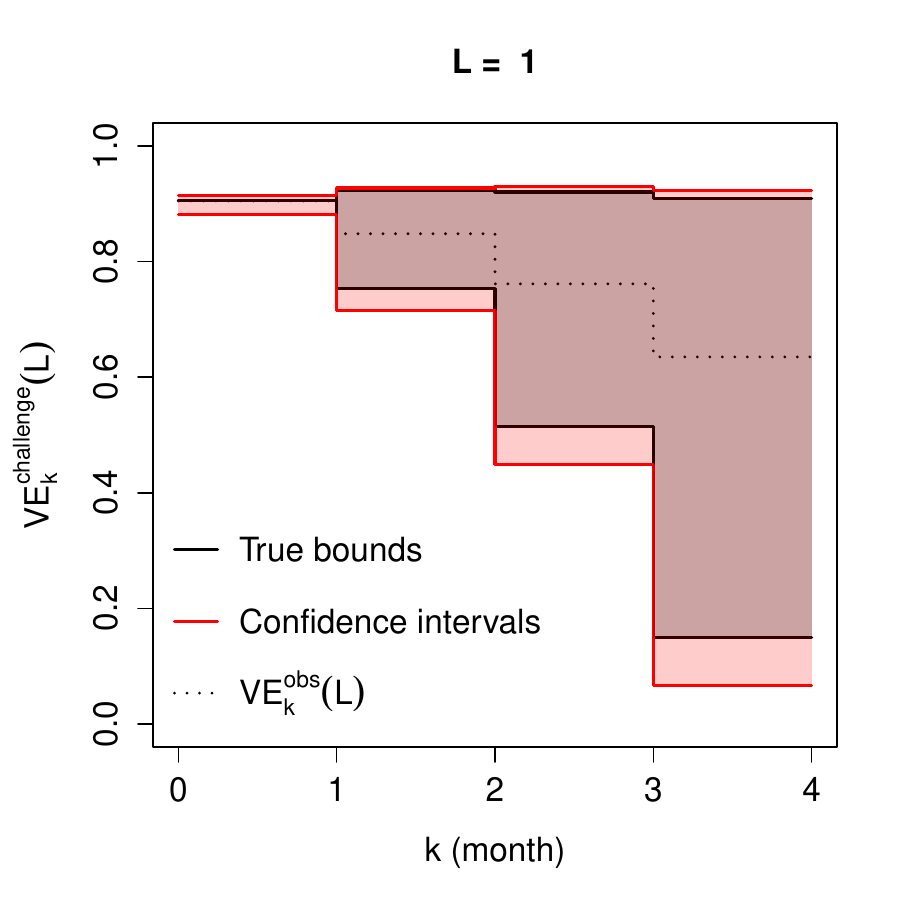}
}
\caption{The true bounds $\mathcal{L}_k,\mathcal{U}_k$ for baseline covariate levels $L=0$  and $L=1$   are shown together with one-sided 95\% confidence intervals of $\widehat{\mathcal{L}}_k,\widehat{\mathcal{U}}_k$, computed from non-parametric bootstrap with 500 resamples from a population of 10,000 individuals. (A) and (B) show a choice of parameters $(\beta_C,\beta_k)$ that gives narrow bounds, where $\text{VE}_2^\mathrm{challenge}$ is close to $\text{VE}_2^\mathrm{obs}$,  while (C) and (D) show another choice of parameters that gives wider bounds.}
    \label{fig:simulation}
\end{figure}

\section{Further analyses\label{app: further analyses}}

\subsection{Sensitivity analysis of waning estimates}
For $k\in\{1,2\}$, let $Q_k$ denote the exact time of the controlled exposure to a fixed quantity of the infectious agent during interval $k$, supposing that individuals are isolated from infectious exposures before and after this time. Suppose that Assumptions~\ref{ass:consistency 2 intervals}~(i) and (ii) for $e_1=0$, Assumptions~\ref{ass: exposure necessity 2 intervals}-\ref{ass: exclusion 2 intervals} and Assumptions~\ref{ass: exposure positivity 2 intervals}-\ref{ass: exposure exchageability multiple versions} hold.  Let $q^\dagger$ denote the time that vaccine recipients are most protected against a controlled infectious exposure. We assume that this is after an immune response has developed a few days following the second dose of treatment, and before the immune response (potentially) begins to wane. Then, by definition, 
\begin{align*}
    E[\Delta Y_1^{a=1,q^\dagger}\mid L] &\leq E[\Delta Y_1^{a=1,q}\mid L] \text{ w.p.~1}~,
\end{align*}
for all $q\in\mathcal{Q}$, which implies that
\begin{align}
    E[\Delta Y_1^{a=1,q^\dagger}\mid L]\leq E_G[E_G[\Delta Y_1^{a=1,Q_1}\mid L]\mid  E_1=1,A=1, L]~. \label{eq: conservative bound q dagger}
\end{align}
Furthermore, we will assume that Assumption~\ref{ass: no waning placebo multiple versions} holds for any duration of time intervals, such that the timing of the controlled infectious exposure does not affect the risk of infectious outcomes in placebo recipients \textit{even within} interval $k=1$. This implies that 
\begin{align}
     E[\Delta Y_1^{a=0,q^\dagger}\mid L] = E_G[E_G[\Delta Y_1^{a=0,Q_1}\mid L]\mid  E_1=1,A=0, L]~. \label{eq: no waning placebo within k=1}
\end{align}
Combining \eqref{eq: conservative bound q dagger} and \eqref{eq: no waning placebo within k=1} with Proposition~\ref{prp: randomized exposure version} by invoking Assumptions~\ref{ass: exposure necessity 2 intervals}-\ref{ass: exclusion 2 intervals} and Assumption~\ref{ass: treatment exchageability 2 intevals} gives
\begin{align*}
    \frac{E[\Delta Y_1^{a=1,q^\dagger}\mid L]}{E[\Delta Y_1^{a=0,q^\dagger}\mid L]} \leq \frac{E[\Delta Y_1\mid A=1,L]}{E[\Delta Y_1\mid A=0,L]}~.
\end{align*}
Thus, $\widehat{\text{VE}}_1^\mathrm{obs}$ is a conservative estimate of $\text{VE}_1^\mathrm{challenge}(l,q^\dagger)$. Furthermore, under Assumption~\ref{ass: no waning placebo multiple versions} for any duration of time intervals, 
\begin{align*}
    E[\Delta Y_1^{a=0,q^\dagger}\mid L]=E_G[E_G[\Delta Y_2^{a=0,q_1=0,Q_2^{q_1=0}}\mid L]\mid E_2^{q_1=0}=1,A=0,L]~.
\end{align*}
Under the additional Assumptions~\ref{ass: exposure necessity 2 intervals}-\ref{ass: exclusion 2 intervals} and Assumption~\ref{ass: treatment exchageability 2 intevals},  Proposition~\ref{prp: randomized exposure version} gives
\begin{align*}
    \frac{E[\Delta Y_1^{a=1,q^\dagger}\mid L]}{E_G[E_G[\Delta Y_2^{a=1,q_1=0,Q_2^{q_1=0}}\mid L]\mid  E_2^{q_1=0}=1,A=1, L]} \leq \mathcal{U}_\psi~.
\end{align*}
The left hand side is a contrast of a controlled infectious exposure at time $q^\dagger$ in interval 1 versus a randomly chosen time, $Q_2^{a,q_1=0}$, during interval $k=2$. Therefore, $\widehat{\mathcal{U}}_\psi$ conservatively estimates the extent of vaccine waning from interval 1 to 2.

In Tables~\ref{tab:Pfizer CI}-\ref{tab:Pfizer CI 2}, we perform a sensitivity analysis of the waning estimates reported in Table~\ref{tab:Pfizer CI 3}, using different choices of intervals $k=1$ and $k=2$.

\begin{table}[htbp]
  \centering
  \caption{Estimates and 95\% confidence intervals for \eqref{eq: observed VE}, \eqref{eq:L_VE_challenge_2}-\eqref{eq:U_VE_challenge_2} and \eqref{eq: LB minimal ass 2 intervals}-\eqref{eq: UB minimal ass 2 intervals}. Interval 1 ranged from 11 days after dose 1 until 2 months after dose 2 and interval 2 ranged from 2 months after dose 2 until the end of follow-up (chosen as day 190, see Table~\ref{tab:intervals}). Confidence intervals for the bounds $\mathcal{L}_\bullet,\mathcal{U}_\bullet$ were one-sided, whereas two-sided confidence intervals were used for VE estimates.}
    \begin{tabular}{lr}
         \hline Estimator & \multicolumn{1}{l}{Estimate (95\% CI)}  \\\hline
     $\widehat{\text{VE}}_1^{\text{obs}},\widehat{\text{VE}}_1^{\text{challenge}}$ & $0.95(0.93,0.97)$ \\
    $\widehat{\text{VE}}_2^{\text{obs}}$ & $0.88(0.84,0.90)$ \\
    $\widehat{\mathcal{L}}_2$ & $0.86 (0.83, -)$ \\
    $\widehat{\mathcal{U}}_2$ & $0.91(-,0.93)$ \\
    $\widehat{\mathcal{L}}_\psi$ & $0.33 (0.23, - )$\\
    $\widehat{\mathcal{U}}_\psi$ & $0.54(-,0.84)$
    \end{tabular}%
  \label{tab:Pfizer CI}%
\end{table}%

\begin{table}[htbp]
  \centering
  \caption{Illustrates two additional choices of interval $k=2$, denoted interval I and interval II.   Interval I ranged from 7 days after dose 2 until 2 months after dose 2 and interval II ranged from 2 months after dose 2 until 4 months after dose 2. Interval $k=1$ ranged from dose 1 until the beginning of interval I (or II). Bootstrap confidence intervals (95\%) for the bounds $\mathcal{L}_{\mathrm{I,II}},\mathcal{U}_{\mathrm{I,II}}$ are one-sided, whereas two-sided confidence intervals have been used for $\text{VE}^\mathrm{obs}_{\mathrm{I,II}}$. The naive contrast of  $\text{VE}^\mathrm{obs}_{\mathrm{I}}$~vs.~$\text{VE}^\mathrm{obs}_{\mathrm{II}}$ suggests that vaccine protection waned, but the bounds $[\mathcal{L}^\mathrm{obs}_{\mathrm{I}},\mathcal{U}^\mathrm{obs}_{\mathrm{I}}]$ and $[\mathcal{L}^\mathrm{obs}_{\mathrm{II}},\mathcal{U}^\mathrm{obs}_{\mathrm{II}}]$ overlap and are wide due to the substantial number of depleted individuals from dose 1 until the beginning of intervals I and II.  }
    \begin{tabular}{lr}
         \hline Estimator & \multicolumn{1}{l}{Estimate (95\% CI)}  \\\hline
     $\widehat{\text{VE}}_{\mathrm{I}}^{\text{obs}}$ & $0.96(0.93,0.98)$ \\
     $\widehat{\mathcal{L}}_\mathrm{I}$ & $0.83 (0.79, -)$ \\
    $\widehat{\mathcal{U}}_\mathrm{I}$ & $0.97(-,0.98)$ \\
    $\widehat{\text{VE}}_\mathrm{II}^{\text{obs}}$ & $0.90(0.87,0.93)$ \\
    $\widehat{\mathcal{L}}_\mathrm{II}$ & $0.81 (0.77, -)$ \\
    $\widehat{\mathcal{U}}_\mathrm{II}$ & $0.94(-,0.96)$ 
    \end{tabular}%
  \label{tab:Pfizer CI 2}%
\end{table}%

In the bounds \eqref{eq: bounds K intervals rare event}, we identified $E[Y_k^{\overline{c}=0}\mid A=a,L]$ by $\sum_{k^\prime=1}^k \Lambda_{k^\prime,a,L}$ under Assumption~\ref{ass: rare events K intervals}. The leading order correction is of order $(\sum_{k^\prime=1}^k\Lambda_{k^\prime,a,L})^2$, seen by Taylor expanding the approximation $E[\Delta Y_k^{\overline{c}=0}\mid A=a,L] \approx 1- \exp(\sum_{k^\prime=1}^k \Lambda_{k^\prime,a,L})$, which is small since the cumulative hazard point estimates during intervals 1 and 2 (for the choice of intervals in Table~\ref{tab:Pfizer CI 3})  were given by $\widehat\Lambda_{k=1,a=0}=0.020$, $\widehat\Lambda_{k=1,a=1}=0.001$, $\widehat\Lambda_{k=2,a=0}=0.029$ and $\widehat\Lambda_{k=2,a=1}=0.003$.

\subsection{Comparison against reported confidence intervals\label{sec:validation}}
By computing confidence intervals \eqref{eq: CI_VE_kj} of $\text{VE}_{k,j}^\mathrm{obs}$ (Table~\ref{tab:validation CI}), we illustrate that our approach gives identical point estimates and nearly identical confidence intervals to \citet{thomas_safety_2021}.

\begin{table}[htbp]
  \centering
  \caption{Validation of point estimates and confidence intervals computed from \eqref{eq: CI_VE_kj}, against corresponding numbers reported in Figure~2 of \citet{thomas_safety_2021} }
    \resizebox{\columnwidth}{!}{
    \begin{tabular}{p{ 10 cm }p{ 4cm  }p{ 4cm  }}
          \hline & Confidence interval (computed from \eqref{eq: CI_VE_kj}) & Confidence interval (\citet{thomas_safety_2021})   \\
    \hline Overall & $0.878( 0.853,0.898)$ & $0.878 (0.853 , 0.899)$ \\
    After dose 1 up to dose 2 & $0.584(0.414,0.705)$ & $0.584 (0.408 , 0.712)$ \\
    $<11$ days after dose 1 & $0.182(-0.236,0.459)$ & $0.182 (-0.261 , 0.473)$ \\
    $\geq 11$ days after dose 1 until dose 2 & $0.917(0.794,0.967)$ & $0.917 (0.796 , 0.974)$ \\
    After dose 2 until $<7$ days after & $0.915(0.723,0.974)$ & $0.915 (0.729 , 0.983)$ \\
    $\geq 7$ days after dose 2 & $0.912(0.889,0.930)$ & $0.912 (0.889 , 0.930)$ \\
    $\geq 7$ days after dose 2 until $<2$ months after & $0.962(0.932,0.979)$ & $0.962 (0.933 , 0.981)$\\
    $\geq 2$ months after dose 2 until $<4$ months after dose 2 & $0.901 (0.867,0.927)$ & $0.901 (0.866 , 0.929)$ \\
    $\geq 4$ months after dose 2 & $0.837 ( 0.748,0.895)$ & $0.837 (0.747 , 0.899)$ \\
    \end{tabular}%
     }
  \label{tab:validation CI}%
\end{table}%

\subsection{Subgroup analysis}
Let $\widehat\lambda_{a,l}=N_{a,l}/T_{a,l}$, where $N_{a,l}$ and $T_{a,l}$ are the overall number of recorded events and person time at risk for treatment group $A=a$ and baseline covariates $L=l$. The confidence interval
\begin{align}
    CI(\widehat\lambda_{a,l})=\widehat\lambda_{a,l} \exp\left(\pm \frac{z_{1-\alpha/2}}{\sqrt{N_{a,l}}} \right) \label{CI_lambda_al}
\end{align}
is asymptotically valid in large samples under Assumption~\ref{ass: rare events K intervals} by \eqref{eq: avar lambda} and Theorem 8.22 in \citet{lehmann_theory_1998} with transformation $h(\lambda_{a,l})=\log \lambda_{a,l}$. Assuming a constant hazard $\lambda_{a,l}$ during time $[0,\tau]$, the survival by day $\tau$ is given by $\exp(-\lambda_{a,l}\tau)$ and conversely the cumulative incidence by $\mu_{a,l}=1-\exp(-\lambda_{a,l}\tau)$. Then, using a log-minus-log transformation of the survival  \citep{aalen_survival_2008} together with \eqref{CI_lambda_al} gives the confidence interval
\begin{align}
    CI(\widehat\mu_{a,l}) = 1-\exp \left\{-\widehat\lambda_{a,l} \tau\exp\left(\pm \frac{z_{1-\alpha/2}}{\sqrt{N_{a,l}}} \right)\right\} ~. \label{eq: CI cum inc}
\end{align}
Confidence intervals for conditional hazards and cumulative incidences overlap for the different subgroups $L$ (Table~\ref{tab:subgroup analyses}).

\begin{table}[ht!]
  \centering
  \caption{Hazard point estimates and 95\% confidence intervals \eqref{CI_lambda_al}-\eqref{eq: CI cum inc} by subgroup $A=a,L=l$, computed using Table S7 in the Supplementary Appendix of \citet{thomas_safety_2021}. Hazards are given in units of $10^{-5}\textrm{days}^{-1}$. Cumulative incidences are evaluated on day $\tau=190$.}
    \resizebox{\linewidth}{!}{
    \begin{tabular}{lllll}
    \toprule
          & $CI(\widehat\lambda_{a=1,l})$ &     $CI(\widehat\lambda_{a=0,l})$ & $CI(\widehat{\mu}_{a=1,l})$ & $CI(\widehat{\mu}_{a=0,l})$ \\
    \midrule
    Overall & $3.38  (2.70  , 4.22)$  &      $38.79 ( 36.27 , 41.49 )$ & $0.006(0.005,0.008)$ & $0.071(0.067,0.076)$\\
    At risk: Yes & $3.43  ( 2.46  , 4.77 )$ &        $40.98 ( 37.16 , 45.19 )$ & $0.006(0.005,0.009)$ & $0.075(0.068,0.082)$ \\
    At risk: No & $3.34  ( 2.46  , 4.51  )$ &        $37.03 ( 33.76  , 40.62 )$ & $0.006(0.005,0.009)$ & $0.068(0.062,0.074)$ \\
    Age 16-64 and at risk & $3.81  ( 2.65  , 5.49 )$  &      $ 44.68 ( 40.07 , 49.81 )$ & $0.007(0.005,0.010)$ & $0.081(0.073,0.090)$ \\
    Age 65 or older and at risk & $2.42  ( 1.09  , 5.38  ) $      & $29.65 ( 23.50 , 37.42 )$ & $0.005(0.002,0.010)$ & $0.055(0.044,0.069)$ \\
    Obese: Yes & $3.52  ( 2.41  , 5.13  )$       & $41.96 ( 37.57 , 46.87 )$ & $0.007(0.005,0.010)$ & $0.077(0.069,0.085)$ \\
    Obese: No & $3.31  ( 2.51  , 4.36  )$       & $37.16 ( 34.14 , 40.44 )$ & $0.006(0.005,0.008)$ & $0.068(0.063,0.074)$ \\
    Age 16-64 and obese & $3.91  ( 2.62  , 5.84  ) $      & $44.87 ( 39.79 , 50.60 )$ & $0.007(0.005,0.011)$ & $0.082(0.073,0.092)$ \\
    Age 65 or older and obese &$ 2.03  ( 0.66 , 6.31 )$         & $30.07 ( 22.45 , 40.27 )$  & $0.004(0.001,0.012)$ & $0.056(0.042,0.074)$ \\
    \end{tabular}%
    }
  \label{tab:subgroup analyses}%
\end{table}%

\section{Challenge estimands that use antibody titers\label{appsec:antibodies}}
In this section, we propose how the challenge effect can be used to quantify the relation between immunological markers and vaccine waning. Let
\begin{align*}
    \psi(l_0,l_1)=\frac{E[\Delta Y_1^{a=1,e_1=1}\mid L_1^{a=1}=l_1,L_0=l_0]}{E[\Delta Y_2^{a=1,e_1=0,e_2=1}\mid L_1^{a=1}=l_1,L_0=l_0]} ~,
\end{align*}
where $L_1^{a=1}$ is an immunological marker, e.g.\ antibody titer, measured at the beginning of interval 1 under an intervention that assigns vaccine $a$ and $L_0$ is a vector of baseline covariates. The estimand corresponds to the answer of the following plain-English question: to what extent does the amount of waning depend on the initial antibody response?  If $\psi(l_0,l_1)$ is small for $l_1=l_-$, but large for  $l_1=l_+$, then we might justify that individuals with $l_1=l_-$  are prioritized to receive booster vaccination before individuals with $l_1=l_+$. 

Let $L_2$ denote a measurement of antibody titer at the beginning of time interval 2. Then,
\begin{align*}
    \phi(l_0,l_1,l_2)=\frac{E[\Delta Y_2^{a=1,e_1=0,e_2=1}\mid L_2^{a=1,e_1=0}=l_2,L_1^{a=1}=l_1,L_0=l_0]}{E[\Delta Y_2^{a=1,e_1=0,e_2=1}\mid L_0=l_0]}
\end{align*}
is the relative risk of outcomes under the  antibody profile $(l_1,l_2)$. Unlike the naive contrast
\begin{align*}
    \phi^\mathrm{obs}(l_0,l_1,l_2)=\frac{E[\Delta Y_2\mid \Delta Y_1=0,L_2=l_2,L_1=l_1,L_0=l_0,A=1]}{E[\Delta Y_2\mid  \Delta Y_1=0,L_0=l_0,A=1]} ~,
\end{align*}
the conditional challenge effect $ \phi(l_0,l_1,l_2)$ is not subject to depletion of susceptible individuals during interval 1.  If $ \phi(l_0,l_1,l_2) >1$ then individuals with $(L_1=l_1,L_2=l_2)$ are less protected during interval 2, and should be prioritized for a booster dose among those who initially received the vaccine. This estimand describes heterogeneity in vaccine protection across antibody responses.  

Identification results for $\psi(l_0,l_1)$ and $ \phi(l_0,l_1,l_2)$ can be derived similarly to Theorem~\ref{thm: waning minimal ass 2 intevals} and Proposition~\ref{prp: homogeneity}, but require some additional assumptions, which are also  single world; they are in-principle testable in future challenge experiments \citep{thomas_s_richardson_single_2013}. These estimands will be studied thoroughly in future research.

Finally, antibody measurements can also be used for sensitivity analyses. For example, one could compare the distribution of antibodies in event-free individuals by time $k$ ($\Delta Y_k=0$), for different values of $k$ as a measure of the depletion of susceptible individuals.

\end{document}